%% file: main.tex
\documentclass[journal,twoside,print]{ieeecolor}

\input{settings/packages.tex}
\input{settings/math.tex}
\def\BibTeX{{\rm B\kern-.05em{\sc i\kern-.025em b}\kern-.08em
    T\kern-.1667em\lower.7ex\hbox{E}\kern-.125emX}}
\begin{document}

\title{A Generalized Plant Perspective on Linear-Convex Feedback Optimization}

\author{Fabian Jakob, Andrea Iannelli
\thanks{Authors are with the University of Stuttgart, Institute for Systems Theory and Automatic Control, Stuttgart, Germany (e-mail: \{fabian.jakob, andrea.iannelli\}@ist.uni-stuttgart.de). 
}
}

\maketitle

\begin{abstract}
Feedback optimization is a control approach for driving a dynamical system to the solution of an optimization problem by interconnecting the plant with an algorithm. Existing stability guarantees typically rely on timescale separation, enforced by conservative gain bounds that limit transient performance and require a pre-stabilized plant. This paper 
revisits the robust control perspective on feedback optimization.
We formulate the plant-optimizer interconnection as a generalized plant, where the cost gradients are characterized by Zames--Falb Integral Quadratic Constraints. Classical timescale-separation bounds are recovered as a special case of static multipliers, with dynamic multipliers yielding substantially tighter stability margins. The formulation also enables IQC based synthesis of dynamic output feedback controllers that jointly stabilize the plant and optimize transient performance, with possible model uncertainty absorbed into an uncertainty channel. For constrained problems, the framework extends to dynamic controllers that generalize projected gradient flows. Numerical examples illustrate the benefits and flexibility of the proposed approach.
\end{abstract}

\begin{IEEEkeywords}
Feedback optimization, integral quadratic constraints, robust control design.
\end{IEEEkeywords}

\input{sections/intro.tex}

\input{sections/problem_setting.tex}

\input{sections/stability.tex}

\input{sections/perf_synth_uncert.tex}
\input{sections/constraints_.tex}

\input{sections/numerics.tex}

\input{sections/conclusion.tex}

\appendix
\input{sections/appendix}

\section*{References}
\bibliographystyle{ieeetr}
\bibliography{settings/references.bib}



\end{document}

%% file: settings/packages.tex
\usepackage{generic}
\usepackage{cite}
\usepackage{amsmath,amssymb,amsfonts}
\usepackage{algorithmic}
\usepackage{graphicx}
\usepackage{algorithm,algorithmic}
\usepackage{hyperref}
\hypersetup{hidelinks=true}
\usepackage{textcomp}

\usepackage{mathtools}
\usepackage{arydshln}

\usepackage{subcaption}
\usepackage{booktabs}

\usepackage{nicefrac}

\usepackage{pgfplots}
\usepackage{tikz}
\usetikzlibrary{arrows.meta, positioning, calc, shapes, decorations.pathreplacing, fit, backgrounds}
\usetikzlibrary{patterns, matrix, shapes.geometric}
\usetikzlibrary{tikzmark, overlay-beamer-styles}
\usetikzlibrary{fadings}
\pgfplotsset{compat=1.18}

\usepackage{xcolor}
 
\newtheorem{theorem}{Theorem}
\newtheorem{lemma}[theorem]{Lemma}
\newtheorem{proposition}[theorem]{Proposition}
\newtheorem{corollary}[theorem]{Corollary}

\newtheorem{assumption}{Assumption}

\newtheorem{remark}{Remark}
 

%% file: settings/math.tex
\newcommand{\Pixu}{\Pi_{xu}}
\newcommand{\Piyu}{\Pi_{yu}}
\newcommand{\Pixw}{\Pi_{xw}}
\newcommand{\Piyw}{\Pi_{yw}}
 
\newcommand{\Ahat}{\hat{A}}
\newcommand{\Bhat}{\hat{B}}
\newcommand{\Chat}{\hat{C}}

\newcommand{\R}{\mathbb{R}}

\newcommand{\Lone}{\mathcal{L}_1}
\newcommand{\Ltwo}{\mathcal{L}_2}

\newcommand{\Hinf}{\mathcal{H}_\infty}

\newcommand{\lmin}{\lambda_{\min}}
\newcommand{\lmax}{\lambda_{\max}}
\newcommand{\eps}{\varepsilon}
\newcommand{\proj}{\mathcal{P}}

\DeclarePairedDelimiter{\norm}{\lVert}{\rVert}

\newcommand{\T}{^\top}
 
\newcommand{\bmat}[1]{\begin{bmatrix}#1\end{bmatrix}}
\newcommand{\smat}[1]{\left[\begin{smallmatrix}#1\end{smallmatrix}\right]}

%% file: sections/intro.tex
\section{Introduction}\label{sec:introduction}

\IEEEPARstart{C}{ontrol} systems are increasingly expected to operate at setpoints that are economically or physically optimal. Computing such setpoints offline and tracking them with a feedback controller is breaks down when the plant model is uncertain or the operating conditions change. Feedback optimization~\cite{colombino2018,lawrence2021,hauswirth2024} offers an alternative by closing the loop directly around the plant and an optimization algorithm, replacing offline setpoint computation with continuous, measurement-based adaptation. The approach has found successful application in power systems~\cite{li_power}, traffic networks~\cite{bianchinPrimalDual}, process control~\cite{zagarowska}, and satellite navigation~\cite{chuySatellite}.

Despite its practical appeal, feedback optimization still presents open challenges in the analysis and design of the plant-optimizer interconnection. Stability guarantees are often conservative, transient performance is rarely addressed, and controller synthesis beyond gain tuning within a pre-fixed structure has received limited attention. Moreover, model uncertainty and constraints are typically handled ad-hoc rather than through a unified design methodology. These challenges are connected, and addressing them together has the potential to exploit synergies in the analysis and design process.

\begin{figure}[t]
    \centering
    \input{figures/IC.tex}
    \caption{Feedback optimization of LTI plants driven by structured dynamic algorithms.}
    \label{fig:algoStructure}
\end{figure}

The dominant approach to stability analysis is timescale separation, where the controller gain is reduced until the plant appears quasi-static~\cite{menta2018,colombino2018,hauswirth2020,bianchinPrimalDual,simpsonporco2022,cothren}. This is broadly applicable, and despite efforts to relax this requirement~\cite{bianchi,bianchinIMPfeedback,yousefi}, it remains the dominant paradigm.
For linear time-invariant (LTI) plants with convex costs, however, the structure of the problem allows for sharper tools, and several works have explored the use of robust control methods for feedback optimization~\cite{colombino2018,nelsonMallada,lawrence2018,lawrence2021,simpsonporco2022}. These range from IQC-based stability analysis with static multipliers~\cite{colombino2018,nelsonMallada} to dynamic output-feedback synthesis via $H_\infty$ methods~\cite{lawrence2018}, the latter enabling joint stabilization and optimization even for unstable plants. 

While these works demonstrate the potential of a robust control approach, significant gaps remain. First, the controller synthesis approach to feedback optimization \cite{nelsonMallada,lawrence2018} has remained somewhat disconnected from the timescale separation literature. We believe part of the reason is that a clear analytical link between the two viewpoints is still missing. Second, the robust control toolbox has not yet been fully leveraged. Controller synthesis for performance was explored only on the timescale separated boundary system~\cite{simpsonporco2022}. Sparse efforts to address model mismatch exist~\cite{colombino2019,heGreybox,chan_robustOFO}, but outside a framework that would jointly accommodate performance and synthesis. Constrained problems were analyzed through specific algorithmic modifications~\cite{colombino2018,bianchinPrimalDual,heemelsCBF}, while a synthesis of dynamic controllers for such problems is missing. Third, the intrinsic properties of the optimization problem have not been fully exploited. Cost gradients are modelled as norm-bounded uncertainties in~\cite{lawrence2018} and as sector-bounded operators in~\cite{nelsonMallada,colombino2018,simpsonporco2022}, both known to be conservative descriptions. Dynamic Zames--Falb IQCs~\cite{zamesFalb,carrasco2014} provide a significantly sharper characterization of slope-restricted nonlinearities, and while this is well understood in the analysis of discrete-time optimization algorithms~\cite{lessard2016,schererEbenbauer,millerJakob}, it has not yet been leveraged in the feedback optimization context.

This paper addresses these gaps. We place the closed-loop of an LTI plant and a gradient-based feedback optimizer in the generalized plant framework, treating first-order oracles as slope-restricted operators characterized by Zames--Falb IQCs. The optimization algorithm arises in a structured way as a cascade of first-order oracle, optimality model, and dynamic controller; see Fig. \ref{fig:algoStructure}. The resulting formulation places stability analysis, performance optimization, controller synthesis, model uncertainty, and constrained optimization within a single, unified framework. Besides flexibility, we remove the important restriction of pre-stabilized plants, and we allow the designed controller to take care of both stabilization and optimization. Specifically, we make the following contributions.
\begin{enumerate}
\item We show that classical gradient-flow controllers can be re-interpreted as P-controllers acting on a generalized plant. Timescale separation bounds are recovered as a sufficient condition for a conservative IQC stability test based on static multipliers. 
\item By augmenting the generalized plant with suitable performance channels, we formulate a dynamic multiplier based performance test that allows both for principled tuning of classical algorithms and systematic synthesis of new algorithms. The latter produces dynamic output-feedback controllers that jointly stabilize and optimize the plant with guaranteed transient performance, while seamlessly accommodating model uncertainty.
\item In particular, we provide a methodology to systematically synthesize dynamic projected primal-dual gradient-flow controllers that can handle constrained optimization problems. Existing projected gradient flows are re-interpreted as \mbox{P-controllers}, and two novel architectures are derived that generalize them to dynamic controllers.
\end{enumerate}
\textbf{Organization.} The remainder of this work is organized as follows. Section~\ref{sec:setup} introduces the optimization problem, and the generalized-plant representation underlying all subsequent developments. Section~\ref{sec:stability} develops IQC-based stability conditions and connects them to classical bounds. Performance, synthesis, and model uncertainty are addressed in Section~\ref{sec:performance}, followed by constrained problems in Section~\ref{sec:constraints}. Numerical examples are presented in Section~\ref{sec:numerics}, and Section~\ref{sec:conclusion} concludes the paper.

\textbf{Notation.}
The $n\times n$ identity matrix and a zero matrix of compatible size are denoted as $I_n$ and $0$, respectively. For a symmetric matrix $M$, $M\succ 0$ ($M\succeq 0$) means positive (semi)definite, $\lmin(M)$, $\lmax(M)$ denote its smallest and largest eigenvalue, and $\bar{\sigma}(M)$ denotes its largest singular value. The Euclidean norm of a vector $v$ is $\norm{v}$, the vertical concatenation of two vectors (or signals) $v_1,v_2$ is $\mathrm{col}(v_1, v_2)$, and the block diagonal matrix of two matrices $M_1, M_2$ is $\mathrm{diag}(M_1,M_2)$.

We write $\mathcal{L}_2$ for the space of square integrable signals and $\| \cdot \|_{\mathcal{L}_2}$ for the respective $\mathcal{L}_2$-norm. For a real rational and stable transfer matrix $H$ with impulse response $h$, the $\Hinf$-norm is $\norm{H}_{\infty} = \sup_{\omega} \bar{\sigma}(H(j\omega))$ and the $\Lone$-norm is \mbox{$\norm{H}_{1} = \int_{-\infty}^{\infty} |h(t)|\mathrm{d}t$}.
If $G(s)\!=\!D + C(sI-A)^{-1}B$ is the transfer matrix of an LTI system, we write $y = G[\,u\,]$ equivalently for \mbox{$\dot{x} = A x + B u$}, \mbox{$y = C x + D u$}. We also introduce the notation 
\begin{equation*}
    G(s) \doteq \left[ \begin{array}{c|c} A & B \\ \hline C & D \end{array} \right]
\end{equation*}
if $(A, B, C, D)$ is a realization of $G$. For two LTI systems $G_1$ mapping some signals $\mathrm{col}(p,u) \mapsto \mathrm{col}(q,y)$ and $G_2$ mapping $y \mapsto u$, the star product \mbox{$G_1 \star G_2$} denotes the lower linear fractional transformation (LFT), mapping $p \mapsto q$~\cite{doyle}.

For a convex function $f$, $\nabla f$ denotes its gradient and $\partial f$ its subdifferential. For a closed convex set $\mathcal{U}$, the Euclidean projection is $\proj_{\mathcal{U}}(u) = \arg \min_{z\in\mathcal{U}} \tfrac{1}{2} \|z-u\|^2$ and the indicator function is
\begin{equation*}
    \mathcal{I}_{\mathcal{U}}(u) = \begin{cases}
        0 & ,\,u \in \mathcal{U}, \\
        \infty & ,\, u \notin \mathcal{U}.
    \end{cases}
\end{equation*}
The subdifferential of the indicator function is the normal cone \mbox{$\partial \mathcal{I}_{\mathcal{U}}(u) = \mathcal{N}_{\mathcal{U}}(u) \coloneq \{ s \in \mathbb{R}^{n_u} \; | \; s^\top ( u - y ) \leq 0, \; \forall y \in \mathcal{U} \}$}. For vectors $a,b$ and scalar $\alpha >0$, we write $a \in b + \alpha \mathcal{N}_{\mathcal{U}}(u)$ equivalently for \mbox{$\tfrac{1}{\alpha}(a - b) \in \mathcal{N}_{\mathcal{U}}(u)$}.

%% file: figures/IC.tex
\definecolor{istblue}{RGB}{0,65,145}
\definecolor{ozf1}{RGB}{66,146,198}
\begin{tikzpicture}[
    block/.style = {
    draw,
    minimum width=0.8cm,
    minimum height=0.5cm,
    align=center
    },
    ->,
    >=Latex,
    node distance=1.5em,
    split/.style={
        draw,
        fill,
        circle,
        minimum size=0.25em,
        inner sep=0pt
    },
    ]


\node[block] (plant)
{   \small
    $
    \begin{aligned}
        \dot{x} &= Ax + Bu + B_w w \\
        y &= Cx + Du + D_w w
    \end{aligned}
    $
};


\node[
    below=12mm of plant
] (GradCenter) {};

\node[
    block,
    anchor=center
] (GradA)
at ([xshift=-6mm,yshift=3mm]GradCenter.center)
{
    \raisebox{0.4em}{\hspace{-1pt}\small $\mathcal{N}_\mathcal{U}$}
};

\node[
    block,
    fill=white,
    anchor=center
] (GradB)
at ([xshift=0mm,yshift=0mm]GradCenter.center)
{
    \raisebox{0.4em}{\hspace{-1pt}\small $\nabla \Phi_2$}
};

\node[
    block,
    fill=white,
    anchor=center
] (GradC)
at ([xshift=6mm,yshift=-3mm]GradCenter.center)
{
    \small $\nabla \Phi_1$
};

\node[
    draw,
    dotted,
    thick,
    fit=(GradA)(GradB)(GradC),
    inner sep=1mm
] (GradBox) {};

\node[
    block,
    below=6mm of GradBox
] (observer)
{   \scriptsize $\begin{aligned}
        \dot{\eta} &= p_1 \!+\! \Pi_{yu}^\top p_2 \!+\! N_1^T \! \lambda \!+\! p_3 \\
        \dot{\lambda} &= Eu + Fy
    \end{aligned}
    $
};

\node[
    block,
    minimum width=1.5cm,
    below=6mm of observer
] (controller)
{   
    \small
    $K(s)$
};

\node[above left=1pt and 1em of plant.west] {$u$};
\node[above right=1pt and 1em of plant.east] {$y$};


\path coordinate (LeftBus)  at ([xshift=-1cm]plant.west);
\path coordinate (RightBus) at ([xshift=1cm]plant.east);

\path coordinate (LeftGrad)  at (LeftBus  |- GradBox.center);
\path coordinate (RightGrad) at (RightBus |- GradBox.center);

\path coordinate (LeftObs)   at (LeftBus  |- observer.center);
\path coordinate (RightObs)  at (RightBus |- observer.center);


\draw[<-]
    (plant.west)
    -- ([xshift=-1cm]plant.west)
    |- 
    (controller.west);

\draw[->]
    (plant.east)
    -- ([xshift=1cm]plant.east)
    |-
    (controller.east);

\draw[->]
    ([xshift=1cm]plant.east)
    -- ++(1cm,0);

\node[split] at ([xshift=1cm]plant.east) {};


\draw[->]
    ([yshift=0.3cm]plant.north)
    node[above] {$w$}
    -- (plant.north);


\draw[->]
    (LeftGrad)
    -- (GradBox.west);

\draw[->]
    (RightGrad)
    -- (GradBox.east);

\node[split] at (LeftGrad) {};
\node[split] at (RightGrad) {};


\draw[->]
    (LeftObs)
    -- (observer.west);

\draw[->]
    (RightObs)
    -- (observer.east);

\node[split] at (LeftObs) {};
\node[split] at (RightObs) {};

\draw[->]
    (GradBox.south)
    -- node[left] {$p$}
    (observer.north);


\draw[->]
    (observer.south)
    -- node[left] {$\eta$}
    (controller.north);

\node[
    draw,
    ozf1,
    thick,
    dashed,
    fit=(GradBox)(observer)(controller),
    inner xsep=4mm,
    inner ysep=3.5mm,
    label={[anchor=north east]south east:{\footnotesize \textcolor{ozf1}{Dynamic Optimization Algorithm}}}
] {};

\node[
    anchor=north east,
    font=\scriptsize,
    xshift=1.5mm,
    yshift=0.5mm
] at (observer.south east)
{optimality model};

\node[
    anchor=north east,
    font=\scriptsize,
    xshift=7mm,
    yshift=0.5mm
] at (GradBox.south east)
{first-order oracle};

\node[
    anchor=north east,
    font=\scriptsize,
    xshift=11.5mm,
    yshift=0.8mm
] at (controller.south east)
{controller action};


\coordinate (landscape) at ([xshift=2.3cm, yshift=-0.6cm]plant.east);

\begin{scope}[shift=(landscape), scale=0.45]
    \foreach \r/\op in {1.0/0.1, 0.72/0.15, 0.48/0.2, 0.28/0.3} {
        \draw[istblue, thick, opacity=\op, rotate=25]
            (0,0) ellipse ({\r*2.2} and {\r*1.4});
    }

    \fill[istblue] (0,0) circle (3pt);
    \node[below right, istblue, align=center] at (0.05,-0.05) {$u^\star\!,y^\star$};

    \draw[-{Latex[length=0.8mm, width=0.8mm]}, black!80, rounded corners=2pt, thick]
    plot[smooth, tension=1.3] coordinates {
        (-1.6, 0.3)
        (-1.0, 0.5)
        (0.6, 0.6)
        (0.7, -0.15)
        (0.05, -0.3)
        (0, 0)
    };
\end{scope}

\node[
    anchor=north east,
    font=\scriptsize,
    xshift=-4cm,
    yshift=0.5mm
] at (controller.south east)
{};

\end{tikzpicture}

%% file: sections/problem_setting.tex
\section{Problem Formulation and Motivation}\label{sec:setup}

\subsection{The Plant and Optimal Steady-State Problem}

We consider the LTI plant
\begin{align}\label{eq:plant}
\begin{aligned}
    \dot{x} &= A x + B u + B_w w, \\
    y &= C x + D u + D_w w,
\end{aligned}
\end{align}
with state $x(t)\in\R^{n_x}$, control input $u(t)\in\R^{n_u}$, measured output $y(t)\in\R^{n_y}$, and constant exogenous disturbance \mbox{$w\in\R^{n_w}$}. Throughout the paper we impose the following assumption.
\begin{assumption}\label{assum:sys}
    The matrix $A$ is invertible, and the pairs $(A,B)$ and $(A,C)$ are stabilizable and detectable, respectively.
\end{assumption}
Under the invertibility assumption, \eqref{eq:plant} admits a unique (possibly unstable) equilibrium for any constant \mbox{$u(t)\equiv \bar u$}, given by the solution to \mbox{$0=A \bar x + B \bar u + B_w w$}. Solving for $\bar{x}$ yields the steady-state maps
\begin{subequations}\label{eq:steady-state}
\begin{align}
    \bar x &= \underbrace{(-A^{-1} B)}_{\eqcolon \Pixu} \bar u + \underbrace{(-A^{-1} B_w)}_{\eqcolon \Pixw} w, \\
    \bar y &= \underbrace{(D + C \Pixu)}_{\eqcolon \Piyu} \bar u + \underbrace{(D_w + C \Pixw)}_{\eqcolon \Piyw} w,
\end{align}
\end{subequations}
which, given some $w$, associates every input $\bar u$ with a corresponding equilibrium output $\bar y$. The invertibility assumption is made for notational convenience and can be relaxed, in which case we can define steady-state maps as described in Appendix~\ref{appendix:generalization}.

Given objectives $\Phi_1 \colon \R^{n_u}\to\R$ and $\Phi_2 \colon \R^{n_y}\to\R$, our goal is to design a controller that steers the plant input and output towards the solution of the \emph{optimal steady-state (OSS) problem} \cite{lawrence2018,lawrence2021}
\begin{subequations}\label{eq:oss}
\begin{align}
    \min_{ u, y} &\quad \Phi_1( u) + \Phi_2( y) \\
    \label{eq:oss:b}
    \text{s.t.} &\quad  y = \Piyu  u + \Piyw w.
\end{align}
\end{subequations}
The OSS problem is parametric in $w$ and unconstrained if the variable ${y}$ is eliminated. In Sections \ref{sec:performance} and \ref{sec:constraints} we will consider also more general cases of time-varying disturbances and constrained problems.
We introduce the following regularity assumptions on the OSS problem.
\begin{assumption}\label{assum:cost}
    The OSS problem \eqref{eq:oss} satisfies:
    \begin{enumerate}
        \item[(i)] $\Phi_1$ is $m_u$-strongly convex and $L_u$-smooth, where \mbox{$0<m_u\leq L_u < \infty$.}
        \item[(ii)] $\Phi_2$ is convex and $L_y$-smooth, where $0\leq L_y < \infty$.
        \item[(iii)] The input-output sensitivity $\Pi_{yu}$ has full row rank.
    \end{enumerate}
\end{assumption}
Assumption~\ref{assum:cost} is fairly standard (e.g. \cite{colombino2018,simpsonporco2022,yousefi}) and ensures that~\eqref{eq:oss} admits a unique parametric solution $(u^\star(w), y^\star(w))$. 
If $w$ is known, a natural approach to solve the OSS control problem is to eliminate \eqref{eq:oss:b}, giving
\begin{equation}\label{eq:oss-unconstrained}
    \min_{u} \Phi_1(u) + \Phi_2(\Piyu u + \Piyw w),
\end{equation}
and to run a gradient flow 
\begin{align}\label{eq:reduced-gradient}
    \begin{aligned}
    \dot{u} &= - \nabla \Phi_1(u) - \Piyu\T \nabla \Phi_2(\Piyu u + \Piyw w).
    \end{aligned}
\end{align}
Under Assumption~\ref{assum:cost}, the dynamics~\eqref{eq:reduced-gradient} converge asymptotically to~${u}^\star$ \cite{hauswirth2020}. If $A$ is additionally Hurwitz, then applying the feedforward control $u$ to~\eqref{eq:plant} also drives the output to~${y}^\star$. 
In practice, however, evaluating $\nabla \Phi_2(\Piyu u + \Piyw w)$ exactly is impossible. The idea of feedback optimization is to replace the unknown quantity $\Piyu u + \Piyw w$ with the real-time measurement $y$, relying on a stable and sufficiently fast plant to make this approximation accurate. This motivates the \emph{gradient-flow controller} (GFC) \cite{menta2018,colombino2018}
\begin{equation}\label{eq:gradient-flow}
    \dot u = - \varepsilon \left( \nabla \Phi_1(u) + \Piyu\T \nabla \Phi_2(y) \right),
\end{equation}
where $\varepsilon > 0$ is a gain that enforces timescale separation between the plant~\eqref{eq:plant} and optimizer \eqref{eq:gradient-flow}. The following condition on $\varepsilon$ guaranteeing asymptotic stability of this feedback interconnection was established via singular perturbation analysis.
\begin{proposition}[\!\!\cite{menta2018},\cite{hauswirth2020},\cite{cothren}]\label{prop:singular-perturb}
Suppose $A$ is Hurwitz and let $X, Q \succ 0$ be matrices satisfying \mbox{$A^\top X + X A = - Q$}. Under Assumption~\ref{assum:cost}, the closed loop of \eqref{eq:plant} and \eqref{eq:gradient-flow} is asymptotically stable and converges to the solution of \eqref{eq:oss} if
\begin{equation}\label{eq:eps-bound-menta}
    \varepsilon < \frac{\lambda_{\min}(Q)}{2 \| X \Pi_{xu} \| \ell },
\end{equation}
where $\ell = L_y \| C \| \| \Pi_{yu} \|$.
\end{proposition}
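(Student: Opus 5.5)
We prove Proposition~\ref{prop:singular-perturb} with a composite Lyapunov function built on the timescale-separation structure. Fix $w$ and let $(u^\star,y^\star)$ be the unique solution of \eqref{eq:oss}, with steady state $x^\star = \Pixu u^\star + \Pixw w$. Introduce the error coordinates $\tilde u = u - u^\star$ and $\tilde x = x - \Pixu u - \Pixw w$, i.e.\ the deviation of the plant state from its quasi-steady state for the current input. Since $A\Pixu = -B$, we get
\begin{equation*}
\dot{\tilde x} = A\tilde x - \Pixu \dot u .
\end{equation*}
The output is $y = \Piyu u + \Piyw w + C\tilde x$. The controller therefore reads $\dot u = -\varepsilon\big(F(u) + \Piyu\T(\nabla\Phi_2(\bar y(u)+C\tilde x) - \nabla\Phi_2(\bar y(u)))\big)$, where $F$ is the gradient of the reduced cost \eqref{eq:oss-unconstrained} and $\bar y(u) = \Piyu u + \Piyw w$.

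The first step is to bound the two subsystems. For the slow (reduced) subsystem we take $V_1 = \tfrac12\|\tilde u\|^2$. Strong monotonicity of $F$ (modulus at least $m_u$, by Assumption~\ref{assum:cost}(i),(ii)) together with $L_y$-smoothness of $\Phi_2$ gives
\begin{equation*}
\dot V_1 \le -\varepsilon m_u \|\tilde u\|^2 + \varepsilon \ell \|\tilde u\|\|\tilde x\| .
\end{equation*}
For the fast subsystem we take $V_2 = \tilde x\T X\tilde x$, which yields
\begin{equation*}
\dot V_2 = -\tilde x\T Q \tilde x - 2\tilde x\T X\Pixu \dot u .
\end{equation*}
The last term is bounded using $\|\dot u\| \le \varepsilon(L\|\tilde u\| + \ell\|\tilde x\|)$, where $L = L_u + L_y\|\Piyu\|^2$ is the Lipschitz constant of $F$. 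Here we use that $F(u^\star)=0$, which is optimality.

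The second step combines the two bounds in $V = dV_1 + V_2$ with a free weight $d>0$. This produces a quadratic form in $(\|\tilde u\|,\|\tilde x\|)$ with entries
\begin{equation*}
-\varepsilon d m_u, \qquad -(\lmin(Q) - 2\varepsilon\|X\Pixu\|\ell), \qquad \text{and cross term } \varepsilon(d\ell + 2\|X\Pixu\| L).
\end{equation*}
We must show it is negative definite. The diagonal $\tilde x$-entry is negative exactly when $\varepsilon < \lmin(Q)/(2\|X\Pixu\|\ell)$, which is \eqref{eq:eps-bound-menta}. The main obstacle is the cross term, which a priori seems to impose a further restriction on $\varepsilon$. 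We plan to handle it through the choice of $d$ as a function of $\varepsilon$. The determinant condition reads
\begin{equation*}
4\varepsilon d m_u\,(\lmin(Q) - 2\varepsilon\|X\Pixu\|\ell) > \varepsilon^2 (d\ell + 2\|X\Pixu\|L)^2 .
\end{equation*}
Both sides scale as $\varepsilon^2 d^2$ for large $d$, so this is not automatic. If it fails, we fall back on the sharper cross-term bound obtained by expanding $\tilde x\T X\Pixu\Piyu\T(\cdot)$ directly. We may also exploit the fact that the cited references \cite{hauswirth2020,cothren} prove exactly this bound. This route uses the singular perturbation theorem of Khalil in the form with interconnection constants, where only the fast-loop coupling $\|X\Pixu\|\ell$ enters the gain bound and the slow-loop constants affect only the convex weighting $d\in(0,1)$.

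Finally, negative definiteness gives exponential convergence of $(\tilde u,\tilde x)\to 0$. This means $u\to u^\star$, $x\to x^\star$, and $y\to \Piyu u^\star + \Piyw w = y^\star$. We also note that the equilibrium of the closed loop coincides with the KKT point of \eqref{eq:oss}, because $\dot u = 0$ at steady state forces $F(u)=0$.
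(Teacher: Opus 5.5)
Your composite-Lyapunov argument does not reach the stated bound. The step you flag as ``not automatic'' is exactly where it breaks. Because $d/(d\ell+2\|X\Pixu\|L)^2$ is maximized at $d=2\|X\Pixu\|L/\ell$, your determinant condition holds for some $d>0$ if and only if
\[
\varepsilon<\frac{\lmin(Q)}{2\|X\Pixu\|\ell}\cdot\frac{m_u}{m_u+L},
\]
and $m_u/(m_u+L)\le\tfrac12$ because $L\ge L_u\ge m_u$. Put differently, as $\varepsilon$ approaches \eqref{eq:eps-bound-menta}, the $\tilde x$-diagonal entry $\lmin(Q)-2\varepsilon\|X\Pixu\|\ell$ tends to zero. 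The cross term $\varepsilon(d\ell+2\|X\Pixu\|L)$, however, stays bounded away from zero for every $d$, so no weighting rescues it.

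The fallbacks are not arguments. The ``sharper cross-term bound'' is never specified, and citing the references is not a proof. Your description of Khalil's theorem is also inaccurate. After optimizing the weight, its admissible range is $\alpha_1\alpha_2/(\alpha_1\gamma+\beta_1\beta_2)$, so slow-loop constants do enter. With your constants ($\alpha_1=m_u$, $\beta_1=\ell$, $\alpha_2=\lmin(Q)$, $\gamma=2\|X\Pixu\|\ell$, $\beta_2=2\|X\Pixu\|L$) it returns precisely the reduced bound above. The loss comes from measuring slow progress through $\tilde u$ and then estimating $\|\dot u\|\le\varepsilon(L\|\tilde u\|+\ell\|\tilde x\|)$. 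That injects $L$ and $m_u$ into a bound that, as stated, contains neither.

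The missing idea is to keep $\dot u$ itself, not $\tilde u$, as the slow variable in the dissipation inequality, so no Lipschitz bound on $\dot u$ is ever needed. The paper recovers the bound this way for $\Phi_1=0$ in Corollary~\ref{corr:static-LMI} (Appendix~\ref{appendix:c}); your $\tilde x$ is its $e$.
\begin{itemize}
\item It takes $V=\tilde x^\top X\tilde x+\tfrac{\lambda L_y}{\varepsilon}\|\tilde u\|^2$ together with the static sector (cocoercivity) inequality $\tilde p_2^\top(L_y\tilde q_2-\tilde p_2)\ge0$.
\item The $\tilde u$--$\tilde p_2$ terms then cancel exactly, leaving the form \eqref{eq:LMI-reduced} in $(\tilde x,\tilde p_2)$.
\item A Schur complement, Cauchy--Schwarz, and the choice $\lambda=\varepsilon\|X\Pixu\|\|\Piyu\|/(L_y\|C\|)$ give exactly $2\varepsilon\|X\Pixu\|\ell\le\lmin(Q)$.
\item Proposition~\ref{prop:zf-LMI} then yields boundedness and, strictly below the threshold, convergence.
\end{itemize}

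Within your own framework, and with $\Phi_1\neq0$ as in the statement, you get the same effect as follows. Take $V_1=\Phi_{\mathrm r}(u)-\Phi_{\mathrm r}(u^\star)$, where $\Phi_{\mathrm r}$ is the reduced cost of \eqref{eq:oss-unconstrained} with $\nabla\Phi_{\mathrm r}=F$, and set $g:=-\dot u/\varepsilon$. Since $g-F(u)=\Piyu^\top(\nabla\Phi_2(y)-\nabla\Phi_2(\bar y(u)))$ has norm at most $\ell\|\tilde x\|$ (the $\Phi_1$ terms cancel), you obtain
\[
\dot V_1\le-\varepsilon\|g\|^2+\varepsilon\ell\|g\|\|\tilde x\|,\qquad
\dot V_2\le-\lmin(Q)\|\tilde x\|^2+2\varepsilon\|X\Pixu\|\|\tilde x\|\|g\|.
\]
With $d=2\|X\Pixu\|/\ell$, the determinant condition $4d\,\lmin(Q)>\varepsilon(d\ell+2\|X\Pixu\|)^2$ is exactly \eqref{eq:eps-bound-menta}. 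Finally, $m_u\|\tilde u\|\le\|g\|+\ell\|\tilde x\|$ makes the decrease negative definite in $(\tilde u,\tilde x)$. Since $V=dV_1+V_2$ is quadratically bounded above and below by strong convexity and smoothness, this gives the claimed asymptotic (indeed exponential) stability.
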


The appeal of the GFC lies in its simplicity and interpretability; the dynamics of~\eqref{eq:gradient-flow} simply follow the first-order optimality residual.
The only plant knowledge required is the input-output steady-state map $\Pi_{yu}$ (i.e., the DC gain), and the bound~\eqref{eq:eps-bound-menta} gives an explicit recipe to tune $\varepsilon$ as a function of known quantities. In practice, however, this bound is often highly conservative, forcing the optimizer to run far slower than necessary, degrading both transient performance and disturbance tracking \cite{hauswirth2020,colombino2018}. This motivates a more careful analysis of the closed loop and controller structure.

\subsection{Integration into the Robust Control Framework}\label{sec:framework}

The conservatism of the bound~\eqref{eq:eps-bound-menta} can be traced to the singular perturbation analysis. A tighter analysis becomes possible once the closed loop is rewritten in a form amenable to robust control tools. To this end, instead of just introducing a proportional action on the first-order residual like in the GFC, we define an optimality model \cite{lawrence2021}
\begin{equation}\label{eq:controller-eta}
    \dot \eta = \nabla \Phi_1(u)+ \Pi_{yu}^\top \nabla \Phi_2(y),
\end{equation}
and note that \eqref{eq:gradient-flow} can be expressed as \mbox{$u = - \varepsilon \eta$}.
Now define the auxiliary inputs and outputs \mbox{$(q_1, q_2) \coloneq (u,y)$} and write \eqref{eq:plant} and \eqref{eq:gradient-flow} as the following generalized plant
\begin{subequations}\label{eq:genplant}
\begin{align}\label{eq:genplant:P}
    \setlength{\arraycolsep}{3pt}
    \left[\!
    \begin{array}{c}
        \dot x \\ \dot \eta \\ \hline q_1 \\ q_2 \\ \hdashline y_c
    \end{array}
    \!\right]
    &\!=\!
    \left[\!
    \setlength{\arraycolsep}{4pt}
    \begin{array}{cc|cc:c:c}
        A & 0 & 0 & 0 & B_w & B \\
        0 & 0 & I_{n_u} & \Piyu^\top & 0 & 0 \\ \hline
        0 & 0 & 0 & 0 & 0 & I_{n_u} \\
        C & 0 & 0 & 0 & D_w & D \\ \hdashline
        0 & I_{n_u} & 0 & 0 & 0 & 0
    \end{array}
    \!\right]
    \!\!
    \setlength{\arraycolsep}{2pt}
    \left[\!
    \begin{array}{c}
        x \\ \eta \\ \hline p_1 \\ p_2 \\ \hdashline w \\ \hdashline u
    \end{array}
    \!\right]\!\!, \\
    \label{eq:genplant:oracle}
    p_1 &= \nabla \Phi_1(q_1), \quad p_2 = \nabla \Phi_2(q_2), \\
    \label{eq:genplant:K}
    u &= -\varepsilon \, y_c.
\end{align}
\end{subequations}
\begin{figure}
	\centering
	\input{figures/PKD.tex}
	\caption{Feedback-optimization in $P$-$K$-$\Delta$ structure, with an augmented IQC filter $\Psi$ for analysis and synthesis.}
	\label{fig:pkdelta}
\end{figure}
Eq.~\eqref{eq:genplant} emphasizes that in the LTI case, feedback optimization with the GFC is just a P-controller acting on a generalized plant in $P$-$K$-$\Delta$ structure, with~\eqref{eq:genplant:P} as plant $P$, and $\nabla \Phi_1, \nabla \Phi_2$ as uncertainty $\Delta$. We refer to the $(q,p)$ channel in the following as the \emph{oracle channel}.
The formulation~\eqref{eq:genplant} naturally suggests to consider the robust control tools. By Assumption~\ref{assum:cost}, both $\nabla \Phi_1$ and $\nabla \Phi_2$ are slope-restricted operators, so the closed loop is a Lur'e system~\cite{lessard2016}. The analysis of such systems through Integral Quadratic Constraints is well established \cite{carrasco2014,veenman},
and carries over to this specific setting with minimal effort. Moreover, the $P$-$K$-$\Delta$ structure (see Fig.~\ref{fig:pkdelta}) suggests three directions that go beyond reducing the conservatism of the stability condition:
\begin{enumerate}
    \item Including a performance output $z$ in \eqref{eq:genplant:P} allows to consider transient performance on the channel $(w,z)$ in addition to only stabilization.
    \item Synthesizing a dynamic controller $K(s)$ in \eqref{eq:genplant:K} instead of only a static gain can provide additional degrees of freedom and performance improvements.
    \item The availability of a $\Delta$ channel provides a principled way to characterize both model uncertainty and other monotone operators, enabling feedback optimization with gray-box models and proximal algorithms \cite{bauschke}.
\end{enumerate}
We develop these three directions in the remainder.

%% file: figures/PKD.tex
\begin{tikzpicture}[
  block/.style = {
    draw,
    minimum width=0.9cm,
    minimum height=0.6cm,
    align=center
  },
  ->,
  >=Latex,
  node distance=1.5em,
  split/.style={
    draw,
    fill,
    circle,
    minimum size=0.25em,
    inner sep=0pt
  },
]


\node[block] (GradA)
{\raisebox{5pt}{\hspace{-2pt}\scriptsize $\nabla \Phi_2$}};

\node[
    block,
    fill=white,
    xshift=4.5mm,
    yshift=-4mm
] (GradB)
at (GradA.center)
{\scriptsize $\nabla \Phi_1$};

\node[
    draw,
    dashed,
    fit=(GradA)(GradB),
    inner sep=1mm,
    color=black
] (GradBox) {};

\node[
    anchor=south west,
    font=\footnotesize,
    xshift=-0.5mm,
    yshift=-0.5mm,
    color=black
] at (GradBox.south east)
{$\Delta$};


\node[
    block,
    below=2mm of GradBox,
    minimum height=1.2cm,
    minimum width=1.8cm
] (Algo)
{ 
  \footnotesize
  $\begin{aligned}
  \dot{x} &= A x \!+\! B u \!+\! B_w w \\
  \dot{\eta} &= p_1\! +\! \Pi_{yu}^\top p _2
  \end{aligned}$
};

\node[
    block,
    below=2mm of Algo
] (K)
{ $K$};


\node[split]
    (splitX)
    at ($(GradBox.west) + (-1em,0)$) {};

\node[split]
    (splitU)
    at ($(GradBox.east) + (1em,0)$) {};


\node[
    block
] (Psi)
at ($(GradBox.north)+(6em,0em)$)
{$\Psi$};

\draw[->]
    (splitX)
    |- ($(Psi.west)+(0em,0.5em)$);

\draw[->]
    (splitU)
    |- ($(Psi.west)+(0em,-0.5em)$);

\draw[->]
    (Psi.east)
    -- ++(1.5em,0)
    node[right] { $\psi$};


\coordinate (AlgoWestUp)
    at ([yshift=0.4cm]Algo.west);

\coordinate (AlgoEastUp)
    at ([yshift=0.4cm]Algo.east);

\coordinate (AlgoWestDown)
    at ([yshift=-0.4cm]Algo.west);

\coordinate (AlgoEastDown)
    at ([yshift=-0.4cm]Algo.east);

\draw[<-]
    (AlgoWestUp)
    -- ++(-0.5,0)
    |- node[pos=0.25, left] { $p$}
    (GradBox.west);

\draw[->]
    (AlgoEastUp)
    -- ++(0.5,0)
    |- node[pos=0.25, right] { $q$}
    (GradBox.east);

\draw[<-]
    (AlgoWestDown)
    -- ++(-0.5,0)
    |- node[pos=0.25, left] { $u$}
    (K.west);

\draw[->]
    (AlgoEastDown)
    -- ++(0.5,0)
    |- node[pos=0.25, right] { $y_c$}
    (K.east);


\draw[->]
    ([xshift=-1.5cm]Algo.west)
    -- node[pos=0,left,xshift=-2pt]
    { $w$}
    (Algo.west);

\draw[->]
    (Algo.east)
    -- node[pos=1,right,xshift=2pt]
    { $z$}
    ++(1.5cm,0);

\end{tikzpicture}

%% file: sections/stability.tex
\section{Stability Analysis with IQCs}\label{sec:stability}

In this section, we briefly recall the Zames--Falb characterization of slope-restricted operators and formulate a general stability test for feedback optimization controllers. We then specialize to the GFC and illustrate how to deal with non-strongly convex problems and establish a connection to classical timescale separation bounds.

\subsection{IQCs for Slope Restricted Operators}\label{sec:iqc_slope}

We first revisit characterizations of first-order oracles with IQCs.
An operator $\phi: \mathbb{R}^n \to \mathbb{R}^n$ is said to be slope-restricted in the sector $[m,L]$ if
\begin{multline}\label{eq:slope-restriction}
m \| x - y \|_2^2 \leq (x - y)^\top (\phi(x) - \phi(y)) \leq L \| x - y \|_2^2 \\ \forall x, y \in \mathbb{R}^n.
\end{multline}
When $\phi$ is set-valued, condition~\eqref{eq:slope-restriction} is required to hold for every element in the sets $\phi(x)$ and $\phi(y)$.
By Assumption~\ref{assum:cost}, the gradients $\nabla\Phi_1$ and $\nabla\Phi_2$ are slope-restricted in the sectors $[m_u, L_u]$ and $[0, L_y]$, respectively, so that this property applies directly to the oracle channel~\eqref{eq:genplant:oracle} \cite{lessard2016}. The following classical result provides an IQC characterization that we will use throughout.

\begin{lemma}[Zames--Falb IQC]\label{lem:zames-falb}
Let $\phi: \mathbb{R}^n \to \mathbb{R}^n$ be slope-restricted in $[0, L]$. Let $H$ be a real rational transfer function with $\norm{H}_{1} \leq 1$. Then, for any $q \in \mathcal{L}_2$ and $p(t) = \phi(q(t)) \, \forall t$, it holds 
\begin{subequations}\label{eq:zf-iqc}
\begin{equation}\label{eq:zf-iqc:ineq}
    \int_0^T \psi(t)^\top J_n\,  \psi(t)\, \mathrm{d}t \geq 0, \quad 
\end{equation}
for all $T \geq 0$, where $\psi = \Psi \left[ \,\mathrm{col}(q, p) \, \right]$ and
\begin{equation}\label{eq:zf-iqc:MPsi}
    \setlength{\arraycolsep}{4pt}
    J_n \coloneq \bmat{0 & 1 \\ 1 & 0} \!\otimes\! I_n, \quad
    \Psi(s) \! \coloneq 
    \setlength{\arraycolsep}{4pt}
    \bmat{(1 \!-\! H(s)) L & H(s) \!-\! 1 \\ 0 & 1} \!\otimes\! I_n.
\end{equation}
\end{subequations}
\end{lemma}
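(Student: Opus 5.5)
The plan is to expand the quadratic form, reduce the statement to a pointwise inequality between values of the oracle at two different times, and then integrate that inequality against the impulse response $h$ of $H$. Throughout I take $H$ causal and stable, realized with zero initial state, so that $h(t)=0$ for $t<0$. I also take $h(t)\ge 0$. This is the standard Zames--Falb requirement for non-odd multivariable nonlinearities; the bound $\norm{H}_1\le 1$ alone is only sufficient for odd $\phi$. I would make this hypothesis explicit, or treat the odd case separately by replacing $h$ with $|h|$ in the summation step below. I also assume $\phi(0)=0$. This is without loss of generality, because $p=\phi(q)\in\mathcal{L}_2$ with $q\in\mathcal{L}_2$ essentially forces it, and otherwise one shifts coordinates.

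\textbf{Step 1 (algebra).} Set $v\coloneq Lq-p$. From \eqref{eq:zf-iqc:MPsi} we get $\psi=\mathrm{col}\big((1-H)[v],\,p\big)$, so
\begin{equation*}
\tfrac12\int_0^T \psi^\top J_n\psi\,\mathrm{d}t=\int_0^T p(t)^\top v(t)\,\mathrm{d}t-\int_0^T\!\!\int_0^t h(t-\tau)\,p(t)^\top v(\tau)\,\mathrm{d}\tau\,\mathrm{d}t .
\end{equation*}
It therefore suffices to bound the cross term $p(t)^\top v(\tau)$ by diagonal quantities.

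\textbf{Step 2 (pointwise inequality, the key step).} Since $\phi$ is slope restricted in $[0,L]$, write $\phi=\nabla f$ with $f$ convex and $L$-smooth. This holds for gradients, which is exactly the situation of Assumption~\ref{assum:cost}. Define $g(x)\coloneq \tfrac L2\|x\|^2-f(x)$, which is convex with $\nabla g(x)=Lx-\phi(x)$. For any two points $a,b$, I would aim to show
\begin{equation*}
\phi(a)^\top\big(Lb-\phi(b)\big)\le \phi(a)^\top\big(La-\phi(a)\big)+G(a)-G(b),
\end{equation*}
where $G\ge0$ is a suitable potential with $G(0)=0$. A natural first attempt is a Fenchel--Young argument combining $f$, its conjugate and $g$. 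The alternative is the classical route: bound $p(t)^\top v(\tau)\le \tfrac12\big(p(t)^\top v(t)+p(\tau)^\top v(\tau)\big)$ via the co-coercivity inequality $\big(\phi(a)-\phi(b)\big)^\top\big(L(a-b)-(\phi(a)-\phi(b))\big)\ge0$ together with $\phi(0)=0$. I expect finding the correct form of this inequality for general, non-odd, vector-valued $\phi$ to be the main obstacle. It is where $h\ge0$ is genuinely needed, and it is the point at which I would consult the standard proof of the Zames--Falb theorem in \cite{carrasco2014}.

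\textbf{Step 3 (integration and truncation).} Multiply the Step~2 inequality with $a=q(t)$, $b=q(\tau)$ by $h(t-\tau)\ge0$ and integrate over $0\le\tau\le t\le T$. Using $\int_0^{t} h(t-\tau)\,\mathrm{d}\tau\le\norm{H}_1\le1$, the diagonal terms are dominated by $\int_0^T p^\top v\,\mathrm{d}t$; this uses $p^\top v\ge0$ pointwise, which again follows from slope restriction with $\phi(0)=0$. If the potential route is taken, the potential terms telescope in a nonnegative way; if the symmetric split is taken, the two halves combine directly. The upper limit $T$ enters only through causality of $h$, since $\psi$ restricted to $[0,T]$ depends only on $q,p$ restricted to $[0,T]$. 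This yields \eqref{eq:zf-iqc:ineq} for every $T\ge0$, i.e.\ a hard IQC. Finally, the $\otimes I_n$ structure simply means the scalar filter acts componentwise, so no additional multivariable argument is needed beyond Step~2.
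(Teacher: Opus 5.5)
\textbf{The key step is missing.} The paper does not prove this lemma. It states it as a classical result and defers to the Zames--Falb literature, so your argument has to stand on its own.

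\textbf{What is right.} Your reduction in Step~1 to $\int_0^T p^\top(1-H)[v]\,\mathrm{d}t\ge 0$ with $v=Lq-p$ is correct. The hypotheses you add are genuinely necessary, not cosmetic:
\begin{itemize}
\item Without $h\ge 0$ the claim fails. For the ReLU with $L=1$ and $H(s)=-\omega/(s+\omega)$, one has $p^\top v\equiv 0$ and $p\ge 0\ge v$, so the integral is negative whenever $q<0$ precedes $q>0$.
\item Without $\phi(0)=0$ it fails already for $H=0$, e.g.\ for constant $\phi$.
\item Without gradient structure it also fails for $H=0$. Take $\phi(x)=Sx$ with $S=-S^\top$ orthogonal, e.g.\ a planar rotation by $\pi/2$. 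This satisfies the paper's slope restriction but gives $p^\top(Lq-p)=-\|q\|^2$.
\end{itemize}

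\textbf{The gap.} Step~2 carries the whole content of the lemma, and you leave it open. Worse, both candidate inequalities you write down are false.
\begin{itemize}
\item In the potential form with $G\ge 0$ and $G(0)=0$, setting $a=0$ gives $0\le -G(b)$, so $G\equiv 0$. The remaining claim $\phi(a)^\top v(b)\le \phi(a)^\top v(a)$ fails already for $\phi(x)=x/2$, $L=1$, $b=2a$.
\item The symmetric split $p(t)^\top v(\tau)\le \tfrac12\bigl(p(t)^\top v(t)+p(\tau)^\top v(\tau)\bigr)$ fails for the saturation: $L=1$, $q(t)=1$, $q(\tau)=3$ gives $2\le 1$. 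Co-coercivity only bounds the symmetrized sum $p(t)^\top v(\tau)+p(\tau)^\top v(t)$, and a causal $h$ never weights both members of such a pair.
\item Telescoping of potential terms in Step~3 is an infinite-horizon device. On $[0,T]$ with causal $h$, the two marginals $\int_0^t h$ and $\int_0^{T-\tau} h$ differ, so nothing cancels.
\end{itemize}

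\textbf{The missing ingredient} is the one-sided (unsymmetrized) inequality for convex $L$-smooth $f$, normalized so that $f(0)=0$ and $\nabla f(0)=0$:
\begin{equation*}
f(y)\ \ge\ f(x)+\nabla f(x)^\top (y-x)+\tfrac{1}{2L}\|\nabla f(y)-\nabla f(x)\|^2 .
\end{equation*}
Set $x=q(t)$, $y=q(\tau)$, $p_t=\nabla f(q(t))$ and $v_\tau=Lq(\tau)-p_\tau$. The inequality rearranges to $p_t^\top v_\tau\le A_t+G_\tau$, where
\begin{equation*}
A_t\coloneqq L\,q(t)^\top p_t-Lf(q(t))-\tfrac12\|p_t\|^2, \qquad G_\tau\coloneqq Lf(q(\tau))-\tfrac12\|p_\tau\|^2 .
\end{equation*}
\begin{itemize}
\item Both terms are nonnegative: apply the same inequality with $y=0$ for $A$, and with $x=0$ for $G$.
\item They satisfy $A_t+G_t=p_t^\top v_t$.
\item Equivalently, this is Fenchel--Young for the convex function whose subdifferential is the monotone relation $v\mapsto p$.
\end{itemize}
So the correct potential is $G$, entering with the opposite sign to yours: $p_t^\top v_\tau\le p_t^\top v_t-G_t+G_\tau$.

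Now multiply by $h(t-\tau)\ge 0$ and integrate over $0\le\tau\le t\le T$. Bound the two marginals separately: use $\int_0^t h\le 1$ against $A_t\ge 0$, and $\int_0^{T-\tau}h\le 1$ against $G_\tau\ge 0$. This yields
\begin{equation*}
\int_0^T\!\int_0^t h(t-\tau)\,p_t^\top v_\tau\,\mathrm{d}\tau\,\mathrm{d}t\ \le\ \int_0^T (A_t+G_t)\,\mathrm{d}t\ =\ \int_0^T p^\top v\,\mathrm{d}t ,
\end{equation*}
which is exactly the hard IQC. The finite horizon is handled by the nonnegativity of both pieces, not by telescoping, and not merely by $p^\top v\ge 0$.
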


Strictly speaking, Lemma~\ref{lem:zames-falb} is a time-domain condition resulting from a factorization of the more general frequency-domain definition \cite{carrasco2014}. The transfer matrix $\Psi^* J_n \Psi$ is the so-called IQC \emph{multiplier}. Lemma~\ref{lem:zames-falb} extends to other sector definitions. E.g., for $\Phi_1$ with sector $[m_u,L_u]$, one defines the shifted operator \mbox{$\nabla \tilde \Phi_1(q_1) \coloneqq \nabla \Phi_1(q_1) - m_u q_1$}, which becomes slope-restricted in $[0, L_u\!\!-\!\!m_u]$. For the sector $[0, \infty]$, we can replace the filter in \eqref{eq:zf-iqc:MPsi} with \mbox{$\Psi(s) = \mathrm{diag}(1-H(s), 1) \otimes I_n$}, which will become relevant for normal-cone operators in Section~\ref{sec:constraints}.

For the specific choice of $H = 0$, the filter $\Psi$ becomes memoryless, and~\eqref{eq:zf-iqc} reduces to the \mbox{pointwise-in-time condition}
\begin{equation*}
\bmat{q(t) \\ p(t)}^\top \bmat{0 & L I_n \\ L I_n & -2 I_n} \bmat{q(t) \\ p(t)} \geq 0, \quad \forall t \geq 0.
\end{equation*}
Generally, the search for a suitable parametrization $H$ is nontrivial. Simple choices satisfying $\| H \|_1 \leq 1$ are \mbox{$H(s) = \frac{\omega}{s + \omega}$} for some $\omega > 0$, or a convex combination of such basis functions for different $\omega$. Typically, the richer the dynamics of~$H$, the less conservative is the operator description. In practice, the parametrization is chosen so that the filter parameters enter affinely into some linear matrix inequalities (LMIs) that result from the IQC application; we refer to~\cite{carrasco2014,veenman,veenmanAnalysis} and references therein for further details.

\subsection{An IQC Stability Test for Feedback Optimization}\label{sec:stability-general}

We now apply the Zames--Falb characterization to the generalized plant~\eqref{eq:genplant} and formulate a verifiable stability condition.

Note that the fixed-point of~\eqref{eq:genplant} coincides with the solution of the OSS problem~\eqref{eq:oss}. Since the disturbance $w$ enters only as a constant bias, the following analysis is done in centered coordinates relative to its fixed-point.
Let $(u^\star, y^\star)$ denote the solution to~\eqref{eq:oss} and $x^\star$ be the corresponding equilibrium state, and define the optimal gradients \mbox{$p_1^\star \coloneqq \nabla \Phi_1(u^\star)$} and \mbox{$p_2^\star \coloneqq \nabla \Phi_2(y^\star)$}, satisfying the first-order optimality condition \mbox{$p_1^\star + \Pi_{yu}^{\top} p_2^\star = 0$}. We introduce $\tilde{u}\coloneqq u - u^\star$, $\tilde{y}\coloneqq y - y^\star$, $\tilde{x}\coloneq x - x^\star$, and define $\tilde{q}_1, \tilde{q}_2, \tilde{\eta}, \tilde{y}_c$ accordingly, cf.~\eqref{eq:genplant}. Using straightforward calculations, we can rewrite \eqref{eq:genplant:P} as
\begin{subequations}\label{eq:genplant-stab}
\begin{equation}\label{eq:genplant-stab:P}
\setlength{\arraycolsep}{3pt}
\left[\!
\begin{array}{c}
\dot{\tilde x} \\ \dot{\tilde{\eta}} \\ \hline \tilde q_1 \\ \tilde q_2 \\
\hdashline \tilde y_c
\end{array}\!
\right]
=
\underbrace{\left[\!
\begin{array}{cc|cc:c}
A & 0  & 0 & 0 & B \\
0 & 0  & I_{n_u} & \Pi_{yu}^\top & m_u I_{n_u} \\ \hline
0 & 0  & 0 & 0  & I_{n_u} \\
C & 0  & 0 & 0  & D \\ \hdashline
0 & I_{n_u} & 0  & 0 & 0
\end{array}
\right]}_{\doteq P(s)}\!\!
\left[\!
\begin{array}{c}
\tilde x \\ \tilde{\eta} \\ \hline \tilde{p}_1 \\ \tilde{p}_2 \\ \hdashline \tilde u
\end{array}\!
\right],
\end{equation}
where 
\begin{align}\label{eq:genplant-stab:oracle}
    \begin{aligned}
	\tilde p_1 &= \nabla \Phi_1(\tilde q_1 + u^\star) - m_u \tilde q_1 - p_1^\star \\
	\tilde p_2 &= \nabla \Phi_2(\tilde q_2 + y^\star) - p_2^\star
    \end{aligned}
\end{align}
\end{subequations}
are the shifted oracles, each defining slope-restricted mappings in $[0, L_u - m_u]$ and $[0, L_y]$, respectively.
Let $K$ be a controller for \eqref{eq:genplant-stab}, i.e., $\tilde{u} = K [\, \tilde{y}_c \,]$, and let \mbox{$P \star K$} be the resulting LFT that maps \mbox{$\mathrm{col}(\tilde{p}_1, \tilde{p}_2) \mapsto \mathrm{col}(\tilde{q}_1, \tilde{q}_2)$}. 

To formulate an IQC stability test, let $H_u(s)$ and $H_y(s)$ be Zames--Falb parametrizations and define the filter
\begin{align}
    \Psi_{uy}(s) &\coloneq 
    \setlength{\arraycolsep}{1pt}
    \bmat{
        M_u(s)(L_u \!-\! m_u) & 0 & -M_u(s) & 0 \\
        0 & 0 & I_{n_u} & 0 \\
        0 &  M_y(s) L_y & 0 & -M_y(s)  \\
        0 & 0 & 0 & I_{n_y}
    }
    \notag
    \\
    M_u(s) &\coloneq (1 - H_u(s)) \otimes I_{n_u}\notag\\ 
    M_y(s) &\coloneq (1 - H_y(s)) \otimes I_{n_y}.\label{eq:ZF-filters}
\end{align}
If we define $\psi \coloneq \Psi_{uy} [\, \mathrm{col}(\tilde{q}_1, \tilde{q}_2 , \tilde{p}_1, \tilde{p}_2) \, ]$ and partition \mbox{$\psi = \mathrm{col}(\psi_1, \psi_2)$}, where $\psi_1(t) \in \mathbb{R}^{2 n_u}$, $\psi_2(t) \in \mathbb{R}^{2 n_y}$, then $\psi_1, \psi_2$ each satisfy the IQC inequality \eqref{eq:zf-iqc:ineq} with $J_{n_u}$ and $J_{n_y}$ as in \eqref{eq:zf-iqc:MPsi}, respectively.

Now define the augmented plant
\begin{equation}\label{eq:augmented-plant}
    \mathcal{G}(s) \coloneq \Psi_{uy}(s)
    \bmat{(P \star K)(s) \\ I_{n_u + n_y}} \doteq \left[ \begin{array}{c|c} \mathcal{A} & \mathcal{B} \\ \hline
    \mathcal{C} & \mathcal{D} \end{array} \right],
\end{equation}
which maps $\mathrm{col}(\tilde{p}_1, \tilde{p}_2) \mapsto \mathrm{col}(\psi_1, \psi_2)$. Then, asymptotic stability of \eqref{eq:genplant-stab} can then be certified by imposing the following condition on $\mathcal{G}$.

\begin{proposition}\label{prop:stability-general}
    Let Assumptions~\ref{assum:sys} and~\ref{assum:cost} hold. If there exist admissible filter parametrizations $H_u$, $H_y$ satisfying \mbox{$\| H_u \|_1 \leq 1$} and $\| H_y \|_1 \leq 1$, a matrix $\mathcal{X} = \mathcal{X}^\top \succ 0$, and a scalar $\lambda > 0$ such that 
    \begin{equation}\label{eq:LMI-zf-general}
        \bmat{
            \mathcal A^\top\mathcal X+\mathcal X\mathcal A & \mathcal X \mathcal B
            \\
            \mathcal B^\top\mathcal X & 0
        }
        + \lambda
        \bmat{
            \mathcal C & \mathcal D
        }^\top
        \mathcal{J}
        \bmat{
            \mathcal C & \mathcal D
        } \prec 0,
    \end{equation}
    with $\mathcal{J} = \mathrm{diag}( J_{n_u}, J_{n_y})$, then the closed loop of \eqref{eq:genplant-stab} and the controller $\tilde{u} = K[\, \tilde{y}_c \,]$ is asymptotically stable.
\end{proposition}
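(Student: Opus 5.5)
Our plan is a standard dissipation argument with a hard (time-domain) IQC, carried out in the centered coordinates of \eqref{eq:genplant-stab}. Let $\xi$ denote the state of the augmented plant $\mathcal{G}$ in \eqref{eq:augmented-plant}. It collects the plant state $\tilde x$, the optimality-model state $\tilde\eta$, the controller state, and the states of the filters $M_u, M_y$. Because the filter $\Psi_{uy}$ is driven by $\mathrm{col}(\tilde q,\tilde p)$ and $\tilde q$ is an output of $P\star K$, its state evolves along closed-loop trajectories. We take $V(\xi)=\xi^\top\mathcal X\xi$ as the storage function. Multiplying \eqref{eq:LMI-zf-general} from left and right by $\mathrm{col}(\xi,\tilde p)$ and using strictness, we expect to find $\epsilon>0$ such that
\[
\dot V(t) + \lambda\,\psi(t)^\top \mathcal J\,\psi(t) \le -\epsilon\big(\|\xi(t)\|^2+\|\tilde p(t)\|^2\big)
\]
along every trajectory of the closed loop. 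Here $\psi=\mathcal C\xi+\mathcal D\tilde p$, and $\tilde p$ is the actual oracle output \eqref{eq:genplant-stab:oracle}.

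Integrating this inequality over $[0,T]$ gives
\[
V(\xi(T))+\epsilon\int_0^T(\|\xi\|^2+\|\tilde p\|^2)\,\mathrm dt \le V(\xi(0)) - \lambda\int_0^T \psi^\top\mathcal J\psi\,\mathrm dt .
\]
We then apply Lemma~\ref{lem:zames-falb} separately to the two shifted oracles. The blocks $\psi_1,\psi_2$ are precisely the ZF filter outputs for the $[0,L_u-m_u]$ and $[0,L_y]$ slope-restricted maps, so each integral $\int_0^T\psi_i^\top J\psi_i\,\mathrm dt\ge0$. Since $\lambda>0$, the last term is nonpositive. This yields $\lambda_{\min}(\mathcal X)\|\xi(T)\|^2\le V(\xi(0))$, which is Lyapunov stability (uniform boundedness), and also $\xi\in\mathcal L_2$ and $\tilde p\in\mathcal L_2$. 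Because $\tilde p$ is Lipschitz in $\tilde q$, which is linear in $\xi$, the derivative $\dot\xi$ is bounded. Barbalat's lemma then gives $\xi(t)\to0$, so $(\tilde x,\tilde\eta)\to0$ and the closed loop converges to the OSS solution of \eqref{eq:oss}.

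The main obstacle is legitimizing the use of Lemma~\ref{lem:zames-falb}. That lemma is stated for $q\in\mathcal L_2$ and with zero initial filter state, whereas here $\tilde q$ is not known a priori to be in $\mathcal L_2$ and the plant has nonzero initial conditions. We handle this in two steps. First, we use causality: the hard IQC over $[0,T]$ depends only on signals on $[0,T]$, so we can apply it to truncated signals $q_T$, which are in $\mathcal L_2$. Second, we initialize the filter states at zero and absorb the nonzero plant and controller initial conditions into $V(\xi(0))$. If the hard factorization of the ZF multiplier is only valid up to a bounded initial-condition term, we would instead use the hard-IQC-with-terminal-cost version (as in \cite{carrasco2014,veenman}), $\int_0^T\psi^\top\mathcal J\psi\ge -\gamma\|\xi_\Psi(0)\|^2$. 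Stability then follows in the same way with a constant offset, and asymptotic convergence still follows from the $\mathcal L_2$ bounds. Well-posedness needs no separate argument: the feedthrough of $P$ from $\tilde p$ to $\tilde q$ is zero, so the loop is well-posed as long as $K$ is proper and the $(\tilde u,\tilde y_c)$ loop is well-posed, which holds here since $P$ has zero feedthrough from $\tilde u$ to $\tilde y_c$.
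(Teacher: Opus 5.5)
Your proposal is correct and follows essentially the same route as the paper's proof in Appendix~A: the quadratic storage function on the augmented state, the strict LMI yielding a dissipation inequality with a $-\delta(\|\zeta\|^2+\|\tilde p\|^2)$ margin, the time-domain Zames--Falb IQCs used to drop the multiplier term, square integrability and boundedness of the signals, Lipschitz continuity of the oracles, and Barbalat's lemma. Your additional remarks address points the paper leaves implicit without changing the argument: truncating signals via causality, initializing the filter states at zero, and well-posedness through the zero feedthrough from $\tilde p$ to $\tilde q$.
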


\begin{proof}
    See Appendix~\ref{appendix:a}.
\end{proof}

Proposition~\ref{prop:stability-general} is a standard result and eq.~\eqref{eq:LMI-zf-general} is an efficiently verifiable LMI if $\mathcal{A},\mathcal{B},\mathcal{C},\mathcal{D}$ are constant or if $\mathcal{C},\mathcal{D}$ are affine functions of the Zames--Falb parameters and $\lambda=1$. By asymptotic stability of \eqref{eq:genplant-stab}, the closed loop trajectories converge to the optimal solution of~\eqref{eq:oss}. We then conclude by going back to original coordinates that plant \eqref{eq:plant} in feedback with the controller
\begin{align}\label{eq:controller-implementation}
    \begin{aligned}
    \dot{\eta} &= \nabla \Phi_1(u) + \Pi_{yu}^\top \nabla \Phi_2(y) \\
    u &= K \left[\, \eta\, \right]
    \end{aligned}
\end{align}
is stabilized and the OSS problem is solved asymptotically.

\begin{remark}
    If additionally guarantees on the rate of convergence are desired, the IQC test can be modified straightforwardly with the methodology proposed in~\cite{hu_expIQC}, which provides IQC-based conditions to also establish exponential convergence rates.
\end{remark}

\subsection{Stability Analysis of the GFC}\label{sec:convex-case}

Proposition~\ref{prop:stability-general} applies to any controller $K$. To recover the GFC~\eqref{eq:gradient-flow}, one simply sets \mbox{$K(s) \equiv -\varepsilon I_{n_u}$}. Inspecting~\eqref{eq:genplant-stab}, the realization of $P \star (-\varepsilon I_{n_u})$ depends affinely~on $\varepsilon$, and \eqref{eq:LMI-zf-general} therefore becomes a quasi-convex program in $\varepsilon$. Proposition~\ref{prop:stability-general} then provides a principled way to find upper bounds on $\varepsilon$ by performing a bisection. 

In the following, we show that in absence of $\Phi_1$ we can exactly recover the singular perturbation bound \eqref{eq:eps-bound-menta} of Proposition~\ref{prop:singular-perturb} as a conservative special case of the IQC analysis. To this end, we impose the following assumption.

\begin{assumption}\label{assum:noPhi1}
    The input cost is absent, i.e., $\Phi_1 = 0$. 
\end{assumption}

In absence of $\Phi_1$ we lose strong convexity and the \mbox{$\tilde\eta$-dynamics} cannot be simply stabilized with a static output feedback. Strict inequality of \eqref{eq:LMI-zf-general} is therefore not possible. Moreover, the minimizer $(u^\star, y^\star)$ is generally not unique anymore. In the following, we therefore present an alternative strategy to establish closed loop convergence in absence of strong convexity. We note that the proof technique extends \emph{mutatis mutandis} to the case where $\Phi_1$ is non-zero and convex.

To aid the analysis, we consider the steady-state error
\begin{equation}\label{eq:error}
    e \coloneqq x - \Pixu u - \Pixw w,
\end{equation}
which vanishes on the equilibrium manifold. Take a minimizer $(u^\star, y^\star)$ solving the OSS problem. Under the GFC~\eqref{eq:gradient-flow}, the dynamics in centered coordinates become
\begin{align}\label{eq:lure}
\begin{aligned}
\bmat{\dot{\tilde u} \\ \dot e} \!&=\! 
\underbrace{\bmat{0 & 0 \\ 0 & A}}_{\eqqcolon \Ahat}
\!\bmat{\tilde u \\ e}
+
\underbrace{\bmat{-\varepsilon \Piyu^{\top} \\ \varepsilon \Pixu \Piyu^{\top}}}_{\eqqcolon \Bhat}
\tilde p_2, \\[-1ex]
\tilde q_2 \!&=\!
\underbrace{\bmat{\Piyu & C}}_{\eqqcolon \Chat}
\bmat{\tilde u \\ e}.
\end{aligned}
\end{align}
Let \mbox{$\hat G(s) = \Chat(sI - \Ahat)^{-1}\Bhat$} denote the transfer function of~\eqref{eq:lure}, and analogously to the last section, build an augmented system
\begin{equation}\label{eq:augmented-tf}
  \mathcal{G}(s)
  \coloneq
  \setlength{\arraycolsep}{4pt}
    \left(\bmat{(1 \!-\! H_y(s)) L_y & H_y(s) \!-\! 1 \\ 0 & 1} \!\otimes\! I_{n_y} \right)
    \bmat{\hat G(s) \\ I_{n_y}},
\end{equation}
with a Zames--Falb parametrization $H_y(s)$. Let $H_y$ have the state-space realization $(A_H, B_H, C_H, D_H)$. Then, a realization of \eqref{eq:augmented-tf} is
\begin{align}
    \mathcal A
    &\!=\!\!\!
    \setlength{\arraycolsep}{1.5pt}
    \begin{bmatrix}
        0 & 0 & 0 \\
        0 & A & 0 \\
        \!B_H \!\otimes\! (L_y\Pi_{yu}) & B_H \!\otimes\! (L_y C) & A_H \!\otimes\! I_{n_y}
    \end{bmatrix}\!\!\!, 
    \mathcal B
    \!=\!\!\!
    \setlength{\arraycolsep}{1pt}
    \begin{bmatrix}
        -\varepsilon\Pi_{yu}^\top \\
        \varepsilon\Pi_{xu}\Pi_{yu}^\top \\
        \! -B_H \!\otimes\! I_{n_y} 
    \end{bmatrix}
    \notag \\
    \mathcal C
    &\!=\!
    \setlength{\arraycolsep}{3pt}
    \begin{bmatrix}
        \tilde{d} L_y\Pi_{yu} & \tilde{d} L_yC & -C_H \!\otimes\! I_{n_y} \\
        0 & 0 & 0
    \end{bmatrix}, 
    \qquad \quad  \mathcal D
    \!=\!
    \setlength{\arraycolsep}{2pt}
    \begin{bmatrix}
        -\tilde{d} I_{n_y} \\
        I_{n_y}
    \end{bmatrix}
    \label{eq:augmented-plant-realization}
\end{align}
where $\tilde{d} \coloneq 1 - D_H$.
We note that \eqref{eq:augmented-tf} is marginally stable, and $\varepsilon$ only enters through $\mathcal{B}$. The next result ensures convergence to an optimum by placing an alternative LMI condition.

\begin{proposition}\label{prop:zf-LMI}
Let Assumption~\ref{assum:sys}, Assumption~\ref{assum:cost} (ii) and (iii), and Assumption~\ref{assum:noPhi1} hold. Moreover, let $A$ be Hurwitz and let $\varepsilon^\star_{\mathrm{ZF}}$ be the maximum $\varepsilon>0$ for which there exist a filter parametrization $H_y$ satisfying $\| H_y \|_1 \leq 1$, a matrix $\mathcal{X} = \mathcal{X}^\top \succ 0$ and scalar $\lambda > 0$, such that
\begin{multline}\label{eq:LMI-zf}
    \Lambda(\mathcal X, \lambda, \varepsilon)
    \coloneqq
    \bmat{
        \mathcal A^\top\mathcal X+\mathcal X\mathcal A
        &
        \mathcal X\mathcal B
        \\
        \mathcal B^\top\mathcal X
        &
        0
    }
    \\
    + \lambda
    \bmat{
        \mathcal C & \mathcal D
    }^\top
    J_{n_y}
    \bmat{
        \mathcal C & \mathcal D
    } \preceq 0.
\end{multline}
Then for every $\varepsilon\in(0,\varepsilon^\star_{\mathrm{ZF}}]$ the closed loop trajectories of~\eqref{eq:lure} are bounded. Moreover, partition the LMI conformally into 
\begin{equation*}
\mathcal{X} = \bmat{\star & \star \\ \star & \mathcal{X}_0}, 
\quad 
\Lambda(\mathcal{X},\lambda,\varepsilon) = \bmat{\star & \star \\ \star & \Lambda_0(\mathcal{X}_0,\lambda,\varepsilon)},
\end{equation*}
where $\star$ indicates the block-row and column associated with the $\tilde{u}$ direction. If additionally $\Lambda_0(\mathcal{X}_0,\lambda,0) \prec 0$ holds, then for all $\varepsilon\in(0,\varepsilon^\star_{\mathrm{ZF}})$ the closed loop trajectories converge to a solution of \eqref{eq:oss}.
\end{proposition}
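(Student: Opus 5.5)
We prove this with a quadratic storage function plus an IQC-multiplier argument, followed by a LaSalle/Barbalat-type step for convergence.

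\textbf{Dissipation inequality.} Let $\xi$ be the state of the augmented system~\eqref{eq:augmented-plant-realization}, i.e.\ $\xi=\mathrm{col}(\tilde u,e,\zeta)$ with $\zeta$ the filter state of $H_y$ started at $\zeta(0)=0$. Set $V(\xi)=\xi^\top\mathcal X\xi$. Multiplying~\eqref{eq:LMI-zf} from both sides by $\mathrm{col}(\xi,\tilde p_2)$ gives, along trajectories of~\eqref{eq:lure},
\[
\dot V + \lambda\,\psi^\top J_{n_y}\psi \;=\; \mathrm{col}(\xi,\tilde p_2)^\top\Lambda\,\mathrm{col}(\xi,\tilde p_2) \;\le\; 0 .
\]
Integrating over $[0,T]$ and using Lemma~\ref{lem:zames-falb} (the filter starts at rest) yields $V(\xi(T))\le V(\xi(0))$. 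Since $\mathcal X\succ0$, the state $\xi$ is bounded, and in particular $(\tilde u,e)$ is bounded. This proves the first claim.

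One technical point: Lemma~\ref{lem:zames-falb} is stated for $q\in\mathcal L_2$, whereas here we only know $q$ is locally integrable. We handle this by truncating the signal at $T$, which is legitimate because the IQC is stated for all finite horizons $T$. We take the hard-IQC form~\eqref{eq:zf-iqc:ineq} on $[0,T]$ as given.

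\textbf{Convergence.} Partition $\Lambda(\mathcal X,\lambda,\varepsilon)$ as in the statement, and let $v=\mathrm{col}(e,\zeta,\tilde p_2)$ be the non-$\tilde u$ coordinates. The key observation is that $\varepsilon$ enters only through $\mathcal B$, and hence affinely in $\Lambda$: $\Lambda(\varepsilon)=\Lambda(\mathcal X,\lambda,0)+\varepsilon\,\Delta$. For $\varepsilon<\varepsilon^\star_{\mathrm{ZF}}$, write
\[
\Lambda(\varepsilon)=\tfrac{\varepsilon}{\varepsilon^\star}\,\Lambda(\varepsilon^\star)+\bigl(1-\tfrac{\varepsilon}{\varepsilon^\star}\bigr)\Lambda(0).
\]
This step is the reason the strict interval is needed.

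We want to use this convex combination to obtain
\[
\mathrm{col}(\xi,\tilde p_2)^\top\Lambda(\varepsilon)\,\mathrm{col}(\xi,\tilde p_2)\le -c\norm{v}^2 .
\]
In $\Lambda(0)$ the $\tilde u$ block row interacts with the rest only through $\mathcal A$ and $\mathcal C$, so it is not obvious that $\Lambda(0)\preceq -c\,\mathrm{diag}(0,I)$. We argue as follows. Since $\Lambda(0)\preceq0$ would follow from $\Lambda(\varepsilon^\star)\preceq0$ only with additional structure, we instead use a Schur-complement/continuity argument. Because $\Lambda(\varepsilon)\preceq0$ on $[0,\varepsilon^\star]$ and $\Lambda_0(0)\prec0$, the $\tilde u$-direction can only contribute a cross term, and we absorb it by slightly perturbing $\mathcal X$. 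This is the step I expect to be the main obstacle. If the direct route fails, I would try to show that the kernel of $\Lambda(\varepsilon)$ is contained in $\{v=0\}$: take $z$ with $z^\top\Lambda(\varepsilon)z=0$, then the convex combination forces $z^\top\Lambda(0)z=0$, and hence $\Lambda(0)z=0$ because $\Lambda(0)\preceq0$. The strictness $\Lambda_0(0)\prec0$ together with the structure of the first block column then gives $v=0$.

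Assuming this step goes through, we integrate to get $\int_0^\infty\norm{v}^2\,dt<\infty$. Boundedness of $\xi$ and of $\tilde p_2$ (which holds by the Lipschitz property of $\nabla\Phi_2$) gives bounded $\dot v$. By Barbalat's lemma, $v\to0$, so $e\to0$ and $\tilde p_2\to0$, that is, $\nabla\Phi_2(y)\to p_2^\star$.

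Then $\dot u=-\varepsilon\Pi_{yu}^\top\nabla\Phi_2(y)\to-\varepsilon\Pi_{yu}^\top p_2^\star=0$, because with $\Phi_1=0$ optimality requires $\Pi_{yu}^\top p_2^\star=0$. To conclude that $u$ itself converges, we use the fact that $V$ is nonincreasing and bounded below, so $u$ is bounded. Every $\omega$-limit point $(\bar u,0)$ satisfies $\Pi_{yu}^\top\nabla\Phi_2(\Pi_{yu}\bar u+\Pi_{yw}w)=0$, i.e.\ it is a minimizer. Re-centering $V$ at that minimizer, which is admissible because the LMI is independent of the choice of $(u^\star,y^\star)$, makes $V$ nonincreasing along the trajectory and arbitrarily small at some time; hence the whole trajectory converges to that minimizer of~\eqref{eq:oss}. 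The same re-centering also handles the non-uniqueness of minimizers.
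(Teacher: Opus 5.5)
Your boundedness argument is fine and is essentially the paper's. The convergence step, however, rests on a false premise. With the certificate $(\mathcal X,\lambda)$ frozen at $\varepsilon^\star_{\mathrm{ZF}}$, the matrix $\Lambda(\mathcal X,\lambda,\varepsilon)$ is in general \emph{not} negative semidefinite for $\varepsilon<\varepsilon^\star_{\mathrm{ZF}}$, and in particular not at $\varepsilon=0$. So the bound $\le -c\|v\|^2$ you are after fails. So does your claim ``$\Lambda(\varepsilon)\preceq 0$ on $[0,\varepsilon^\star]$'', and so does the kernel argument, which needs $\Lambda(0)\preceq 0$.

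To see this, take the static multiplier $H_y=0$, with state $(\tilde u,e)$.
\begin{itemize}
\item The $\tilde u$-column of $\mathcal A$ is zero. The IQC term contributes nothing to the $(\tilde u,\tilde u)$ block, because the $\tilde u$-column of $\mathcal C$ sits only in the first block row and $J_{n_y}$ is off-diagonal. Hence the $(\tilde u,\tilde u)$ block of $\Lambda$ is identically zero.
\item The $(\tilde u,e)$ block is $\mathcal X_{\tilde u e}A$.
\item The $(\tilde u,\tilde p_2)$ block is $(-\varepsilon\mathcal X_{\tilde u\tilde u}+\varepsilon\mathcal X_{\tilde u e}\Pi_{xu}+\lambda L_y I_{n_u})\Pi_{yu}^\top$.
\end{itemize}
Feasibility at $\varepsilon^\star$ forces both cross blocks to vanish, i.e.\ $\mathcal X_{\tilde u e}=0$ and $\varepsilon^\star\mathcal X_{\tilde u\tilde u}\Pi_{yu}^\top=\lambda L_y\Pi_{yu}^\top$. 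Keep the same $\mathcal X$ and take $\varepsilon<\varepsilon^\star$. Then the $(\tilde u,\tilde p_2)$ block equals $(1-\varepsilon/\varepsilon^\star)\lambda L_y\Pi_{yu}^\top\neq 0$, sitting next to a zero diagonal block, so $\Lambda$ is indefinite. The cross term that $-\varepsilon^\star\mathcal X_{\tilde u\tilde u}\Pi_{yu}^\top$ cancelled at $\varepsilon^\star$ reappears as soon as $\varepsilon$ moves. No ``slight perturbation'' of $\mathcal X$ repairs this, since the required $\mathcal X_{\tilde u\tilde u}$ scales like $1/\varepsilon$.

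The missing idea is to exploit this zero block rather than fight it, as the paper does. Because the $(\tilde u,\tilde u)$ block of $\Lambda$ vanishes, $\Lambda\preceq 0$ forces every $\tilde u$ cross term to zero. Hence $\dot V+\lambda\psi^\top J_{n_y}\psi=v^\top\Lambda_0(\mathcal X_0,\lambda,\varepsilon)\,v$ exactly, and $\tilde u$ drops out of the dissipation inequality. You then interpolate only $\Lambda_0$, with the $\tilde u$-row of $\mathcal X$ adapted to the current $\varepsilon$ so that the cross terms vanish there. For $H_y=0$ this means $\mathcal X_{\tilde u e}=0$ and $\mathcal X_{\tilde u\tilde u}=\tfrac{\lambda L_y}{\varepsilon}I_{n_u}$, which leaves $\Lambda_0$ unchanged.

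On $\Lambda_0$ your affine-in-$\varepsilon$ observation is correct. It gives $\Lambda_0(\varepsilon)=\tfrac{\varepsilon}{\varepsilon^\star}\Lambda_0(\varepsilon^\star)+(1-\tfrac{\varepsilon}{\varepsilon^\star})\Lambda_0(0)\prec 0$ directly, which is simpler than the paper's Schur complement with the matrix-convex quadratic $F(\varepsilon)$. Your Barbalat step then goes through as written.

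Your $\omega$-limit/re-centering ending is a legitimate alternative to the paper's rank argument for recovering $\tilde u$, and it handles non-unique minimizers more carefully. One correction: a hard IQC started at $t=0$ only yields $V(T)\le V(0)$, not monotonicity of $V$. To make the re-centering work, restart the fictitious filter at rest at the re-centering time; time-invariance of the Zames--Falb IQC permits this.
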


\begin{proof}
    See Appendix~\ref{appendix:b}.
\end{proof}

The extension to include a convex input cost $\Phi_1 \neq 0$ with $m_u=0$ follows the same reasoning. However, omitting $\Phi_1$ enables us to provide a direct connection of IQC-based stability analysis with Proposition~\ref{prop:singular-perturb}, as stated in the following Corollary.

\begin{corollary}\label{corr:static-LMI}
Suppose the same assumptions of Proposition~\ref{prop:zf-LMI} hold. Let $X \succ 0$ satisfy $A^\top X + X A = -Q$ with $Q \succ 0$. Let $\varepsilon^\star_{\mathrm{static}}$ be the maximum $\varepsilon>0$ for which there exists a matrix~\mbox{$\mathcal{X} = \mathcal{X}^\top \succ 0$} and scalar~$\lambda > 0$, such that \eqref{eq:LMI-zf} holds with $H_y=0$. 
Then
\begin{equation}\label{eq:eps-bound-LMI}
    \varepsilon^\star_{\mathrm{ZF}} \geq \varepsilon^\star_{\mathrm{static}} \geq \frac{\lmin(Q)}{2 \| X \Pixu \| \, \ell},
\end{equation}
where $\ell = L_y \| C \| \| \Piyu \|$. Moreover, if in addition \mbox{$A^\top X + X A + \frac{\lambda L_y^2}{2} C^\top C \prec 0$} holds, then \mbox{$\Lambda_{0}(\mathcal{X}_0, \lambda, 0) \prec 0$}.
\end{corollary}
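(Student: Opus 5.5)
The plan is to prove the two inequalities in \eqref{eq:eps-bound-LMI} separately, and then read off the final claim from the same computation at $\varepsilon=0$.

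\textbf{First inequality.} The choice $H_y=0$ satisfies $\|H_y\|_1\le 1$, so it is an admissible Zames--Falb parametrization in Proposition~\ref{prop:zf-LMI}. Hence any $\varepsilon$ feasible for the static test is feasible for the dynamic test, and $\varepsilon^\star_{\mathrm{ZF}}\ge\varepsilon^\star_{\mathrm{static}}$.

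\textbf{Second inequality.} For this we build an explicit feasible certificate for \eqref{eq:LMI-zf} with $H_y=0$. In that case there are no filter states, $\tilde d=1$, $\mathcal C=\smat{L_y\Pi_{yu} & L_yC\\ 0&0}$ and $\mathcal D=\mathrm{col}(-I,I)$. The multiplier term evaluated along $\mathrm{col}(\tilde u,e,\tilde p_2)$ is therefore
\begin{equation*}
2\lambda\,(L_y\tilde q_2-\tilde p_2)^\top\tilde p_2, \qquad \tilde q_2=\Pi_{yu}\tilde u+Ce .
\end{equation*}
We take the block-diagonal Lyapunov matrix $\mathcal X=\mathrm{diag}(x_u I_{n_u},X)$, with $x_u>0$ and $X$ from the Lyapunov equation. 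Writing out the quadratic form of $\Lambda$ along $(\tilde u,e,\tilde p_2)$ gives
\begin{equation*}
-2\varepsilon x_u\tilde u^\top\Pi_{yu}^\top\tilde p_2 - e^\top Qe + 2\varepsilon e^\top X\Pi_{xu}\Pi_{yu}^\top\tilde p_2 + 2\lambda L_y(\Pi_{yu}\tilde u+Ce)^\top\tilde p_2 - 2\lambda\|\tilde p_2\|^2 .
\end{equation*}
The key step is to pick $x_u=\lambda L_y/\varepsilon$. This cancels the $\tilde u$--$\tilde p_2$ cross term exactly, so the $\tilde u$ block-row of $\Lambda$ vanishes identically. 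The condition $\Lambda\preceq 0$ then reduces to a condition on the $(e,\tilde p_2)$ block:
\begin{equation*}
\bmat{-Q & M\\ M^\top & -2\lambda I}\preceq 0, \qquad M\coloneqq \varepsilon X\Pi_{xu}\Pi_{yu}^\top+\lambda L_y C^\top .
\end{equation*}
By a Schur complement, it suffices that $\lmin(Q)\ge \|M\|^2/(2\lambda)$.

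Set $a=\|X\Pi_{xu}\|\|\Pi_{yu}\|$ and $b=L_y\|C\|$, so that $\|M\|\le \varepsilon a+\lambda b$. It is then enough to have $(\varepsilon a+\lambda b)^2\le 2\lambda\,\lmin(Q)$. Choosing $\lambda=\varepsilon a/b$ turns this into $2\varepsilon ab\le\lmin(Q)$, i.e.
\begin{equation*}
\varepsilon\le \frac{\lmin(Q)}{2\|X\Pi_{xu}\|\,\ell},
\end{equation*}
which is exactly the claimed bound. Since $\lambda>0$ and $x_u>0$, we have $\mathcal X\succ0$. Thus every $\varepsilon$ below the bound is feasible for the static test, and $\varepsilon^\star_{\mathrm{static}}$ is at least the bound.

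\textbf{Final claim.} Removing the $\tilde u$ direction and setting $\varepsilon=0$ leaves
\begin{equation*}
\Lambda_0(\mathcal X_0,\lambda,0)=\bmat{A^\top X+XA & \lambda L_yC^\top\\ \lambda L_yC & -2\lambda I}.
\end{equation*}
Since $-2\lambda I\prec 0$, a Schur complement shows $\Lambda_0\prec0$ if and only if
\begin{equation*}
A^\top X+XA+\tfrac{\lambda L_y^2}{2}C^\top C\prec0,
\end{equation*}
which is the stated hypothesis.

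\textbf{Main obstacle.} The delicate step is choosing the ratio $x_u/\lambda$ so that the indefinite $\tilde u$ row vanishes. Without this cancellation, the marginally stable $\tilde u$ dynamics would make any strict or non-strict inequality infeasible. After that, the only remaining work is the norm bound on $M$ and the optimization over $\lambda$.
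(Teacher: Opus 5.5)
Your proof is correct and follows essentially the same route as the paper. It uses the same certificate $\mathcal{X}=\mathrm{diag}(\tfrac{\lambda L_y}{\varepsilon}I_{n_u},X)$, the same Schur complement around $-2\lambda I_{n_y}$, the same Cauchy--Schwarz norm relaxation with optimal multiplier $\lambda=\varepsilon\|X\Pi_{xu}\|\|\Pi_{yu}\|/(L_y\|C\|)$, and reads off the final claim at $\varepsilon=0$. The only difference is that the paper starts from a general Ansatz with $X$ in the $e$-block and shows that $X_{12}=0$ and $X_{22}=\lambda L_y/\varepsilon$ are forced by negative semidefiniteness, whereas you posit this choice directly; that suffices here and avoids the rank condition on $\Pi_{yu}$ that the paper invokes in that step.
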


\begin{proof}
 See Appendix~\ref{appendix:c}.
\end{proof}

The right-hand side of~\eqref{eq:eps-bound-LMI} recovers the singular perturbation bound~\eqref{eq:eps-bound-menta}, while the chain of inequalities shows its conservatism in comparison to both the static and dynamic IQC tests. The proof of Corollary~\ref{corr:static-LMI} traces this conservatism to a Cauchy--Schwarz relaxation of the LMI~\eqref{eq:LMI-zf} for one specific choice of the certificate $(\mathcal{X},\lambda)$ and a static multiplier ($H_y=0$). Searching over all certificates and multipliers may produce substantially tighter bounds.

\begin{remark}
The condition \mbox{$A^\top X + X A + \tfrac{\lambda L_y^2}{2} C^\top C \prec 0$} required is mild. When $A$ is Hurwitz, detectability of $(A, C)$ guarantees the existence of $\bar X \succ \bar X_0$ satisfying \mbox{$A^\top \bar X + \bar X A + C^\top C \prec 0$}, where $\bar X_0$ is the observability Gramian. Since~\eqref{eq:LMI-zf} is homogeneous in $(\mathcal{X}, \lambda)$, one can always rescale these variables to match any given value of~$\tfrac{\lambda L_y^2}{2}$.
\end{remark}

As an illustrating comparison, consider Table~\ref{tab:comparison-stab} which compares \eqref{eq:eps-bound-menta}, $\varepsilon^\star_{\mathrm{static}}$ and $\varepsilon^\star_{\mathrm{ZF}}$ when computed for the example system~\eqref{eq:example-sys} and a quadratic output cost \mbox{$\Phi_2(y) = \tfrac{1}{2} y^\top Q_y y, \; Q_y\succeq 0$}. Since the gradient $\nabla \Phi_2$ is linear in this case, we can compute the exact stability margin $\varepsilon_{\max}$ with an eigenvalue analysis. Observe that the singular perturbation bound captures less than 9\% of the true stability margin, while already a stability test with static multiplier improves the range almost by a factor of 3, and a test with dynamic multiplier recovers over 85\%.
We conclude that the stability margins obtained from classical singular perturbation arguments can be conservative and systematically reduced by exploiting the IQC analysis. 

\begin{table}
    \centering
    \caption{Stability bounds for the example system~\eqref{eq:example-sys} with quadratic cost $\Phi_2(y) = \tfrac{1}{2} y^\top Q_y y$.}
    \label{tab:comparison-stab}
    \begin{tabular}{lccc}
        \toprule
        Method & $\varepsilon$ bound & Value & \% of $\varepsilon_{\max}$ \\
        \midrule
        Singular Perturbation & \eqref{eq:eps-bound-menta} & $0.010$ & $8.7\%$ \\
        Static IQC & $\varepsilon_\mathrm{static}^\star$ & $0.028$ & $24.5\%$ \\
        Zames--Falb IQC & $\varepsilon_\mathrm{ZF}^\star$ & $0.098$ & $85.6\%$ \\
        \bottomrule
    \end{tabular}
\end{table}

For the remainder of the paper, we work with a strongly convex input cost~$\Phi_1$ again. This simplifies the subsequent developments on performance and synthesis, without the technical overhead of handling non-strict feasibility.

%% file: sections/perf_synth_uncert.tex
\section{From Analysis to Systematic Design}\label{sec:performance}

While stability guarantees that the optimum is reached asymptotically, little is said about the transient behaviour or robustness.
The generalized-plant framework enables a fundamentally different perspective. Rather than analyzing a given feedback optimization algorithm, we can \emph{synthesize} a dynamic output-feedback controller that can account for robustness and transient performance, making a shift from handcrafted optimization-inspired controllers to systematic, LMI-based design \cite{schererGahinet,schererLMIs,veenman}.

\subsection{Optimizing Transients via Robust Performance}\label{subsec:perf}

We propose to capture transient behaviour of the closed loop through classical input-output gains. The central quantity of interest is the optimality residual $\tilde\eta = \eta - \eta^\star$, which integrates the violation of the first-order conditions. If the closed loop is stable, $\tilde{\eta}$ asymptotically vanishes at any constant disturbance, regardless of its magnitude. What triggers a transient is a \emph{change} in the disturbance, and the performance measure is how much optimality violation such a change can cause.

To formalize this, consider a time-varying perturbation $\hat{w}$ around the nominal constant disturbance $w$, so that the plant becomes
\begin{align}
\begin{aligned}
\dot{x} &= A x + B u + B_w (w + \hat{w}) \\
y &= C x + D u + D_w (w + \hat{w}).
\end{aligned}
\end{align}
A nonzero $\hat{w}(t)$ shifts the optimal steady state and triggers a re-optimization transient. We define the performance output $z \coloneq \tilde{\eta}$ and measure performance by the amplification from $\hat{w}$ to $z$ according to some signal norm. 
Shifting to centered coordinates as before, the resulting generalized plant reads
\begin{equation}\label{eq:genplant-perf}
\setlength{\arraycolsep}{3pt}
\left[\!
\begin{array}{c}
\dot{\tilde x} \\ \dot{\eta} \\ \hline \tilde q_1 \\ \tilde q_2 \\ \hdashline z \\
\hdashline \tilde{y}_c
\end{array}\!
\right]
=
\underbrace{\left[\!
\begin{array}{cc|cc:c:c}
A & 0  & 0 & 0 & B_w & B \\
0 & 0  & I_{n_u} & \Pi_{yu}^\top & 0 & m_u I_{n_u} \\ \hline
0 & 0  & 0 & 0 & 0 & I_{n_u} \\
C & 0  & 0 & 0 & D_w & D \\ \hdashline
0 & I_{n_u} & 0 & 0 & 0 & 0 \\ \hdashline
0 & I_{n_u} & 0 & 0 & 0 & 0
\end{array}
\right]}_{\doteq P_\mathrm{perf}(s)}\!\!
\left[\!
\begin{array}{c}
\tilde x \\ \eta \\ \hline {\tilde{p}}_1 \\ \tilde{p}_2 \\ \hdashline \hat w \\ \hdashline \tilde u
\end{array}\!
\right],
\end{equation}
where $\tilde{p}_1, \tilde{p}_2$ are as in \eqref{eq:genplant-stab:oracle}, and where $\tilde{u},\tilde{y}_c$ close the loop through a controller $K$.

For intuition, note that for quadratic costs, \eqref{eq:genplant-perf} is a linear interconnection and the $\hat{w} \to \tilde{\eta}$ gain therefore coincides with the $\dot{\hat w} \to \dot{\tilde \eta}$ gain, i.e., the effect of disturbance changes on instantaneous optimality violation. For illustration purposes, let us consider a disturbance rejection problem in terms of the $\Ltwo$-gain
\begin{equation}
  \gamma \coloneq \sup_{0\neq \hat{w}\in \Ltwo} \frac{\| z \|_{\Ltwo}}{\| \hat{w} \|_{\Ltwo}}.
\end{equation}

To state an IQC-based performance test, fix a controller $K$ and form the LFT~\mbox{$P_\mathrm{perf} \star K$}, mapping \mbox{$(\tilde{p}_1, \tilde{p}_2, \hat{w}) \!\mapsto\! (\tilde{q}_1, \tilde{q}_2, z)$}. Partition the resulting transfer matrix conformally as
\begin{equation}
  (P_\mathrm{perf} \star K)(s) \triangleq \bmat{\hat G_{qp}(s) & \hat G_{qw}(s) \\
  \hat G_{zp}(s) & \hat G_{zw}(s)}.
\end{equation}
Choose Zames--Falb parametrizations $H_u(s)$ and $H_y(s)$, build $\Psi_{uy}(s)$ as in \eqref{eq:ZF-filters} and define \mbox{$\psi = \Psi_{uy}[\, \mathrm{col}(\tilde{q}_1, \tilde{q}_2, \tilde{p}_1, \tilde{p}_2) \,]$} as before. Define the augmented plant
\begin{equation}\label{eq:perf-channels}
  \mathcal{G}(s) \coloneq
  \underbrace{\bmat{
    \Psi_{uy}(s) \bmat{\hat G_{qp}(s) \\ I} & \Psi_{uy}(s) \bmat{\hat G_{qw}(s) \\ 0} \\
    \hat G_{zp}(s) & \hat G_{zw}(s)
  }}_{\doteq  
  \left[ \begin{array}{c|cc} 
    \mathcal{A} & \mathcal{B}_p & \mathcal{B}_w \\ \hline
    \mathcal{C}_\psi & \mathcal{D}_{\psi p} & \mathcal{D}_{\psi w} \\
    \mathcal{C}_z & \mathcal{D}_{zp} & \mathcal{D}_{zw}
  \end{array} \right]
  },
\end{equation}
which maps $(\tilde{p}_1, \tilde{p}_2, \hat{w}) \mapsto (\psi, z)$, where $\psi = \mathrm{col}(\psi_1, \psi_2)$ and $\psi_1,\psi_2$ satisfy the Zames--Falb IQC \eqref{eq:zf-iqc:ineq}.
We can then re-state a standard IQC-based $\Ltwo$-gain result.

\begin{proposition}[\!\!\cite{veenmanAnalysis}]\label{prop:perf-analysis}
Let Assumptions~\ref{assum:sys} and~\ref{assum:cost} hold.
If there exist admissible filter parametrizations $H_u, H_y$ satisfying $\| H_u \|_1 \leq 1$ and $\| H_y \|_1 \leq 1$, and a matrix $\mathcal{X} = \mathcal{X}^\top \succ 0$ such that
\begin{multline}\label{eq:LMI-perf}
  \bmat{\mathcal{A}^\top \mathcal{X} + \mathcal{X}\mathcal{A}
        & \mathcal{X}\mathcal{B}_p & \mathcal{X}\mathcal{B}_w \\[2pt]
        \star & 0 & 0 \\[2pt]
        \star & \star & -\gamma^2 I_{n_w}}
  \\[4pt]
  +\;
  \bmat{\mathcal{C}_\psi & \mathcal{D}_{\psi p}
        & \mathcal{D}_{\psi w}}^{\!\top}
  \mathcal{J} \,
  \bmat{\mathcal{C}_\psi & \mathcal{D}_{\psi p}
        & \mathcal{D}_{\psi w}}
  \\[4pt]
  +\;
  \bmat{\mathcal{C}_z & \mathcal{D}_{zp}
        & \mathcal{D}_{zw}}^{\!\top}
  \bmat{\mathcal{C}_z & \mathcal{D}_{zp}
        & \mathcal{D}_{zw}}
  \;\prec\; 0,
\end{multline}
with $\mathcal{J} = \mathrm{diag}( J_{n_u}, J_{n_y})$, then the $\Ltwo$-gain from $\hat w$ to $z$ is at most~$\gamma$.
\end{proposition}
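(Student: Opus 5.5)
The plan is a standard dissipation argument. We use the storage function $V(\xi)=\xi^\top\mathcal{X}\xi$, where $\xi$ is the state of the realization $(\mathcal{A},\mathcal{B}_p,\mathcal{B}_w,\mathcal{C}_\psi,\dots)$ of $\mathcal{G}$ in \eqref{eq:perf-channels}. This realization stacks the plant, the controller $K$, and the filter states of $\Psi_{uy}$, and we take zero initial conditions in centered coordinates.

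\textbf{Step 1: pointwise dissipation inequality.} Multiply \eqref{eq:LMI-perf} from the left by $\mathrm{col}(\xi,\tilde p,\hat w)^\top$ and from the right by $\mathrm{col}(\xi,\tilde p,\hat w)$, along closed-loop trajectories where $\tilde p=\mathrm{col}(\tilde p_1,\tilde p_2)$. The first block gives $\tfrac{d}{dt}V(\xi(t))-\gamma^2\|\hat w\|^2$. The second gives $\psi^\top\mathcal{J}\psi$, since $\psi=\mathcal{C}_\psi\xi+\mathcal{D}_{\psi p}\tilde p+\mathcal{D}_{\psi w}\hat w$. The third gives $\|z\|^2$. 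Strictness of the LMI leaves a margin $-\epsilon\|\mathrm{col}(\xi,\tilde p,\hat w)\|^2$ for some $\epsilon>0$, so
\[
\dot V + \psi^\top\mathcal{J}\psi + \|z\|^2 - \gamma^2\|\hat w\|^2 \le -\epsilon\|\mathrm{col}(\xi,\tilde p,\hat w)\|^2 .
\]

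\textbf{Step 2: integrate and use the IQC.} Integrating over $[0,T]$ with $\xi(0)=0$ and $V\ge 0$ gives
\[
\int_0^T\|z\|^2\,dt \le \gamma^2\int_0^T\|\hat w\|^2\,dt - \int_0^T\psi^\top\mathcal{J}\psi\,dt .
\]
The shifted oracles \eqref{eq:genplant-stab:oracle} are slope-restricted in $[0,L_u-m_u]$ and $[0,L_y]$. Lemma~\ref{lem:zames-falb}, applied blockwise to $\psi_1$ and $\psi_2$, therefore gives $\int_0^T\psi^\top\mathcal{J}\psi\,dt\ge 0$ for every $T$. Letting $T\to\infty$ yields $\|z\|_{\Ltwo}\le\gamma\|\hat w\|_{\Ltwo}$.

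\textbf{Main obstacle: the hard IQC needs $\tilde q\in\Ltwo$ on finite horizons.} The Zames--Falb lemma is stated for $q\in\Ltwo$, and with a time-varying $\hat w$ the optimum $(u^\star,y^\star)$ is defined only relative to the nominal constant $w$. I would handle this in two parts.

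First, the centering keeps $\tilde p(\tilde q=0)=0$, so the shifted oracle is a static map vanishing at the origin and the finite-horizon (hard) form \eqref{eq:zf-iqc:ineq} applies. This holds because Zames--Falb IQCs with $\|H\|_1\le1$ and $\Psi$ factored as in Lemma~\ref{lem:zames-falb} hold for truncated signals, and truncations of locally square-integrable signals are in $\Ltwo$.

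Second, the $\epsilon$ margin in Step 1 gives $\int_0^T\|\tilde p\|^2+\|\xi\|^2 \le \epsilon^{-1}\gamma^2\|\hat w\|^2_{\Ltwo}$ uniformly in $T$. This bounds the internal signals and confirms that all closed-loop signals lie in $\Ltwo$.

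Finally, well-posedness of the loop (existence of solutions) follows from Lipschitz continuity of the gradients under Assumption~\ref{assum:cost}.
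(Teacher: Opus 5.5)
Your argument is correct and matches the standard dissipation proof the paper relies on. The paper gives no separate proof here (it cites Veenman et al.), but its proof of Proposition~\ref{prop:stability-general} in Appendix~\ref{appendix:a} uses the same steps: quadratic storage function, multiplying the LMI by the stacked signal vector, integrating, and invoking the hard Zames--Falb IQC, with your supply rate $\gamma^2\|\hat w\|^2-\|z\|^2$ added. Your remarks on truncation (via causality of $\Psi_{uy}$ and $\tilde p=0$ at $\tilde q=0$) and on well-posedness via global Lipschitz continuity make explicit details the paper leaves implicit.
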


Since $\gamma^2$ enters~\eqref{eq:LMI-perf} affinely, minimizing over $\gamma^2$ subject to the LMI constraint is a semidefinite program. Alternative performance measures, such as $\mathcal{H}_2$-performance, energy-to-peak, or other invariance specifications can be imposed analogously, with corresponding LMI conditions readily available (e.g.~\cite{veenmanAnalysis,schererLMIs,schererGahinet}). Similarly, other choices of performance output $z$ may be considered. E.g., a choice of \mbox{$z = \mathrm{col}(\tilde\eta, \rho \tilde u)$} for some $\rho > 0$ places an additional cost on the control effort, which may lead to a more balanced design. Adjusting the plant~\eqref{eq:genplant-perf} accordingly is straightforward.

For the special case of a GFC \mbox{$K(s) \equiv -\varepsilon I$}, Proposition~\ref{prop:perf-analysis} provides a principled way to select~$\varepsilon$ beyond simply maximizing it. As an illustrating example, Fig.~\ref{fig:l2gain_timescale} shows examples of performance gains obtained for a sweep over $\varepsilon$ for the example system~\eqref{eq:example-sys}, including the $\mathcal{L}_2$, $\mathcal{H}_2$ and energy-to-peak-gain. Fig.~\ref{fig:l2gain_timescale} reveals several clear sweet spots between small and large values of $\varepsilon$. The minimum of these curves identify an algorithm gain that is optimal in the sense of the desired performance specification.

\begin{figure}
\centering
\input{plots/l2gain_timescale.tex}
\caption{Illustration of different gains on the performance channel $\hat w \to \tilde{\eta}$ for the example system \eqref{eq:example-sys} and a GFC \eqref{eq:gradient-flow}, as a function of $\varepsilon$.}
\label{fig:l2gain_timescale}
\end{figure}

\subsection{Synthesis of Dynamic Feedback Optimizers}\label{subsec:synthesis}

Rather than tuning a pre-fixed controller, the LMI formulation of Proposition~\ref{prop:perf-analysis} allows to treat the controller itself as a decision variable and synthesize a dynamic output-feedback controller that is tailored to the plant and optimality model. To this end, we seek a controller of the form
\begin{equation*}\label{eq:dynamic-K}
  K(s) \doteq
  \left[\begin{array}{c|c} A_K & B_K \\ \hline C_K & D_K \end{array}\right]
\end{equation*}
that minimizes the $\Ltwo$-gain~$\gamma$ jointly with all other decision variables.

A structural prerequisite for synthesis is that the nominal generalized plant, that is,~\eqref{eq:genplant-perf} without the oracle and performance channel, is stabilizable and detectable. Since $(A,B)$ and $(A,C)$ are stabilizable and detectable, respectively, the nominal generalized plant is stabilizable when $m_u \neq 0$, and detectable whenever the optimality model dynamics $\dot{\eta}$ is observable through $y_c$. This includes, but is not limited to~\mbox{$y_c = \eta$}. 

The synthesis problem itself follows standard procedures as in~\cite{veenman}. The controller matrices enter~\eqref{eq:LMI-perf} nonlinearly, but the standard linearizing change of variables~\cite{schererGahinet} restores convexity when either the multipliers or the controller are held fixed. An alternating optimization over the two convex problems, cf. Algorithm~\ref{alg:synthesis}, then guarantees a nonstrict decrease of $\gamma$ in every iteration.

\begin{algorithm}[H]
\caption{IQC-based controller synthesis}\label{alg:synthesis}
\begin{algorithmic}[1]
\REQUIRE Initial filters $H_u^{(0)}$, $H_y^{(0)}$, initial controller $K^{(0)}$ with performance $\gamma^{(0)}$ (e.g. from $H_\infty$-synthesis).
\WHILE{$|\gamma^{(k)} - \gamma^{(k-1)}| > \mathrm{tol}$}
  \STATE \textbf{Synthesis step.} Fix $H_u = H_u^{(k)}$, $H_y = H_y^{(k)}$, solve
    \eqref{eq:LMI-perf} over
    $(K,\,\mathcal{X},\,\gamma)$ via the change of
    variables in~\cite{schererGahinet} to minimize $\gamma$.
  \STATE \textbf{Analysis step.} Fix $K = K^{(k)}$, solve~\eqref{eq:LMI-perf}
    over $(\mathcal{X},\, H_u, H_y, \,\gamma)$ to minimize $\gamma$.
  
  \STATE $k \gets k+1$
\ENDWHILE
\end{algorithmic}
\end{algorithm}

Some features of the synthesis are worth noting. When $A$ is not Hurwitz, no pre-stabilization is needed, as the synthesis optimizes the $\Ltwo$-gain while automatically ensuring closed loop stability by a finite $\gamma$. The measurements available to the controller is determined by the output~$y_c$ of the generalized plant and can be adjusted freely. For instance, defining $y_c = \mathrm{col}(\eta, y)$ gives the controller direct access to the plant output as well, recovering the output-feedback architectures of~\cite{lawrence2018,lawrence2021}.



\subsection{Robust Design Under Model Uncertainty}\label{sec:uncertainty}

Both the analysis and the synthesis developed above require full model knowledge $(A,B,C,D,B_w,D_w)$. In contrast, related literature analyzing the gradient-flow controller requires only the steady-state sensitivity $\Piyu$. In practice, neither scenario is ideal, as perfect model knowledge is unrealistic, yet one most often has more insight than just the DC gain. 

The developed framework offers a systematic way to handle this. Any plant uncertainty description that preserves the steady-state map and admits an IQC characterization can be accommodated in an additional uncertainty channel. Proposition~\ref{prop:perf-analysis} and Algorithm~\ref{alg:synthesis} then apply without modification, and the synthesized controller is guaranteed to achieve the certified performance-gain for every plant consistent with the uncertainty description.

As an illustrative example that is particularly relevant to feedback optimization, consider the two-timescale plant
\begin{align}\label{eq:two-timescale}
  \begin{aligned}
  \bmat{\dot x_1 \\ \epsilon\,\dot x_2}
  &= \bmat{A_{11} & A_{12} \\ A_{21} & A_{22}} \bmat{x_1 \\ x_2}
  + \bmat{B_1 \\ B_2}u
  + \bmat{B_{w1} \\ B_{w2}}w \\
  y &= \bmat{C_1 & C_2}x + Du + D_w w
  \end{aligned}
\end{align}
as already studied in~\cite{yousefi}, which is motivated by well understood \emph{slow} modes~$x_1$ and residual fast modes~$x_2$. Setting \mbox{$\epsilon=0$} and resolving the algebraic constraint, a reduced model with transfer matrix $G_r(s)$ is obtained, with equal DC-gain and model mismatch at higher frequencies. Note that the case where only the steady-state sensitivity is available is covered by discarding the $x_1$-dynamics and setting \mbox{$G_r(s) \equiv \bmat{\Piyu & \Piyw}$}. The true plant $G$, mapping \mbox{$\mathrm{col}(u,w) \mapsto y$}, is then written as
\begin{equation}\label{eq:uncertainty-model}
  G(s) = G_{\mathrm{r}}(s) + W(s)\,\Delta(s),
\end{equation}
where $\|\Delta\|_\infty \leq 1$ is an unstructured LTI uncertainty and $W$ is a dynamic scalar weight whose magnitude upper bounds that of a high-pass filter with cut-off frequencies determined by the eigenvalues of $\tfrac{1}{\epsilon} A_{22}$ and feedthrough gain $\bar{\sigma}(C_2 A_{22}^{-1}B_2)$~\cite{skogestad}. The generalized plant is then set up by replacing the nominal dynamics by $y \!=\! G_r[\, \mathrm{col}(u,w)\,] \!+\! W[\, p_\Delta \,]$, where $p_\Delta$ is the input of an additional uncertainty channel \mbox{$p_\Delta = \Delta[ \,q_\Delta\, ]$}, \mbox{$q_\Delta = \mathrm{col}(u,w)$}. A valid IQC description for $\Delta$ is 
\begin{align}\label{eq:iqc-uncertainty}
  \begin{aligned}
  &\int_0^T \!
  \psi(t)^\top 
  \bmat{X \otimes I_{n_u+n_w} & 0 \\ 0 & -X \otimes I_{n_y}} 
  \psi(t) \,
  \mathrm{d}t \geq 0 \\[1.2ex]
  & \,\, \psi = \mathrm{diag}(\Psi_\Delta \otimes I_{n_u+n_w}, \Psi_\Delta \otimes I_{n_y})\, \left[\, \mathrm{col}(q_\Delta, p_{\Delta}) \, \right]
  \end{aligned}
\end{align}
where $\Psi_\Delta$ is any dynamic filter and $X = X^\top$ a free matrix matching the dimension of $\Psi_\Delta$, cf. \cite{veenmanAnalysis} for further details. The remaining analysis and synthesis proceeds as before with the modified generalized plant.

%% file: plots/l2gain_timescale.tex
\definecolor{mycolor1}{rgb}{0.00000,0.44700,0.74100}%
\definecolor{plotBlue}{rgb}{0.12157, 0.46667, 0.70588}
\definecolor{plotOrange}{rgb}{0.85098, 0.37255, 0.00784}
\definecolor{plotGreen}{rgb}{0.17255, 0.62745, 0.17255}

\definecolor{mycolor1}{RGB}{204, 51, 63} 
\definecolor{mycolor2}{RGB}{68, 12, 84}
\definecolor{mycolor3}{RGB}{53, 95, 141}
\definecolor{mycolor4}{RGB}{34, 168, 132}
\definecolor{mycolor5}{RGB}{122, 210, 81}

\begin{tikzpicture}

\begin{axis}[%
width=\columnwidth - 1.4in,
height=1.0in,
at={(0.758in,0.481in)},
scale only axis,
xmin=0,
xmax=0.135,
xtick={0, 0.04, 0.08, 0.12},
xticklabels={0, , 0.08, 0.12},
xlabel style={font=\color{white!15!black}, yshift=-1mm},
xlabel={$\varepsilon$},
ymode=log,
clip mode=individual,
ymin=10,
ymax=500,
yminorticks=true,
ylabel style={font=\color{white!15!black}, yshift=-1mm},
ylabel={$\gamma$},
axis background/.style={fill=white},
xmajorgrids,
ymajorgrids,
yminorgrids,
legend style={
    at={(1.01, 1)},
    anchor=north west,
    font=\footnotesize,
    draw=gray!80,
    fill=white,
    nodes={scale=0.8, transform shape}
},
legend cell align={left},
]

\addplot [color=mycolor1, thick, label={l1gain}]
  table[row sep=crcr]{%
0.001 475.821130060237\\
0.00506122448979592 97.2515232352423\\
0.00912244897959184 56.0105230379889\\
0.0131836734693878  40.384921869225\\
0.0172448979591837  32.3016620236275\\
0.0213061224489796  27.4697084974515\\
0.0253673469387755  24.3481040238191\\
0.0294285714285714  22.2489490815093\\
0.0334897959183674  20.8211787624234\\
0.0375510204081633  19.9056315216162\\
0.0416122448979592  19.4287297470922\\
0.0456734693877551  19.3262665805424\\
0.049734693877551   19.5685304229954\\
0.0537959183673469  20.1565132924916\\
0.0578571428571429  20.8577935508185\\
0.0619183673469388  22.5508343244327\\
0.0659795918367347  24.5771969073576\\
0.0700408163265306  27.4426408938981\\
0.0741020408163265  31.5875710340477\\
0.0781632653061225  37.8877052081439\\
0.0822244897959184  48.3319553121922\\
0.0862857142857143  68.5652454856093\\
0.0903469387755102  123.3017480035\\
0.0944081632653061  763.265555690101\\
};
\addlegendentry{$\mathcal{L}_2$-gain}

\addplot [color=mycolor2, thick]
  table[row sep=crcr]{%
0.001 470.821130060237\\
0.00506122448979592 81.551404354443\\
0.00621052631578947 67.9914858885194\\
0.00912244897959184 49.0685064339666\\
0.0131836734693878  36.8507538562777\\
0.0172448979591837  30.188646473689\\
0.0213061224489796  25.9678444893785\\
0.0253673469387755  23.0276375143649\\
0.0294285714285714  20.8488454641038\\
0.0334897959183674  19.2468673270011\\
0.0375510204081633  17.9389564799233\\
0.0416122448979592  16.9718251081718\\
0.0456734693877551  16.2332365792254\\
0.049734693877551   15.7047924558803\\
0.0537959183673469  15.3759315727931\\
0.0578571428571429  15.2315467527766\\
0.0619183673469388  15.2795245044492\\
0.0659795918367347  15.5441513277636\\
0.0700408163265306  16.0695962293535\\
0.0741020408163265  16.9258117392515\\
0.0781632653061225  18.2515378780504\\
0.0822244897959184  20.3451364086714\\
0.0862857142857143  23.9617686656506\\
0.0903469387755102  31.7968127427307\\
0.0944081632653061  78.2256265854026\\
0.096081632653061   780.2256265854026\\
};
\addlegendentry{$\mathcal{H}_2$-gain}

\addplot [color=mycolor4, thick]
  table[row sep=crcr]{%
0.001               2614.3233100024\\
0.00506122448979592 537.715385493621\\
0.00912244897959184 311.35800505256\\
0.0131836734693878  225.45688619278\\
0.0172448979591837  180.88014488787\\
0.0213061224489796  154.092113383859\\
0.0253673469387755  136.661626431771\\
0.0294285714285714  124.858867058199\\
0.0334897959183674  116.792925818141\\
0.0375510204081633  111.440889243621\\
0.0416122448979592  108.245053232502\\
0.0456734693877551  106.935764236797\\
0.049734693877551   107.456510737086\\
0.0537959183673469  109.950535118773\\
0.0578571428571429  114.799332203173\\
0.0619183673469388  122.734985039178\\
0.0659795918367347  135.094402088453\\
0.0700408163265306  154.496404867119\\
0.0741020408163265  186.368017774249\\
0.0781632653061225  243.738635024998\\
0.0822244897959184  366.498577665961\\
0.0862857142857143  747.35776043328\\
0.0903469387755102  5340.78157941779\\
};
\addlegendentry{energy-to-peak\,\,}

\addplot [color=mycolor3, dashed, thick] coordinates {(0.098, 10) (0.098, 1000)};
\node[anchor=south west] at (axis cs:0.096, 10) {\textcolor{mycolor3}{\small $\varepsilon^\star_{\mathrm{ZF}}$}};

\addplot [color=orange, dashed, thick] coordinates {(0.114, 10) (0.114, 1000)};
\node[anchor=south east] at (axis cs:0.137, 10) {\textcolor{orange}{\small $\varepsilon_{\max}$}};


\addplot [color=mycolor1, dotted, thick]
coordinates {(0.0446734693877551,10) (0.0446734693877551,19.3262665805424)};

\addplot [color=mycolor2, dotted, thick]
coordinates {(0.0578571428571429,10) (0.0578571428571429,15.2315467527766)};

\addplot [color=mycolor4, dotted, thick]
coordinates {(0.0466734693877551,10) (0.0466734693877551,106.935764236797)};

\node[
    anchor=north,
    text=mycolor1,
    yshift=0pt
] at (axis cs:0.0456734693877551,12)
{$\varepsilon^\star_{\mathcal{L}_2}$};

\node[
    anchor=north,
    text=mycolor2
] at (axis cs:0.0608571428571429,12)
{$\varepsilon^\star_{\mathcal{H}_2}$};

\node[
    anchor=north,
    text=mycolor4,
    yshift=28pt
] at (axis cs:0.0556734693877551,12)
{$\varepsilon^\star_{\mathrm{e2p}}$};

\end{axis}
\end{tikzpicture}

%% file: sections/constraints_.tex
\section{Dynamic Projected Primal Dual Controllers}\label{sec:constraints}

The framework developed so far has addressed unconstrained OSS problems. In many applications, however, constraints are imposed.
We now consider the constrained OSS problem
\begin{subequations}\label{eq:oss-constrained}
\begin{align}
    \label{eq:oss-constrained:a}
    \min_{u, y}\quad &\Phi_1(u) + \Phi_2(y) \\
    \text{s.t.}\quad &y = \Piyu u + \Piyw w \\
    & E u + F y = 0 \label{eq:linear-constraint}\\
    & u \in \mathcal{U} \label{eq:input-constraint}
\end{align}
\end{subequations}
with a closed convex input constraint set $\mathcal{U} \subset \mathbb{R}^{n_u}$ and linear equality constraints~\eqref{eq:linear-constraint} defined by \mbox{$E\in\mathbb{R}^{n_c \times n_u}$}, \mbox{$F\in\mathbb{R}^{n_c \times n_y}$}. We require the following assumption.
\begin{assumption}\label{assum:constraints}
    The matrix $N_1 \coloneq E + F \Pi_{yu}$ has full row rank, and $F$ is such that $(A,FC)$ is detectable. The feasible set of \eqref{eq:oss-constrained} is non-empty.
\end{assumption}

These regularity conditions are mild, and impose that \mbox{$n_c \leq n_u$}, i.e., there are no more equality constraints than control inputs, and that all unstable plant modes are observable through the constraint.

A previously proposed algorithm to solve the respective constrained OSS control problem is the projected primal-dual gradient flow controller (PPD-GFC) \cite{bianchinPrimalDual}
\begin{subequations}
\label{eq:primal-dual-gradient-flow}
\begin{align}
\label{eq:primal-dual-gradient-flow:a}
\dot{u} &\!=\! \varepsilon \biggl( \! - u \!+\! \mathcal{P}_{\mathcal{U}}\bigl( u \!-\! \alpha(\Phi_1(u)  \!+\! \Pi_{yu}^\top \nabla \Phi_2(y) \! + \! N_1^\top \lambda) \bigr) \!\! \biggr) \\
\label{eq:primal-dual-gradient-flow:b}
\dot{\lambda} &= \beta (E u + F y),
\end{align}
\end{subequations}
where $\alpha, \beta > 0$ are tuning parameters and $\varepsilon$ again induces a timescale separation with respect to the plant. Eq. \eqref{eq:primal-dual-gradient-flow:a} is a continuous vector field that renders $u$ smooth and $\mathcal{U}$ forward invariant. Moreover, it guarantees asymptotic satisfaction of the equality constraint \eqref{eq:linear-constraint}, if suitably tuned.
In the following, we show how to derive an optimality model for~\eqref{eq:oss-constrained}, how to embed it into a generalized plant, and how to synthesize dynamic projected primal dual (Dyn-PPD) controllers that generalize the PPD-GFC~\eqref{eq:primal-dual-gradient-flow}.

\subsection{Optimality Models}

Note that we can write~\eqref{eq:oss-constrained} equivalently in composite optimization form $\displaystyle\min  f_1(u) + f_2(N_1 u)$ with
\begin{align}
    &f_1(u) \coloneq \Phi_1(u) + \Phi_2(\Piyu u + \Piyw w) + \mathcal{I}_{\mathcal{U}}(u) \notag\\
    &f_2(z) \coloneq \mathcal{I}_{\{-F\Piyw w\}}(z),
\end{align}
where the steady-state constraint is eliminated and \eqref{eq:linear-constraint}--\eqref{eq:input-constraint} are made explicit through indicator functions. 
By Assumption~\ref{assum:constraints}, constraint qualification holds and Fenchel--Rockafellar duality can be applied. By \cite[Thm.~19.1]{bauschke}, $u^\star$ is optimal if and only if there exists $\lambda^\star \in \mathbb{R}^{n_c}$ such that
\begin{equation}\label{eq:fenchel}
    -N_1\T \lambda^\star \in \partial f_1(u^\star), \quad \lambda^\star \in \partial f_2(N_1 u^\star).
\end{equation}
Recall that \mbox{$\partial I_{\mathcal{U}} = \mathcal{N}_\mathcal{U}$}, and moreover,
\vspace{-2mm}
\begin{equation*}
    \partial \mathcal{I}_{\{-F\Piyw w\}}\!(z) \!=\! \mathcal{N}_{\{-F\Piyw w\}}\!(z) \!=\! \begin{cases}
        \mathbb{R}^{n_c} & \!\!\!\!\!, z = -F\Piyw w,\\
        \emptyset & \!\!\!\!\!, z \neq -F\Piyw w,
    \end{cases}
\end{equation*}
implying that $\lambda^\star$ is free if \mbox{$N_1 u^\star = - F\Piyw w$} holds. Note that $N_1 u^\star + F\Piyw w = E u^\star + F y^\star$. Hence, \eqref{eq:fenchel} is equivalent to
\begin{subequations}\label{eq:kkt}
\begin{align}
    \label{eq:kkt:primal}
    - N_1\T \lambda^\star &\in \nabla\Phi_1(u^\star) + \Piyu\T \nabla\Phi_2(y^\star) + \mathcal{N}_{\mathcal{U}}(u^\star)\\
    \label{eq:kkt:dual}
    0 &= E u^\star + F y^\star.
\end{align}
\end{subequations}

We use \eqref{eq:kkt} to construct an optimality model. Consider the fictitious set-valued dynamics
\begin{align}
    \dot\eta &\!=\! \alpha(\nabla\Phi_1(u) \!+\! \Piyu^\top \, \nabla\Phi_2(y) \!+\! N_1^\top \lambda + p_3), \;\, p_3 \!\in\! \mathcal{N}_{\mathcal{U}}(u) \notag \\
    \label{eq:internalmodel-constraints}
    \dot\lambda &= \beta( E u + F y ),
\end{align}
where $\alpha,\beta > 0$ are tuning parameters. At equilibrium, the first-order optimality conditions~\eqref{eq:kkt} hold. Stabilization of~\eqref{eq:internalmodel-constraints} subject to the plant dynamics is therefore sufficient for optimal steady-state operation. 

Being the subdifferential of a convex function, $\mathcal{N}_{\mathcal{U}}$ is a monotone operator and hence slope-restricted in the sector $[0,\infty]$; its input--output map therefore satisfies a Zames--Falb IQC \cite{safonov}. Although $\mathcal{N}_{\mathcal{U}}$ is set-valued and potentially unbounded, the dissipativity-based analysis of the previous sections remains valid \cite{scherer_dissipativity}, and carries over verbatim to a third oracle channel satisfying such an IQC. We accordingly extend the generalized plant with this channel and include~\eqref{eq:internalmodel-constraints}.

Consider moreover the alternative optimality model
\begin{align}
    \dot\eta &\!=\! \alpha(\nabla\Phi_1(u) \!+\! \Piyu^\top \, \nabla\Phi_2(y) \!+\! N_1^\top \lambda + p_3), \;\, p_3 \!\in\! \mathcal{N}_{\mathcal{U}}(u \!-\! \dot{\eta}) \notag \\
    \label{eq:internalmodel-constraints-deta}
    \dot\lambda &= \beta( E u + F y ),
\end{align}
where the only difference is in the evaluation of the normal cone.
Clearly, $\dot{\eta}=0$, $\dot{\lambda}=0$ also implies the first-order optimality condition~\eqref{eq:kkt}. While this choice might seem counterintuitive at first, as it makes $p_3 \in \mathcal{N}_{\mathcal{U}}(u - \dot{\eta})$ an implicit evaluation, \eqref{eq:internalmodel-constraints} and \eqref{eq:internalmodel-constraints-deta} will allow us to derive two distinct controller structures each with different properties. 

\subsection{Dyn-PPD Controller with Smooth Control Action}

Let $(u^\star, y^\star, \lambda^\star)$ denote a fixed-point that solves \eqref{eq:kkt} and $x^\star$ the corresponding optimal state of~\eqref{eq:plant}. As before, define centered coordinates $\tilde u$, $\tilde y$, $\tilde x$, $\tilde \eta$, and \mbox{$\tilde \lambda \coloneq \lambda - \lambda^\star$}. Let again \mbox{$p_1 ^\star = \nabla \Phi_1(u^\star)$}, \mbox{$p_2^\star = \nabla \Phi_2(y^\star)$} and define \mbox{$p_3^\star \coloneq - p_1^\star - \Piyu\T p_2^\star - N_1^\top \lambda^\star$}.
Considering the optimality model~\eqref{eq:internalmodel-constraints-deta},
we set up the generalized plant
\begin{align}\label{eq:genplant-constraints}
\setlength{\arraycolsep}{1.3pt}
\left[\!
\begin{array}{c}
\dot{\tilde x} \\ \dot{\tilde\eta} \\ \dot{\tilde\lambda} \\ \hline \tilde q_1 \\ \tilde q_2 \\ \tilde q_3 \\
\hdashline \tilde y_c
\end{array}\!
\right]
\!\!&=\!\!
\setlength{\arraycolsep}{1.3pt}
\left[\!
\begin{array}{ccc|ccc:c}
A & 0 & 0 & 0 & 0 & 0  & B \\
0 & 0 & \alpha N_1^\top & \alpha I_{n_u} & \alpha \Piyu^\top & \alpha I_{n_u} & N_3 \\
\beta FC & 0 & 0 & 0 & 0 & 0 & N_2 \\ \hline
0 & 0 & 0 & 0 & 0 & 0 & I_{n_u} \\
C & 0 & 0 & 0 & 0 & 0 & D \\
0 & 0 & -\alpha N_1^\top & -\alpha I_{n_u} & -\alpha \Piyu^\top & -\alpha I_{n_u} & N_4 \\ \hdashline
0 & I_{n_u} & 0 & 0 & 0 & 0 & 0
\end{array}
\right]\!\!
\setlength{\arraycolsep}{1.3pt}
\left[\!
\begin{array}{c}
\tilde x \\ \tilde\eta \\ \tilde\lambda \\ \hline \tilde p_1 \\ \tilde p_2 \\ \tilde p_3 \\ \hdashline \tilde u
\end{array}\!
\right] 
\notag \\
\tilde p_1 &= \nabla \Phi_1(\tilde q_1 + u^\star) - m_u \tilde q_1 - p_1^\star \notag \\
\tilde p_2 &= \nabla \Phi_2(\tilde q_2 + y^\star) - p_2^\star  \\
\tilde p_3 &= p_3 - p_3^\star, \quad p_3 \in \mathcal{N}_\mathcal{U}(\tilde{q}_3 + u^\star) \notag \\
\tilde{u} &= K [\, \tilde{y}_c \,], \notag
\end{align}
with $N_2 \coloneq \beta(E + FD)$, $N_3 \coloneq \alpha m_u I_{n_u}$ and \mbox{$N_4 \coloneq I_{n_u} - N_3$}. The structure of~\eqref{eq:genplant-constraints} mirrors that of the unconstrained plant~\eqref{eq:genplant-stab}, with the dual state ${\lambda}$ and the normal-cone channel $(\tilde p_3,\tilde q_3)$ in addition. For clarity, the performance channel was omitted, but can be added analogously.

It is straightforward to verify that the nominal generalized plant, i.e., \eqref{eq:genplant-constraints} without the oracle channel, is stabilizable and detectable if Assumption~\ref{assum:sys} and~\ref{assum:constraints} hold. We can proceed as before and use~\eqref{eq:genplant-constraints} to systematically synthesize a robust full-order dynamic controller $K$ with guaranteed performance level. 
Despite the set-valued and implicit $\eta$-dynamics, we can show that  $\tilde{u} \!=\! K [\,\tilde y_c\,] \!=\! K [\,\tilde \eta\,]$ admits an explicit single-valued implementation.
\begin{proposition}[Dyn-PPD controller I]\label{prop:controller-for-constrained}
    Let $K$ be a stabilizing controller for~\eqref{eq:genplant-constraints}. Then the controller has the following explicit implementation
    \begin{subequations}\label{eq:controller-for-constrained}
    \begin{align}
        \label{eq:controller-for-constrained:eta}
        \dot{\eta} &=  u \!-\! \mathcal{P}_{\mathcal{U}}( u \!-\! \alpha(\nabla \Phi_1(u)  \!+\! \Pi_{yu}^\top \nabla \Phi_2(y) \! + \! N_1^\top \lambda) ) \\
        u &= K [ \, \eta \, ].
    \end{align}
    \end{subequations}
\end{proposition}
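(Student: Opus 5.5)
The plan is to resolve the implicit inclusion in the optimality model~\eqref{eq:internalmodel-constraints-deta} pointwise in time using the prox/resolvent characterization of the projection. Fix $t$ and abbreviate $g \coloneq \nabla\Phi_1(u) + \Piyu\T\nabla\Phi_2(y) + N_1\T\lambda$. The generalized plant~\eqref{eq:genplant-constraints} with $\tilde y_c = \tilde\eta$ encodes, in original coordinates, the relations
\begin{equation*}
\dot\eta = \alpha\,(g + p_3), \qquad p_3 \in \mathcal{N}_{\mathcal{U}}(q_3), \qquad q_3 = u - \dot\eta,
\end{equation*}
with $u = K[\,\eta\,]$. (The row for $\tilde q_3$ is $\tilde u - \dot{\tilde\eta}$, once $N_4 = I - N_3$ is accounted for so that the $m_u$-shift cancels.) The first step is therefore a bookkeeping check. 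Add back the equilibrium offsets $u^\star$, $p_1^\star$, $p_2^\star$, $p_3^\star$ and the shift $m_u\tilde q_1$. Using $p_3^\star = -p_1^\star - \Piyu\T p_2^\star - N_1\T\lambda^\star$ together with $\dot\eta^\star = 0$, one confirms that the centered equations are exactly the relations displayed above.

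The second step eliminates $p_3$. Set $v \coloneq u - \dot\eta$. The relations then read $u - v = \alpha(g + p_3)$ with $p_3 \in \mathcal{N}_{\mathcal{U}}(v)$. Equivalently,
\begin{equation*}
\bigl(u - \alpha g\bigr) - v \in \alpha\,\mathcal{N}_{\mathcal{U}}(v) = \mathcal{N}_{\mathcal{U}}(v),
\end{equation*}
since the normal cone is a cone. This inclusion is exactly the optimality condition of $\min_{z\in\mathcal{U}}\tfrac12\|z-(u-\alpha g)\|^2$. Because $\mathcal{U}$ is closed and convex, the resolvent $(I+\mathcal{N}_{\mathcal{U}})^{-1} = \mathcal{P}_{\mathcal{U}}$ is single-valued and everywhere defined. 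Hence $v = \mathcal{P}_{\mathcal{U}}(u - \alpha g)$ is the unique solution, and $\dot\eta = u - \mathcal{P}_{\mathcal{U}}(u-\alpha g)$, which is~\eqref{eq:controller-for-constrained:eta}. Conversely, this explicit $\dot\eta$ recovers a valid selection $p_3 = \tfrac1\alpha(u - \mathcal{P}_{\mathcal{U}}(u-\alpha g)) - g \in \mathcal{N}_{\mathcal{U}}(v)$. So the two descriptions generate the same trajectories, and the controller $u = K[\,\eta\,]$ is unchanged.

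I expect the main obstacle to be well-posedness of the loop rather than the algebra. The implicit relation couples $\dot\eta$ to $u$, and if $K$ had direct feedthrough ($D_K\neq0$), $u$ would depend on $\eta$ but not on $\dot\eta$, so there is no algebraic loop at the level of $\dot\eta$. I would argue this explicitly: $u(t)$ is determined by the controller state and $\eta(t)$; $y(t)$ by $x(t)$ and $u(t)$; and $\lambda(t)$ is a state. Therefore $g(t)$ is known before $\dot\eta$ is computed, and the resolvent formula is genuinely explicit. Finally, $\mathcal{P}_{\mathcal{U}}$ is nonexpansive and the gradients are Lipschitz, so the resulting vector field is Lipschitz and the implementation is a standard ODE with unique solutions.
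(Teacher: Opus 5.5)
Your proposal is correct and follows the paper's proof. Both arguments resolve the implicit inclusion $p_3 \in \mathcal{N}_{\mathcal{U}}(u-\dot\eta)$ pointwise in time through the resolvent identity $\mathcal{P}_{\mathcal{U}} = (\mathrm{Id}+\alpha\mathcal{N}_{\mathcal{U}})^{-1}$; your use of the cone property to absorb $\alpha$ is the same step. Your additional checks are also correct and simply make explicit what the paper leaves implicit: the centered plant rows reducing to $\dot\eta = \alpha(g+p_3)$ and $q_3 = u-\dot\eta$ via $N_4 = I-N_3$ and the definition of $p_3^\star$, the converse selection of $p_3$, and the absence of an algebraic loop.
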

\begin{proof}
    Shifting back from centered to original coordinates, we have that $\tilde{u} = K [\,\tilde y_c\,]$ is equivalent to $u = K [\,\eta\,]$. We use the resolvent equivalence for the projection operator \mbox{$\mathcal{P}_\mathcal{U}(x) = (\mathrm{Id} + \alpha \mathcal{N}_\mathcal{U})^{-1}(x)$}, where~$\mathrm{Id}$ is the identity operator and which holds for any \mbox{$\alpha>0$} \cite{bauschke}. Define
    \begin{equation}\label{eq:sigma}
        \sigma := u - \alpha \left( \nabla \Phi_1(u) + \Pi_{yu}^\top \nabla \Phi_2(y) + N_1^\top \lambda \right).
    \end{equation}
    From the definition of $\dot{\eta}$
    in~\eqref{eq:internalmodel-constraints-deta}, we then have
    \begin{align*}
        &  \quad u - \dot{\eta} \in \sigma
        - \alpha \mathcal{N}_{\mathcal{U}}(u - \dot{\eta})
        \\
        \Leftrightarrow & \quad (\mathrm{Id} + \alpha \mathcal{N}_\mathcal{U})(u - \dot{\eta}) \ni \sigma \\
        \Leftrightarrow & \quad u - \dot{\eta} = \mathcal{P}_\mathcal{U}(\sigma).
    \end{align*}
    Rearranging the latter yields \eqref{eq:controller-for-constrained:eta}.
\end{proof}

Note that the PPD-GFC \eqref{eq:primal-dual-gradient-flow} is recovered by \eqref{eq:genplant-constraints} and \eqref{eq:controller-for-constrained} with a P-controller $K(s) \equiv - \varepsilon I_{n_u}$.
We can therefore make use of the same results presented in Sections~\ref{sec:stability} and \ref{sec:performance} for a principled tuning of $\varepsilon$. Eq. \eqref{eq:controller-for-constrained} therefore serves as a \emph{dynamic} generalization of the projected primal dual method.

We emphasize that the right hand side of \eqref{eq:controller-for-constrained:eta} is a Lipschitz continuous vector field, cf. \cite{bianchinPrimalDual,gao}.
Eq.~\eqref{eq:controller-for-constrained} therefore results in a \emph{smooth} controller.
However, unlike the \mbox{PPD-GFC}~\eqref{eq:primal-dual-gradient-flow}, \eqref{eq:controller-for-constrained} will generally only lead to asymptotic satisfaction of the input constraint $u \in \mathcal{U}$. In the following, we therefore derive an additional control structure for hard input constraint satisfaction.

\subsection{Dyn-PPD Controller with Hard Constraint Satisfaction}

Let us consider a projected controller $u = \mathcal{P}_\mathcal{U}(v)$ with \mbox{$v = K[ \,\eta\, ]$}, for some auxiliary signal $v$. By construction, \mbox{$u(t) \in \mathcal{U}$} for all $t \geq 0$, regardless of $K$ and $\eta$. Note that we can write equivalently 
\begin{align*}
&u = \mathcal{P}_\mathcal{U}(v) = \arg \min_u \tfrac{1}{2 \alpha} \| u - v \|^2 + \mathcal{I}_{\mathcal{U}}(u) \\ & \Leftrightarrow \qquad 0 \in \tfrac{1}{\alpha}(u - v) + \mathcal{N}_{\mathcal{U}}(u),
\end{align*}
so that $u = v - \alpha p_3$, for some $p_3 \in \mathcal{N}_\mathcal{U}(u)$. Using the optimality model \eqref{eq:internalmodel-constraints}, and defining \mbox{$\tilde{v} = v - v^\star$}, \mbox{$v^\star \coloneq u^\star + \alpha p_3^\star$}, we can define the following alternative generalized plant
\begin{align}\label{eq:genplant-constraints:hard}
\setlength{\arraycolsep}{2pt}
\left[\!
\begin{array}{c}
\dot{\tilde x} \\ \dot{\tilde\eta} \\ \dot{\tilde\lambda} \\ \hline \tilde q_1 \\ \tilde q_2 \\ \tilde q_3 \\
\hdashline \tilde y_c
\end{array}\!
\right]
\!\!&=\!\!
\setlength{\arraycolsep}{2pt}
\left[\!
\begin{array}{ccc|ccc:c}
A & 0 & 0 & 0 & 0 & -\alpha B  & B \\
0 & 0 & \alpha N_1^\top & \alpha I_{n_u} & \alpha \Piyu^\top & \alpha N_4 & N_3 \\
\beta FC & 0 & 0 & 0 & 0 & -\alpha N_2 & N_2 \\ \hline
0 & 0 & 0 & 0 & 0 & -\alpha I_{n_u} & I_{n_u} \\
C & 0 & 0 & 0 & 0 & -\alpha D & D \\
0 & 0 & 0 & 0 & 0 & -\alpha I_{n_u} & I_{n_u} \\ \hdashline
0 & I_{n_u} & 0 & 0 & 0 & 0 & 0
\end{array}
\right]\!\!\!
\setlength{\arraycolsep}{2pt}
\left[\!
\begin{array}{c}
\tilde x \\ \tilde\eta \\ \tilde\lambda \\ \hline \tilde p_1 \\ \tilde p_2 \\ \tilde p_3 \\ \hdashline \tilde v
\end{array}\!
\right]\!
\notag \\
\tilde p_1 &= \nabla \Phi_1(\tilde q_1 + u^\star) - m_u \tilde q_1 - p_1^\star \notag \\
\tilde p_2 &= \nabla \Phi_2(\tilde q_2 + y^\star) - p_2^\star  \\
\tilde p_3 &= p_3 - p_3^\star, \quad p_3 \in \mathcal{N}_\mathcal{U}(\tilde{q}_3 + u^\star) \notag \\
\tilde{v} &= K [\, \tilde{y}_c \,]. \notag
\end{align}

Stabilizability and detectability holds as for~\eqref{eq:genplant-constraints}, and we can therefore synthesize a dynamic controller $K$ for \eqref{eq:genplant-constraints:hard}.
We verify that $\tilde{v} = K [\, \tilde{y}_c \,]$ admits again an explicit implementation.

\begin{proposition}[Dyn-PPD controller II]\label{prop:controller-for-constrained:hard}
    Let $K$ be a stabilizing controller for~\eqref{eq:genplant-constraints:hard}. Define the filter 
    \begin{equation}
        R(s) \coloneqq -K(s)\bigl(sI - K(s)\bigr)^{-1}.
    \end{equation}
    Then, the controller has the following explicit implementation
    \vspace{-5mm}
    \begin{subequations}\label{eq:controller-hard}
    \begin{align}
        v &= R\bigl[\, {u - \alpha(\nabla\Phi_1(u)
            + \Piyu^\top\nabla\Phi_2(y) + N_1^\top\lambda)} \,\bigr]
            \label{eq:controller-hard:v} \\
        u &= \mathcal{P}_{\mathcal{U}}(v). \label{eq:controller-hard:u}
    \end{align}
    \end{subequations}
    In particular, $R(s)$ is strictly proper and no algebraic loop arises in \eqref{eq:controller-hard}.
\end{proposition}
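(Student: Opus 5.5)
The approach mirrors the proof of Proposition~\ref{prop:controller-for-constrained}: return to original coordinates, eliminate the normal-cone signal $p_3$ using the projection identity $u=\mathcal{P}_\mathcal{U}(v)$, and then read off the transfer function from the new input signal to $v$.

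\textbf{Step 1 (original coordinates).} As before, $\tilde v = K[\,\tilde y_c\,]$ with $\tilde y_c=\tilde\eta$ is equivalent to $v = K[\,\eta\,]$ in original coordinates. By construction of \eqref{eq:genplant-constraints:hard}, the input satisfies $u=\mathcal{P}_\mathcal{U}(v)$. The optimality condition of the projection derived just before the statement gives $u = v-\alpha p_3$ with $p_3\in\mathcal{N}_\mathcal{U}(u)$. Hence $\alpha p_3 = v-u$, and this $p_3$ is exactly the one entering the optimality model \eqref{eq:internalmodel-constraints}.

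\textbf{Step 2 (eliminating $p_3$).} Define $\sigma$ as in \eqref{eq:sigma}, i.e.\ $\sigma = u-\alpha g$ with $g\coloneq\nabla\Phi_1(u)+\Piyu^\top\nabla\Phi_2(y)+N_1^\top\lambda$. Substituting $\alpha p_3 = v-u$ into \eqref{eq:internalmodel-constraints} gives
\begin{equation*}
\dot\eta = \alpha g + v - u = v - \sigma .
\end{equation*}
The controller is therefore the linear interconnection of an integrator $\dot\eta = v-\sigma$ with $v=K[\,\eta\,]$, driven by the external signal $\sigma$. All nonlinearity sits in how $\sigma$ and $u$ are generated from $(u,y,\lambda)$ and $v$.

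\textbf{Step 3 (transfer function).} In the Laplace domain with zero initial conditions, $s\eta = v-\sigma$ and $v = K(s)\eta$. This gives
\begin{equation*}
sv = K(s)(v-\sigma), \qquad (sI-K(s))v = -K(s)\sigma .
\end{equation*}
For each $s$, the matrices $K(s)$ and $sI-K(s)$ commute, so $(sI-K)^{-1}K = K(sI-K)^{-1}$. This yields $v = R[\,\sigma\,]$ with $R(s) = -K(s)(sI-K(s))^{-1}$, which is \eqref{eq:controller-hard:v}. Together with $u = \mathcal{P}_\mathcal{U}(v)$, this is \eqref{eq:controller-hard}.

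\textbf{Step 4 (well-posedness and strict properness).} Two points need care, and I expect them to be the main obstacle. First, $sI-K(s)$ must be invertible as a rational matrix; this holds because $K(s)\to D_K$ as $s\to\infty$, so $sI-K(s)$ is invertible for large $|s|$. Second, $R$ must be strictly proper. Rather than arguing through the transfer function, I would argue in state space. $R$ is the LFT of the integrator with $K$, with realization
\begin{equation*}
\dot\xi = A_K\xi + B_K\eta, \qquad \dot\eta = C_K\xi + D_K\eta - \sigma, \qquad v = C_K\xi + D_K\eta .
\end{equation*}
The output $v$ is a function of the states $(\xi,\eta)$ only, so the feedthrough from $\sigma$ to $v$ is zero and $R$ is strictly proper.

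Consequently, at each time $v$ is fixed by the controller state, and $u=\mathcal{P}_\mathcal{U}(v)$ is then evaluated explicitly. The signal $y = Cx+Du+D_w(w+\hat w)$ and then $\sigma$ follow explicitly from $u$ and the states, and $\sigma$ only enters the derivative $\dot\eta$. Hence no algebraic loop arises in \eqref{eq:controller-hard}.
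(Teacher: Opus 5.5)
Your proposal is correct and follows essentially the same route as the paper: substitute $\alpha p_3 = v-u$ into the optimality model to get $\dot\eta = v-\sigma$, then solve in the Laplace domain. The paper solves directly for $\hat\eta = -(sI-K(s))^{-1}\hat\sigma$, which avoids your commutation step, and it gets strict properness from $\lim_{s\to\infty}(sI-K(s))^{-1}=0$ rather than from your zero-feedthrough state-space realization of the integrator--$K$ interconnection; both arguments are valid, and yours additionally spells out the no-algebraic-loop claim.
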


\begin{proof}
    As before, $\tilde{v} = K [\,\tilde y_c\,]$ is equivalent to $v = K [\,\eta\,]$. Consider \eqref{eq:sigma} and let $\hat{\eta}, \hat{\sigma}, \hat{v}$ be the Laplace transforms of $\eta, \sigma, v$, respectively. Let $p_3 \in \mathcal{N}_{\mathcal{U}}(u)$ and insert $p_3 = \tfrac{1}{\alpha} (v - u)$ into the optimality model~\eqref{eq:internalmodel-constraints}, giving \mbox{$\dot{\eta} = - \sigma + v$}. In Laplace domain, using $\hat{v} = K(s) \hat{\eta}$, this gives
    \begin{align*}
        s\hat\eta = -\hat\sigma + K(s) \hat\eta \quad \Leftrightarrow \quad \hat \eta = -(sI - K(s))^{-1} \hat \sigma.
    \end{align*}
    Thus $\hat{v} = K(s)\hat\eta = R(s)\hat\sigma$,
    yielding~\eqref{eq:controller-hard:v}.
    Strict properness of $R(s)$ follows from properness of $K(s)$, which gives
    \mbox{$\lim\limits_{s \to \infty }(sI - K(s))^{-1} = 0$} and thus, also $\lim\limits_{s \to \infty }R(s) = 0$.
\end{proof}
\vspace{1mm}

Eq. \eqref{eq:controller-hard} is generally not smooth, but fulfils hard input constraint satisfaction by construction.
Inspecting \eqref{eq:controller-for-constrained} and \eqref{eq:controller-hard} more closely reveals that both controllers can be written as
\begin{align*}
    \eqref{eq:controller-for-constrained}: \qquad v &= \mathcal{P}_\mathcal{U}(\sigma), & u &=  R [ \, v \, ] \\
    \eqref{eq:controller-hard}: \qquad v &= R [ \,\sigma \, ], & u &= \mathcal{P}_\mathcal{U} (v),
\end{align*}
demonstrating that the difference between them is the placement of the filter $R(s)$ either before or after the projection. Both controllers trade off smoothness vs. hard constraint satisfaction. For the static gain $K(s) \equiv -\varepsilon I_{n_u}$, both reduce to a first-order low-pass $R(s) = \frac{\varepsilon}{s+\varepsilon} I$. 



We note that these were just two illustrations on how to design dynamic feedback controllers solving the constrained problem \eqref{eq:oss-constrained}. Other optimality models can also can be also considered (cf. \cite{lawrence2021,simpsonporco2022}), from which other controller structures can be derived. We leave such developments for future work.

%% file: sections/numerics.tex
\section{Numerical Examples}
\label{sec:numerics}

We showcase the application of the framework developed in the preceding sections. 
All implementations are publicly available\footnote{Code available at \mbox{https://github.com/col-tasas/2026-iqc-feedback-opt}}. 
For implementation of the IQC analysis and synthesis, we use the open-source toolbox IQClab~\cite{IQClab}, built on the Matlab Robust Control Toolbox.

\subsection{Academic Example}

\begin{figure}[t]
  \centering
  \begin{subfigure}[t]{0.45\columnwidth}
    \centering
    \input{plots/transient_static.tex}
    \caption{Controller (i), achieving $\gamma^\star = 8.85$.}
  \end{subfigure}
  \hfill
  \begin{subfigure}[t]{0.45\columnwidth}
    \centering
    \input{plots/transient_dyn_eta.tex}
    \caption{Controller (ii),  achieving \mbox{$\gamma^\star = 3.46$}.}
  \end{subfigure}

  \vspace{2mm}

  \begin{subfigure}[t]{0.45\columnwidth}
    \centering
    \input{plots/transient_dyn_eta_reduced.tex}
    \caption{Controller (iii), achieving \mbox{$\gamma^\star = 3.47$}.}
  \end{subfigure}
  \hfill
  \begin{subfigure}[t]{0.45\columnwidth}
    \centering
    \input{plots/transient_dyn_y_eta.tex}
    \caption{Controller (iv),  achieving $\gamma^\star = 2.67$.}
  \end{subfigure}
  \caption{Closed loop trajectories of system \eqref{eq:example-sys} and the controller (i)-(iv). Gray dashed lines mark the optimal $y^\star$ solving the OSS problem. The achieved $\Ltwo$-gain $\gamma^\star$ is reported in the caption.}
  \label{fig:synthesis-transient}
  \vspace{-4mm}
\end{figure}

We consider the following example system 
\begin{align}\label{eq:example-sys}
    \begin{aligned}
    A &= \bmat{-1 & -4 & -1 & 3 \\ 1 & -4 & -1 & -3 \\ -1 & 4 & -1 & -9 \\ 0 & 0 & 0 & -4}, & 
    B &= \bmat{0 \\ 1 \\ 0 \\ 1},\,\, &
    B_w &= \bmat{1 \\ 0 \\ 0 \\ 0}, \\
    C &= \bmat{1 & -1 & 0 & -4 \\ 1 & 0 & 2 & 0}, &
    D &= \bmat{0 \\ 0}, & D_w &= \bmat{0 \\ 0},
    \end{aligned}
\end{align}
which was introduced in \cite{lawrence2018}.

\subsubsection{Unconstrained OSS problem}
We first seek to solve the unconstrained OSS problem~\eqref{eq:oss}. We compare four different IQC-based feedback optimization controller designs. The controller structures we compare are:
\begin{itemize}
  \item[(i)] A GFC~\eqref{eq:gradient-flow}, where $\varepsilon$ is tuned with \eqref{eq:LMI-perf} to achieve a minimal $\Ltwo$-gain.
  \item[(ii)] A full-order dynamic controller $K(s)$ synthesized for the generalized plant \eqref{eq:genplant-perf}.
  \item[(iii)] A full-order dynamic controller $K(s)$ synthesized for a reduced plant $G_r(s)$. The reduced plant is obtained by a DC-gain preserving model-order reduction to order 2, and an additional LTI dynamic uncertainty is modelled within the generalized plant. 
  \item[(iv)] A full-order dynamic controller $K(s)$ synthesized for the generalized plant \eqref{eq:genplant-perf}, but with an enlarged measurement $y_c = \mathrm{col}(y, \eta)$.
\end{itemize}
All controllers are synthesized to minimize the $\Ltwo$-gain from $\hat w$ to $z=\mathrm{col}(\eta, 10 u)$. The synthesis is performed without exact knowledge of $\Phi_1$ and $\Phi_2$; only the convexity and smoothness moduli $m_u, L_u, L_y$ are assumed to be known. All controllers synthesized are therefore guaranteed to stabilize the closed loop for $\emph{any}$ strongly convex and smooth functions in the respective sectors. For simulation, we choose quadratic cost functions $\Phi_1(u) = \tfrac{1}{2}u^\top Q_u u$, $\Phi_2(y) = \tfrac{1}{2}y^\top Q_y y$ with $Q_u, Q_y$ chosen to have all eigenvalues in the intervals $[m_u, L_u]$ and $[0, L_y]$, respectively.


The resulting output trajectories are displayed in Fig.~\ref{fig:synthesis-transient}. We report the true $\Ltwo$-gain in the captions, computed as \mbox{$\gamma^\star = \| \mathrm{diag}(Q_u, Q_y) \star P_{\mathrm{perf}} \star K \|_\infty$}.
Observe that all plant outputs converge to the optimal $y^\star$ that solves the OSS problem. Moreover, compared to the GFC (i), all dynamic controllers offer a faster settling time, reduced overshoot and smaller $\Ltwo$-gains~$\gamma^\star$. The best performance is obtained by the full-order dynamic controller utilizing the extended measurement vector (iv), while the dynamic controller synthesized with the uncertain reduced-order system (iii) introduces only a marginal performance degradation relative to its full-knowledge counterpart (ii). 

\subsubsection{Constrained OSS problem}

We now consider the constrained OSS problem~\eqref{eq:oss-constrained}. We define the input constraint set $\mathcal{U} = \{ u \in \mathbb{R}\; | \; 0.2 \leq u \leq 1 \}$ and impose the linear equality constraint $E = 1$, \mbox{$F = \bmat{0 & -1}$}, enforcing $u = y_2$ at steady state. We synthesize both Dyn-PPD controllers~\eqref{eq:controller-for-constrained} and~\eqref{eq:controller-hard} from Section~\ref{sec:constraints}, with the $\Ltwo$-gain from $\hat w$ to $z=\mathrm{col}(\eta, \lambda, 10 u)$ as performance criterion. We choose $\alpha = 0.05$, $\beta=10$ in the optimality models.

The resulting trajectories of $u$ and $y$ are displayed in Fig.~\ref{fig:synthesis-constrained}. In both cases, the trajectories converge to the linear constraint manifold $u=y_2$. The Dyn-PPD controller~I leads to smooth control inputs, as indicated in the zoom within Fig.~\ref{fig:synthesis-constrained:a}, but leaves the input constraint set during the initial transient. In contrast, controller~II guarantees hard input constraint satisfaction by design, but leads to non-smooth control inputs, see Fig~\ref{fig:synthesis-constrained:b}. 
\begin{figure}[t]
  \centering
  \begin{subfigure}[t]{0.45\columnwidth}
    \centering
    \input{plots/transient_con_soft.tex}
    \caption{Dyn-PPD controller I \eqref{eq:controller-for-constrained} with smooth control inputs.}
    \label{fig:synthesis-constrained:a}
  \end{subfigure}
  \hfill
  \begin{subfigure}[t]{0.45\columnwidth}
    \centering
    \input{plots/transient_con_hard.tex}
    \caption{Dyn-PPD controller II \eqref{eq:controller-hard} with hard input constraint satisfaction.}
    \label{fig:synthesis-constrained:b}
  \end{subfigure}
  \caption{Closed loop trajectories of system \eqref{eq:example-sys} and synthesized Dyn-PPD controllers. Gray dashed lines mark the optimal \mbox{$(u^\star$, $y^\star)$} solving the constrained OSS problem.}
  \label{fig:synthesis-constrained}
\end{figure}
\vspace{-3mm}

\subsection{Satellite Rendezvous}

We adopt the satellite rendezvous example from \cite{chuySatellite}. 
The relative motion between a chaser and target satellite is described by the \mbox{Clohessy--Wiltshire} equations
\begin{align}
  \begin{aligned}\label{eq:satellite}
  \ddot{p}_x &= 3 \omega^2 p_x + 2 \omega \dot{p}_y + \tfrac{1}{m_c} F_x \\
  \ddot{p}_y &= - 2 \omega \dot{p}_x + \tfrac{1}{m_c} F_y + w \\
  \ddot{p}_z &= - \omega^2 p_z + \tfrac{1}{m_c} F_z,
  \end{aligned}
\end{align}
where $(p_x,p_y,p_z)$ is the relative position, $(F_x,F_y,F_z)$ are thruster forces, and $m_c, \omega$ are chaser mass and orbital velocity, respectively chosen as in~\cite{chuySatellite}. Moreover, $w$ models an unknown along-track disturbance. The control input is the thrust \mbox{$u=\mathrm{col}(F_x, F_y, F_z) \in \mathbb{R}^3$} and the measured output is the relative position \mbox{$y = \mathrm{col}(p_x, p_y, p_z) \in \mathbb{R}^3$}. The dynamics~\eqref{eq:satellite} can be written as~\eqref{eq:plant} with order $n_x=6$, where $A$ has all its eigenvalues on the imaginary axis.

We define the OSS problem with \mbox{$\Phi_1(u) = 0.01 \| u \|^2$} and \mbox{$\Phi_2(y) = \tfrac{1}{2} \| y - y_{\mathrm{ref}} \|^2$}, for a given reference position \mbox{$y_\mathrm{ref} \in \mathbb{R}^3$}. We impose an input constraint set \mbox{$\mathcal{U} = \{ u \in \mathbb{R}^3 \; | \; \max\left( |u_1|, |u_2|, |u_3| \right) \leq u_{\max} \}$} for some maximum force $u_{\max}$. We synthesize the \mbox{Dyn-PPD II} controller~\eqref{eq:controller-hard}, where the optimality model \eqref{eq:internalmodel-constraints}, the generalized plant \eqref{eq:genplant-constraints:hard}, and the controller implementation~\eqref{eq:controller-hard:v} is minimally adjusted in line with Appendix~\ref{appendix:generalization}. We minimize the $\Ltwo$-gain from $w$ to \mbox{$z=\mathrm{col}(y,10 \eta,u)$, with $\alpha=10^{-4}$}.

The resulting closed-loop trajectories are shown in Fig.~\ref{fig:satellite} for initial position \mbox{$y(0) = \mathrm{col}(50,-140,270)$}, reference \mbox{$y_{\mathrm{ref}} = \mathrm{col}(100,100,100)$}, and actuator limit $u_{\mathrm{max}} = 20$. At $t=1500$, the disturbance steps from $w=0$ to $w=1$. We observe that both input and output converge to the optimal $(u^\star,y^\star)$, and after the disturbance step the controller re-optimizes quickly. Moreover, the input constraints are satisfied at all times. 

\begin{figure}[t]
  \centering
  \input{plots/satellite_u.tex}
  \caption{Closed loop trajectories of the satellite rendezvous dynamics \eqref{eq:satellite} controlled with a synthesized Dyn-PPD controller. Gray dashed lines mark the optimal \mbox{$(u^\star$, $y^\star)$} solving the constrained OSS problem.}
  \label{fig:satellite}
\end{figure}
\vspace{-3mm}

%% file: plots/transient_static.tex
%
%

\definecolor{mycolor2}{rgb}{0.85000,0.32500,0.09800}%
\definecolor{mycolor3}{rgb}{0.85000,0.42500,0.09800}%

\definecolor{mycolor1}{RGB}{0,81,158}%
\definecolor{mycolor2}{RGB}{0, 190, 255}
\definecolor{usGray}{RGB}{139,143,148}%
\begin{tikzpicture}

\begin{axis}[%
width=1.15in,
height=0.8in,
at={(0.758in,0.481in)},
scale only axis,
xmin=0,
xmax=8,
ymin=-2,
ymax=3.5,
axis background/.style={fill=white},
xmajorgrids,
ymajorgrids,
ylabel style = {yshift=-3.5mm},
ylabel={$y$},
xlabel={$t$},
xlabel style = {yshift=1.5mm},
]

\addplot [color=gray, dashed, thick]
  table[row sep=crcr]{%
0	0.576082308650491\\
20	0.576082308650491\\
};\label{tik:gray}
\addplot [color=gray, dashed, thick]
  table[row sep=crcr]{%
0	0.0627947023452091\\
20	0.0627947023452091\\
};

\addplot [color=mycolor1, thick]
  table[row sep=crcr]{%
0	-3.48914335047704\\
3.78079857924201e-06	-3.48904153555318\\
2.2684791475452e-05	-3.48853247988917\\
0.000117204755956502	-3.48598767539267\\
0.000589804578361753	-3.47327549129846\\
0.00295280369038801	-3.41000963275375\\
0.0147677992505193	-3.10094608865347\\
0.0413633872623466	-2.44773717717713\\
0.0742917511511331	-1.71479997880347\\
0.11237614860424	-0.961851564167096\\
0.15477140400508	-0.230376107230957\\
0.198311419245539	0.417369139808325\\
0.243189956739887	0.988475396483819\\
0.289463134393057	1.48723213110467\\
0.337198426326258	1.91780417491493\\
0.386469562228858	2.28413960877154\\
0.437358415343939	2.5899769256465\\
0.490352448441863	2.840481004474\\
0.546299456015065	3.03993191860064\\
0.605339471917114	3.1882529972485\\
0.667654440010817	3.28581789002488\\
0.733447867940254	3.33329383086689\\
0.802964943807988	3.33161311291477\\
0.876517153206322	3.28189610595556\\
0.954518313794909	3.18534019408117\\
1.03754139608873	3.04306869002475\\
1.12641636914043	2.85590868373376\\
1.22241836893893	2.62400133967538\\
1.32768981366033	2.34593332265186\\
1.44382668203155	2.02348732525755\\
1.55417563181538	1.71323670594833\\
1.66187832460369	1.41594238247153\\
1.76915139189966	1.13267235416652\\
1.8772466317369	0.866426162353266\\
1.98711582675171	0.62081322834588\\
2.10133708899577	0.396630122304362\\
2.22245785170716	0.197204393037171\\
2.35196704576577	0.0300264944435735\\
2.49218149518855	-0.0966285278492118\\
2.64724475103649	-0.173108410319686\\
2.81045174226079	-0.187676245753346\\
2.97200850550562	-0.145167134462254\\
3.13469125538284	-0.0572621493405592\\
3.2992132172035	0.0640197962484059\\
3.46574440554733	0.205692475220956\\
3.63154304325345	0.352180349405578\\
3.79734691784755	0.492342130667334\\
3.96590237440201	0.618623968984279\\
4.13993785224592	0.724394311386169\\
4.32299743749447	0.803765860537824\\
4.52105095006489	0.851171879551459\\
4.71890431313873	0.861007407931709\\
4.91386150386258	0.839704466871603\\
5.11074621552524	0.794872482306574\\
5.30943062540315	0.734928829712086\\
5.51277391696738	0.667693626896879\\
5.7143056651176	0.603519603692425\\
5.91879176741466	0.547599253588991\\
6.13096766587537	0.504116385786852\\
6.35717806140022	0.476569445108471\\
6.60239817023005	0.468188840208489\\
6.8397411860503	0.477970712075321\\
7.07795942119103	0.500185724760255\\
7.3192631342338	0.529237858789867\\
7.56519404424937	0.559597997920649\\
7.81081600967929	0.585774885108607\\
8.06458608681818	0.605100266166718\\
8.33636063164688	0.615542145709161\\
8.63320695500445	0.615606473263261\\
8.92067697569078	0.607251249374014\\
9.21244296644705	0.59429339750297\\
9.51056907475738	0.580483040878203\\
9.80827199843748	0.569328028264273\\
10.1177762841238	0.562439578398569\\
10.4577254037584	0.560759876132546\\
10.8102610650617	0.564091265691471\\
11.1614659534551	0.570011686419788\\
11.5214543407876	0.576099299027406\\
11.8832553358441	0.580322586220046\\
12.2665157757015	0.581913303884681\\
12.6665157757015	0.580866013153158\\
13.0665157757015	0.578368479035147\\
13.4665157757015	0.575847943962553\\
13.8665157757015	0.574261416174166\\
14.2665157757015	0.573893021138583\\
14.6665157757015	0.574472772537896\\
15.0665157757015	0.575464181867481\\
15.4665157757015	0.576355734270881\\
15.8665157757015	0.576845156943508\\
16.2665157757015	0.576883052835287\\
16.6665157757015	0.57660665477551\\
17.0665157757015	0.576225518703702\\
17.4665157757015	0.575918332870898\\
17.8665157757015	0.575775977147047\\
18.2665157757015	0.575796758827996\\
18.6665157757015	0.575918111944167\\
19.0665157757015	0.576060586432542\\
19.4665157757014	0.576163436114908\\
19.8665157757014	0.576201221181281\\
20	0.57619995286626\\
};
\addplot [color=mycolor2, thick]
  table[row sep=crcr]{%
0	-0.862605362039618\\
3.78079857924201e-06	-0.862600844636498\\
2.2684791475452e-05	-0.862578254108283\\
0.000117204755956502	-0.862465213663721\\
0.000589804578361753	-0.861897817833886\\
0.00295280369038801	-0.859006182982677\\
0.0147677992505193	-0.843204335175919\\
0.0413633872623466	-0.799810455943624\\
0.0742917511511331	-0.732232646866309\\
0.11237614860424	-0.636951971553671\\
0.15477140400508	-0.511901415652399\\
0.198311419245539	-0.365477650831518\\
0.243189956739887	-0.198341445835198\\
0.289463134393057	-0.0116875672349437\\
0.337198426326258	0.19311261744321\\
0.386469562228858	0.414432783951765\\
0.437358415343939	0.650337469585018\\
0.490352448441863	0.900404118154409\\
0.546299456015065	1.16541931938817\\
0.605339471917114	1.44193114356014\\
0.667654440010817	1.72554725632081\\
0.733447867940254	2.01070275481789\\
0.802964943807988	2.2906228174894\\
0.876517153206322	2.55729086993575\\
0.954518313794909	2.80141390154871\\
1.03754139608873	3.0123468099795\\
1.12641636914043	3.17789690968016\\
1.22241836893893	3.28382726922348\\
1.32768981366033	3.31255909475748\\
1.44382668203155	3.24200423919987\\
1.55417563181538	3.08352082234365\\
1.66187832460369	2.85358822583986\\
1.76915139189966	2.56313722226774\\
1.8772466317369	2.22236060355076\\
1.98711582675171	1.84164736978744\\
2.10133708899577	1.42568705100152\\
2.22245785170716	0.980481306690445\\
2.35196704576577	0.519910330173333\\
2.49218149518855	0.060952149392536\\
2.64724475103649	-0.376030309925247\\
2.81045174226079	-0.736874194125591\\
2.97200850550562	-0.983503987811808\\
3.13469125538284	-1.11950876859982\\
3.2992132172035	-1.14970659126383\\
3.46574440554733	-1.08487277197608\\
3.63154304325345	-0.944485820354392\\
3.79734691784755	-0.750628732686026\\
3.96590237440201	-0.522222057912574\\
4.13993785224592	-0.277086178806265\\
4.32299743749447	-0.0325220153961272\\
4.52105095006489	0.194363769634682\\
4.71890431313873	0.365764121252291\\
4.91386150386258	0.472959173938969\\
5.11074621552524	0.519191201770072\\
5.30943062540315	0.509330441714436\\
5.51277391696738	0.452417437574876\\
5.7143056651176	0.364106864888811\\
5.91879176741466	0.257977957546997\\
6.13096766587537	0.146190908172492\\
6.35717806140022	0.0402950096134234\\
6.60239817023005	-0.0461696640214324\\
6.8397411860503	-0.0953940566827418\\
7.07795942119103	-0.110730960889753\\
7.3192631342338	-0.0967019884544902\\
7.56519404424937	-0.0607123054529568\\
7.81081600967929	-0.0133636412496765\\
8.06458608681818	0.0369649067258147\\
8.33636063164688	0.0827220340124644\\
8.63320695500445	0.115608279340611\\
8.92067697569078	0.128069557718839\\
9.21244296644705	0.123524571855041\\
9.51056907475738	0.10664854639753\\
9.80827199843748	0.0844283476731897\\
10.1177762841238	0.0624602291576259\\
10.4577254037584	0.0453933472098074\\
10.8102610650617	0.0381853510378471\\
11.1614659534551	0.0405654572473235\\
11.5214543407876	0.0491419815857412\\
11.8832553358441	0.0594049373558407\\
12.2665157757015	0.0679336781802204\\
12.6665157757015	0.0719650140007513\\
13.0665157757015	0.0711899288371765\\
13.4665157757015	0.0675130836015034\\
13.8665157757015	0.0632929178101058\\
14.2665157757015	0.060301147079814\\
14.6665157757015	0.0592523642815301\\
15.0665157757015	0.0598896146336116\\
15.4665157757015	0.0614071637451588\\
15.8665157757015	0.0629380928952781\\
16.2665157757015	0.0639013300128681\\
16.6665157757015	0.0641245815830338\\
17.0665157757015	0.0637720607391818\\
17.4665157757015	0.063169729455935\\
17.8665157757015	0.0626282606991522\\
18.2665157757015	0.0623311620358491\\
18.6665157757015	0.062308331608329\\
19.0665157757015	0.0624763399183381\\
19.4665157757014	0.0627078772147422\\
19.8665157757014	0.0628944248027761\\
20	0.0629353795131129\\
};

\end{axis}
\end{tikzpicture}%

%% file: plots/transient_dyn_eta.tex
%
%
\definecolor{mycolor1}{RGB}{0,81,158}%
\definecolor{mycolor2}{RGB}{0, 190, 255}
\definecolor{usGray}{RGB}{139,143,148}%
\begin{tikzpicture}

\begin{axis}[%
width=1.15in,
height=0.8in,
at={(0.758in,0.481in)},
scale only axis,
xmin=0,
xmax=8,
ymin=-2,
ymax=3.5,
axis background/.style={fill=white},
xmajorgrids,
ymajorgrids,
ylabel style = {yshift=-3.5mm},
ylabel={$y$},
xlabel={$t$},
xlabel style = {yshift=1.5mm},
]

\addplot [color=gray, dashed, thick]
  table[row sep=crcr]{%
0	0.576082308650491\\
20	0.576082308650491\\
};
\addplot [color=gray, dashed, thick]
  table[row sep=crcr]{%
0	0.0627947023452091\\
20	0.0627947023452091\\
};
\addplot [color=mycolor1, thick]
  table[row sep=crcr]{%
0	-3.48914335047704\\
1.90494919136049e-07	-3.4891382205196\\
3.80989838272098e-07	-3.48913309056531\\
5.71484757408146e-07	-3.48912796061423\\
1.40362506689516e-06	-3.4891055514475\\
2.23576537638218e-06	-3.48908314234077\\
3.06790568586919e-06	-3.4890607332949\\
3.90004599535621e-06	-3.48903832431016\\
6.42033919822868e-06	-3.48897045486378\\
8.94063240110115e-06	-3.48890258597585\\
1.14609256039736e-05	-3.48883471764434\\
1.39812188068461e-05	-3.48876684986713\\
2.31122755165313e-05	-3.48852096859265\\
3.22433322262165e-05	-3.48827509446702\\
4.13743889359017e-05	-3.48802922739036\\
5.05054456455869e-05	-3.48778336726338\\
5.96365023552721e-05	-3.48753751398734\\
8.87080004601316e-05	-3.48675480975829\\
0.000117779498564991	-3.48597217083059\\
0.000146850996669851	-3.48518959408696\\
0.00017592249477471	-3.48440707644747\\
0.00020499399287957	-3.48362461486846\\
0.000257350826916829	-3.4822155585062\\
0.000309707660954089	-3.48080665694453\\
0.000362064494991349	-3.47939789325574\\
0.000414421329028608	-3.47798925086544\\
0.000466778163065868	-3.47658071354659\\
0.000519134997103128	-3.47517226541476\\
0.00062951347348196	-3.47220320332807\\
0.000739891949860791	-3.46923432805506\\
0.000850270426239623	-3.46626550504113\\
0.000960648902618454	-3.46329660582866\\
0.00107102737899729	-3.46032750783765\\
0.00118140585537612	-3.45735809416292\\
0.00134948275112732	-3.45283561532057\\
0.00151755964687852	-3.44831177739673\\
0.00168563654262973	-3.44378623716304\\
0.00185371343838093	-3.4392586768374\\
0.00202179033413213	-3.43472880272141\\
0.00218986722988334	-3.4301963438917\\
0.00235794412563454	-3.42566105095238\\
0.00265480285016022	-3.41764316685815\\
0.0029516615746859	-3.40961461952201\\
0.00324852029921158	-3.40157445979847\\
0.00354537902373726	-3.39352189154711\\
0.00384223774826295	-3.38545625683264\\
0.00413909647278863	-3.37737702234174\\
0.00443595519731431	-3.36928376691018\\
0.00481404091763571	-3.35895522445322\\
0.00519212663795712	-3.34860297420967\\
0.00557021235827852	-3.3382267707772\\
0.00594829807859993	-3.32782654820978\\
0.00632638379892133	-3.31740239578305\\
0.00670446951924273	-3.30695453632281\\
0.00708255523956414	-3.29648330680996\\
0.00755285141043324	-3.28342632170042\\
0.00802314758130234	-3.2703348637362\\
0.00849344375217143	-3.25721009926611\\
0.00896373992304053	-3.24405331873061\\
0.00943403609390963	-3.23086590870307\\
0.00990433226477873	-3.21764932780546\\
0.0103746284356478	-3.20440508585709\\
0.0109599759457901	-3.18788452134206\\
0.0115453234559323	-3.17132651295408\\
0.0121306709660745	-3.15473408389117\\
0.0127160184762168	-3.13811023809537\\
0.013301365986359	-3.12145793844372\\
0.0138867134965012	-3.10478008939355\\
0.0144720610066435	-3.08807952297232\\
0.0152109749913156	-3.06696936787682\\
0.0159498889759876	-3.04583270941736\\
0.0166888029606597	-3.0246746265546\\
0.0174277169453318	-3.00349994545301\\
0.0181666309300039	-2.98231323809372\\
0.018905544914676	-2.96111882424886\\
0.0196444588993481	-2.93992077526105\\
0.0206121406972682	-2.91216066924913\\
0.0215798224951882	-2.88440886936231\\
0.0225475042931083	-2.85667272140892\\
0.0235151860910284	-2.8289589588595\\
0.0244828678889485	-2.80127374534549\\
0.0254505496868686	-2.77362271606843\\
0.0264182314847887	-2.74601101636776\\
0.0278471252801151	-2.7053209264141\\
0.0292760190754416	-2.6647402280356\\
0.0307049128707681	-2.62428064038902\\
0.0321338066660945	-2.5839523518332\\
0.033562700461421	-2.54376421316156\\
0.0349915942567475	-2.50372390158885\\
0.0365840167244509	-2.45928346172088\\
0.0381764391921544	-2.41504273573561\\
0.0397688616598578	-2.37100844359413\\
0.0413612841275613	-2.32718628065568\\
0.0429537065952647	-2.28358105529498\\
0.0445461290629681	-2.24019681780254\\
0.0461385515306716	-2.19703697421605\\
0.0482131130874922	-2.14115082976794\\
0.0502876746443129	-2.08565545733914\\
0.0523622362011335	-2.03055486965141\\
0.0544367977579541	-1.97585216146273\\
0.0565113593147748	-1.92154967066953\\
0.0585859208715954	-1.86764911318865\\
0.0606604824284161	-1.81415169736893\\
0.0631477595675582	-1.75054395894324\\
0.0656350367067003	-1.68751757424484\\
0.0681223138458424	-1.62507271579427\\
0.0706095909849846	-1.56320911367265\\
0.0730968681241267	-1.50192614169868\\
0.0755841452632688	-1.44122288690693\\
0.0780714224024109	-1.38109820832085\\
0.0813020082716092	-1.30386718715457\\
0.0845325941408074	-1.22760671499274\\
0.0877631800100057	-1.15231321011813\\
0.0909937658792039	-1.07798284124073\\
0.0942243517484022	-1.00461157357985\\
0.0974549376176004	-0.932195204836231\\
0.100685523486799	-0.860729397431077\\
0.10514741504118	-0.763579897441353\\
0.109609306595561	-0.668223125402594\\
0.114071198149942	-0.574646776913388\\
0.118533089704324	-0.482838293778959\\
0.122994981258705	-0.392784882231362\\
0.127456872813086	-0.30447352957636\\
0.131918764367467	-0.217891024112563\\
0.138975857326099	-0.0844458513920165\\
0.146032950284731	0.0447619152475824\\
0.153090043243363	0.169787205231275\\
0.160147136201996	0.29068592707584\\
0.167204229160628	0.407514900588518\\
0.17426132211926	0.52033179824524\\
0.18490283178396	0.682993526045554\\
0.195544341448661	0.83686940032728\\
0.206185851113361	0.982166634283427\\
0.216827360778062	1.11909555288541\\
0.227468870442763	1.24786919014629\\
0.238110380107463	1.36870290622218\\
0.248751889772164	1.48181401701647\\
0.264264601985971	1.63332026358769\\
0.279777314199778	1.76956343683851\\
0.295290026413586	1.89122912527476\\
0.310802738627393	1.99900384700049\\
0.3263154508412	2.09357291774344\\
0.341828163055008	2.1756183999908\\
0.357340875268815	2.24581714595503\\
0.373617632703406	2.30746886161346\\
0.389894390137998	2.3575787554332\\
0.406171147572589	2.39689744357995\\
0.42244790500718	2.42616097376545\\
0.438724662441772	2.44608907179932\\
0.455001419876363	2.45738354703871\\
0.471278177310954	2.46072686000305\\
0.491045458137534	2.4550473993433\\
0.510812738964114	2.43975784765235\\
0.530580019790694	2.41595039254861\\
0.550347300617274	2.38466527469356\\
0.570114581443854	2.34688913404768\\
0.589881862270433	2.30355376667536\\
0.609649143097013	2.25553530397734\\
0.629416423923593	2.20365379453238\\
0.649183704750173	2.14867314877119\\
0.668950985576753	2.09130141216412\\
0.688718266403333	2.0321913473451\\
0.708718084154522	1.9712279899019\\
0.728717901905712	1.90966181221128\\
0.748717719656901	1.8480026968692\\
0.768717537408091	1.78670785867777\\
0.78871735515928	1.726183724253\\
0.80871717291047	1.66678798325053\\
0.828716990661659	1.60883178583477\\
0.850907308267992	1.54653494196815\\
0.873097625874324	1.48664240340919\\
0.895287943480656	1.42939951836414\\
0.917478261086989	1.37499669487382\\
0.939668578693321	1.32357371060053\\
0.961858896299653	1.27522402008666\\
0.984049213905985	1.22999902362441\\
1.00869106575997	1.18345453698551\\
1.03333291761395	1.14074931782293\\
1.05797476946794	1.10180890834999\\
1.08261662132192	1.06652068135127\\
1.1072584731759	1.03473999402236\\
1.13190032502989	1.0062959933074\\
1.15654217688387	0.980997028814215\\
1.18283314760554	0.957238670415711\\
1.20912411832722	0.936558320704616\\
1.23541508904889	0.918681535293652\\
1.26170605977057	0.903330329229272\\
1.28799703049224	0.890227783470935\\
1.31428800121392	0.879102058530613\\
1.34057897193559	0.869689809873418\\
1.37287620644639	0.860102471472964\\
1.40517344095719	0.852282984254774\\
1.43747067546799	0.845829050992621\\
1.46976790997879	0.840377855844124\\
1.50206514448959	0.835608189775558\\
1.53436237900039	0.831241401740163\\
1.56665961351118	0.827041221492158\\
1.6002697744721	0.822637721051683\\
1.63387993543302	0.81802908129197\\
1.66749009639394	0.81308381755953\\
1.70110025735486	0.807710024010146\\
1.73471041831577	0.801851376297928\\
1.76832057927669	0.795482869579632\\
1.80193074023761	0.788606454098751\\
1.83554090119853	0.781246689162184\\
1.86915106215945	0.773446499350447\\
1.90315811876965	0.765164483647755\\
1.93716517537985	0.756560366766133\\
1.97117223199006	0.747711590092384\\
2.00517928860026	0.73869960755423\\
2.03918634521047	0.729606928833373\\
2.07319340182067	0.720514562700866\\
2.10720045843087	0.711499895320543\\
2.15155823425687	0.699976840070827\\
2.19591601008287	0.688850619658495\\
2.24027378590886	0.678241228059834\\
2.28463156173486	0.668244390780304\\
2.32898933756086	0.658930985929163\\
2.37334711338685	0.650347539544501\\
2.41770488921285	0.64251773742233\\
2.46764913538314	0.634607077535375\\
2.51759338155343	0.627633133754436\\
2.56753762772372	0.621550308783045\\
2.61748187389401	0.616295109536756\\
2.66742612006429	0.611791583054045\\
2.71737036623458	0.607956401107154\\
2.76731461240487	0.604703328867186\\
2.81986071624071	0.60181532001947\\
2.87240682007654	0.599382430093577\\
2.92495292391237	0.597317795436823\\
2.97749902774821	0.595544272590131\\
3.03004513158404	0.593995671974684\\
3.08259123541987	0.592617260103309\\
3.13513733925571	0.591365564210584\\
3.20243296799289	0.58989581736498\\
3.26972859673007	0.588533141199518\\
3.33702422546725	0.587250899584672\\
3.40431985420443	0.58603820097753\\
3.4716154829416	0.584895052215018\\
3.53891111167878	0.583827975909683\\
3.60620674041596	0.582846348494439\\
3.68249541704078	0.581848910742543\\
3.75878409366559	0.580984040019254\\
3.8350727702904	0.580256544376634\\
3.91136144691521	0.579665427810705\\
3.98765012354003	0.579204160738141\\
4.06393880016484	0.578861244741391\\
4.14022747678965	0.57862135814461\\
4.23092996141938	0.57844621400165\\
4.32163244604912	0.578362743907165\\
4.41233493067885	0.578343478216576\\
4.50303741530858	0.578364420955604\\
4.59373989993832	0.578406192502787\\
4.68444238456805	0.578454785998902\\
4.77514486919778	0.578501265239061\\
4.88665703125435	0.578548214752662\\
4.99816919331091	0.578580873062233\\
5.10968135536747	0.578600000689536\\
5.22119351742403	0.578608758723229\\
5.33270567948059	0.578610892879084\\
5.44421784153716	0.578609434196396\\
5.55573000359372	0.578606373908549\\
5.70470801839516	0.578601921354251\\
5.8536860331966	0.578598047551446\\
6.00266404799804	0.578594641457905\\
6.15164206279947	0.578591049336092\\
6.30062007760091	0.578586941162957\\
6.44959809240235	0.578582637925044\\
6.59857610720379	0.578578665055301\\
6.81931619337614	0.578573868017818\\
7.04005627954849	0.578570787307957\\
7.26079636572084	0.578569200844778\\
7.48153645189319	0.578568516424286\\
7.70227653806554	0.578568208297976\\
7.92301662423789	0.578568035804808\\
8.22301662423789	0.57856799041242\\
8.52301662423789	0.578568141738282\\
8.82301662423789	0.578568206375227\\
9.12301662423789	0.578568056855325\\
9.42301662423789	0.578567953192745\\
9.72301662423789	0.578567997885579\\
10.0230166242379	0.578568109809178\\
10.3230166242379	0.578568101062876\\
10.6230166242379	0.578568025349278\\
10.9230166242379	0.57856803680217\\
11.2230166242379	0.578568043763787\\
11.5230166242379	0.578568048417004\\
11.8230166242379	0.57856804750825\\
12.1230166242379	0.578568045954864\\
12.4230166242379	0.578568046011634\\
12.7230166242379	0.578568046542826\\
13.0230166242379	0.578568046546289\\
13.3230166242379	0.578568046259311\\
13.6230166242379	0.578568046124431\\
13.9230166242379	0.578568046154946\\
14.2230166242379	0.578568046186546\\
14.5230166242379	0.57856804617179\\
14.8230166242379	0.578568046154543\\
15	0.578568046156311\\
};
\addplot [color=mycolor2, thick]
  table[row sep=crcr]{%
0	-0.862605362039618\\
1.90494919136049e-07	-0.862605134436397\\
3.80989838272098e-07	-0.862604906832596\\
5.71484757408146e-07	-0.862604679228203\\
1.40362506689516e-06	-0.862603684975834\\
2.23576537638218e-06	-0.86260269071237\\
3.06790568586919e-06	-0.862601696437634\\
3.90004599535621e-06	-0.862600702151567\\
6.42033919822868e-06	-0.862597690700266\\
8.94063240110115e-06	-0.862594679144908\\
1.14609256039736e-05	-0.862591667485491\\
1.39812188068461e-05	-0.862588655722015\\
2.31122755165313e-05	-0.862577743190367\\
3.22433322262165e-05	-0.862566829292896\\
4.13743889359017e-05	-0.862555914029642\\
5.05054456455869e-05	-0.862544997400638\\
5.96365023552721e-05	-0.862534079405919\\
8.87080004601316e-05	-0.862499309551189\\
0.000117779498564991	-0.862464525853915\\
0.000146850996669851	-0.862429728314799\\
0.00017592249477471	-0.862394916934335\\
0.00020499399287957	-0.862360091712818\\
0.000257350826916829	-0.862297337697728\\
0.000309707660954089	-0.862234538788655\\
0.000362064494991349	-0.862171694982657\\
0.000414421329028608	-0.86210880627488\\
0.000466778163065868	-0.862045872658585\\
0.000519134997103128	-0.861982894125193\\
0.00062951347348196	-0.861849975810465\\
0.000739891949860791	-0.861716857690935\\
0.000850270426239623	-0.861583539610324\\
0.000960648902618454	-0.861450021380517\\
0.00107102737899729	-0.861316302783042\\
0.00118140585537612	-0.861182383570428\\
0.00134948275112732	-0.860978074475323\\
0.00151755964687852	-0.860773298478774\\
0.00168563654262973	-0.860568054361966\\
0.00185371343838093	-0.860362340784269\\
0.00202179033413213	-0.860156156292308\\
0.00218986722988334	-0.859949499328651\\
0.00235794412563454	-0.859742368240056\\
0.00265480285016022	-0.859375368091445\\
0.0029516615746859	-0.859006873248267\\
0.00324852029921158	-0.858636872851135\\
0.00354537902373726	-0.858265355475108\\
0.00384223774826295	-0.857892309224115\\
0.00413909647278863	-0.85751772181694\\
0.00443595519731431	-0.857141580664944\\
0.00481404091763571	-0.856660249220299\\
0.00519212663795712	-0.856176349711658\\
0.00557021235827852	-0.85568985496864\\
0.00594829807859993	-0.855200737701845\\
0.00632638379892133	-0.854708970641293\\
0.00670446951924273	-0.854214526657267\\
0.00708255523956414	-0.853717378863923\\
0.00755285141043324	-0.853095169447969\\
0.00802314758130234	-0.85246868527027\\
0.00849344375217143	-0.851837877269311\\
0.00896373992304053	-0.851202697678198\\
0.00943403609390963	-0.850563100137366\\
0.00990433226477873	-0.849919039782634\\
0.0103746284356478	-0.849270473309774\\
0.0109599759457901	-0.848456886828408\\
0.0115453234559323	-0.847636178475489\\
0.0121306709660745	-0.846808275323743\\
0.0127160184762168	-0.845973108379196\\
0.013301365986359	-0.845130612540569\\
0.0138867134965012	-0.844280726536447\\
0.0144720610066435	-0.843423392841935\\
0.0152109749913156	-0.842330418674645\\
0.0159498889759876	-0.841225390427701\\
0.0166888029606597	-0.840108216306788\\
0.0174277169453318	-0.838978812692371\\
0.0181666309300039	-0.837837103725695\\
0.018905544914676	-0.836683020897885\\
0.0196444588993481	-0.835516502643442\\
0.0206121406972682	-0.833969932851734\\
0.0215798224951882	-0.832401834149836\\
0.0225475042931083	-0.830812115445968\\
0.0235151860910284	-0.829200700978179\\
0.0244828678889485	-0.827567529055881\\
0.0254505496868686	-0.82591255088851\\
0.0264182314847887	-0.824235729499133\\
0.0278471252801151	-0.821719715446187\\
0.0292760190754416	-0.819155973746686\\
0.0307049128707681	-0.816544500352931\\
0.0321338066660945	-0.813885326921808\\
0.033562700461421	-0.811178516349054\\
0.0349915942567475	-0.808424159047046\\
0.0365840167244509	-0.805298702177903\\
0.0381764391921544	-0.802114525007254\\
0.0397688616598578	-0.798871844338564\\
0.0413612841275613	-0.795570901082089\\
0.0429537065952647	-0.792211957051031\\
0.0445461290629681	-0.788795291984903\\
0.0461385515306716	-0.785321200961346\\
0.0482131130874922	-0.780709628754473\\
0.0502876746443129	-0.776001827339182\\
0.0523622362011335	-0.771198534451007\\
0.0544367977579541	-0.766300510712292\\
0.0565113593147748	-0.761308535769943\\
0.0585859208715954	-0.756223405054742\\
0.0606604824284161	-0.75104592715315\\
0.0631477595675582	-0.744717856776986\\
0.0656350367067003	-0.738259649738019\\
0.0681223138458424	-0.731672751715798\\
0.0706095909849846	-0.724958621376846\\
0.0730968681241267	-0.718118728130791\\
0.0755841452632688	-0.711154550274603\\
0.0780714224024109	-0.704067573494423\\
0.0813020082716092	-0.69468181456365\\
0.0845325941408074	-0.68509469348949\\
0.0877631800100057	-0.675309504140685\\
0.0909937658792039	-0.66532955022023\\
0.0942243517484022	-0.655158143411152\\
0.0974549376176004	-0.644798601753676\\
0.100685523486799	-0.634254248267665\\
0.10514741504118	-0.619393291197495\\
0.109609306595561	-0.604194916689355\\
0.114071198149942	-0.588667903985223\\
0.118533089704324	-0.572821034317351\\
0.122994981258705	-0.556663087158364\\
0.127456872813086	-0.540202836579396\\
0.131918764367467	-0.523449047795408\\
0.138975857326099	-0.496372672990109\\
0.146032950284731	-0.468618431009129\\
0.153090043243363	-0.44022074926938\\
0.160147136201996	-0.411213899978309\\
0.167204229160628	-0.381631972071701\\
0.17426132211926	-0.351508845098655\\
0.18490283178396	-0.305136656058911\\
0.195544341448661	-0.257724531677064\\
0.206185851113361	-0.209385142745299\\
0.216827360778062	-0.160229443598852\\
0.227468870442763	-0.110366505232193\\
0.238110380107463	-0.0599033568535172\\
0.248751889772164	-0.00894483692515147\\
0.264264601985971	0.066016517025127\\
0.279777314199778	0.141505102020995\\
0.295290026413586	0.217224660816506\\
0.310802738627393	0.292891439614922\\
0.3263154508412	0.36823475641024\\
0.341828163055008	0.442997484802094\\
0.357340875268815	0.51693645091494\\
0.373617632703406	0.593381552010951\\
0.389894390137998	0.668420591075945\\
0.406171147572589	0.741824048222427\\
0.42244790500718	0.813380247214782\\
0.438724662441772	0.882895356108992\\
0.455001419876363	0.950193301477798\\
0.471278177310954	1.01511560044325\\
0.491045458137534	1.09056388585693\\
0.510812738964114	1.16208209230755\\
0.530580019790694	1.22948715890569\\
0.550347300617274	1.29263085234416\\
0.570114581443854	1.35139840833936\\
0.589881862270433	1.40570704409188\\
0.609649143097013	1.45550435190583\\
0.629416423923593	1.50076659054246\\
0.649183704750173	1.54149689827957\\
0.668950985576753	1.57772345027447\\
0.688718266403333	1.60949757736048\\
0.708718084154522	1.63718840385168\\
0.728717901905712	1.66049107917396\\
0.748717719656901	1.67951914982789\\
0.768717537408091	1.69440240254171\\
0.78871735515928	1.70528489958045\\
0.80871717291047	1.71232306068505\\
0.828716990661659	1.71568380199197\\
0.850907308267992	1.71532226904658\\
0.873097625874324	1.71089982312646\\
0.895287943480656	1.70267410402079\\
0.917478261086989	1.69090801008124\\
0.939668578693321	1.67586729001928\\
0.961858896299653	1.65781831169447\\
0.984049213905985	1.63702601510216\\
1.00869106575997	1.61103971836219\\
1.03333291761395	1.5823440404543\\
1.05797476946794	1.55127996521167\\
1.08261662132192	1.51817630015548\\
1.1072584731759	1.48334785540983\\
1.13190032502989	1.44709393394188\\
1.15654217688387	1.40969712798264\\
1.18283314760554	1.36883577185768\\
1.20912411832722	1.32727369294693\\
1.23541508904889	1.285285548637\\
1.26170605977057	1.24312160038886\\
1.28799703049224	1.2010078121968\\
1.31428800121392	1.15914621183291\\
1.34057897193559	1.11771549008187\\
1.37287620644639	1.0676377804865\\
1.40517344095719	1.01868983150998\\
1.43747067546799	0.971070903503555\\
1.46976790997879	0.924937842096255\\
1.50206514448959	0.880408823604084\\
1.53436237900039	0.837567214892545\\
1.56665961351118	0.796465493245561\\
1.6002697744721	0.755567524890609\\
1.63387993543302	0.71658338702191\\
1.66749009639394	0.679492523590009\\
1.70110025735486	0.644255830018333\\
1.73471041831577	0.610819529746629\\
1.76832057927669	0.579118713782549\\
1.80193074023761	0.549080518270864\\
1.83554090119853	0.52062692727879\\
1.86915106215945	0.493677200481327\\
1.90315811876965	0.46785666032196\\
1.93716517537985	0.443409150081984\\
1.97117223199006	0.420253988767283\\
2.00517928860026	0.398314157001874\\
2.03918634521047	0.377517115630853\\
2.07319340182067	0.357795352834935\\
2.10720045843087	0.33908668933685\\
2.15155823425687	0.316112926978803\\
2.19591601008287	0.29465447088833\\
2.24027378590886	0.274613681118164\\
2.28463156173486	0.25590615709543\\
2.32898933756086	0.238459048011262\\
2.37334711338685	0.222209232578139\\
2.41770488921285	0.207101435998178\\
2.46764913538314	0.19139648481149\\
2.51759338155343	0.177012957905572\\
2.56753762772372	0.163890997952108\\
2.61748187389401	0.151972660742114\\
2.66742612006429	0.141200191321418\\
2.71737036623458	0.131514809613582\\
2.76731461240487	0.122855980332155\\
2.81986071624071	0.114785473550612\\
2.87240682007654	0.107706719931765\\
2.92495292391237	0.101542526788978\\
2.97749902774821	0.0962145440916169\\
3.03004513158404	0.0916442171294474\\
3.08259123541987	0.0877538292093432\\
3.13513733925571	0.0844675668018242\\
3.20243296799289	0.081025112180515\\
3.26972859673007	0.0783104758841571\\
3.33702422546725	0.0761925163751451\\
3.40431985420443	0.074554431373353\\
3.4716154829416	0.0732946337598008\\
3.53891111167878	0.0723268877178862\\
3.60620674041596	0.071579806311715\\
3.68249541704078	0.0709275592626257\\
3.75878409366559	0.0704206575720194\\
3.8350727702904	0.0700119755989457\\
3.91136144691521	0.0696686098112842\\
3.98765012354003	0.069368788442203\\
4.06393880016484	0.0690990832913601\\
4.14022747678965	0.0688519343057665\\
4.23092996141938	0.0685824430864903\\
4.32163244604912	0.0683372094003722\\
4.41233493067885	0.0681156600310349\\
4.50303741530858	0.067917630018532\\
4.59373989993832	0.0677425575194324\\
4.68444238456805	0.0675891791398069\\
4.77514486919778	0.0674556203230186\\
4.88665703125435	0.0673153218544718\\
4.99816919331091	0.0671973769254013\\
5.10968135536747	0.0670978711124439\\
5.22119351742403	0.0670136178157774\\
5.33270567948059	0.0669422645932061\\
5.44421784153716	0.066882153663018\\
5.55573000359372	0.0668320588253453\\
5.70470801839516	0.0667790329781575\\
5.8536860331966	0.0667398807267627\\
6.00266404799804	0.0667123343304947\\
6.15164206279947	0.0666939862549132\\
6.30062007760091	0.0666823976518485\\
6.44959809240235	0.0666753709839476\\
6.59857610720379	0.0666712048651492\\
6.81931619337614	0.0666678873321604\\
7.04005627954849	0.066666096085183\\
7.26079636572084	0.0666649142863136\\
7.48153645189319	0.0666640636582487\\
7.70227653806554	0.0666634289042261\\
7.92301662423789	0.0666629034912649\\
8.22301662423789	0.0666622926677949\\
8.52301662423789	0.0666618489355661\\
8.82301662423789	0.0666616146812498\\
9.12301662423789	0.0666615027312254\\
9.42301662423789	0.0666614184134309\\
9.72301662423789	0.0666613645382679\\
10.0230166242379	0.0666613760737214\\
10.3230166242379	0.0666614134648347\\
10.6230166242379	0.0666614239061465\\
10.9230166242379	0.066661418767179\\
11.2230166242379	0.0666614138918398\\
11.5230166242379	0.0666614115396289\\
11.8230166242379	0.0666614095127276\\
12.1230166242379	0.06666140736088\\
12.4230166242379	0.0666614058941768\\
12.7230166242379	0.0666614053213284\\
13.0230166242379	0.0666614051851804\\
13.3230166242379	0.066661405118281\\
13.6230166242379	0.0666614050687464\\
13.9230166242379	0.0666614050724261\\
14.2230166242379	0.0666614051079382\\
14.5230166242379	0.0666614051348597\\
14.8230166242379	0.0666614051439556\\
15	0.0666614051461467\\
};
\end{axis}
\end{tikzpicture}%

%% file: plots/transient_dyn_eta_reduced.tex
%
%
\definecolor{mycolor1}{RGB}{0,81,158}%
\definecolor{mycolor2}{RGB}{0, 190, 255}
\definecolor{usGray}{RGB}{139,143,148}%
\begin{tikzpicture}

\begin{axis}[%
width=1.15in,
height=0.8in,
at={(0.758in,0.481in)},
scale only axis,
xmin=0,
xmax=8,
ymin=-2,
ymax=3.5,
axis background/.style={fill=white},
xmajorgrids,
ymajorgrids,
ylabel style = {yshift=-3.5mm},
ylabel={$y$},
xlabel={$t$},
xlabel style = {yshift=1.5mm},
]

\addplot [color=gray, dashed, thick]
  table[row sep=crcr]{%
0	0.576082308650491\\
20	0.576082308650491\\
};
\addplot [color=gray, dashed, thick]
  table[row sep=crcr]{%
0	0.0627947023452091\\
20	0.0627947023452091\\
};
\addplot [color=mycolor1, thick]
  table[row sep=crcr]{%
0	-3.48914335047704\\
1.28615122312548e-07	-3.48913988691896\\
2.57230244625095e-07	-3.48913642336231\\
3.85845366937643e-07	-3.48913295980712\\
9.48078129744786e-07	-3.48911781911363\\
1.51031089255193e-06	-3.48910267844751\\
2.07254365535907e-06	-3.48908753780915\\
2.63477641816621e-06	-3.48907239719866\\
4.35933951202973e-06	-3.48902595586293\\
6.08390260589324e-06	-3.48897951478822\\
7.80846569975676e-06	-3.48893307397344\\
9.53302879362027e-06	-3.48888663341745\\
1.53912396184108e-05	-3.48872888026756\\
2.12494504432013e-05	-3.48857113004717\\
2.71076612679918e-05	-3.48841338271232\\
3.29658720927823e-05	-3.4882556382194\\
3.88240829175729e-05	-3.48809789652513\\
5.47446266771513e-05	-3.48766922422233\\
7.06651704367298e-05	-3.48724057141716\\
8.65857141963082e-05	-3.48681193726738\\
0.000102506257955887	-3.48638332094293\\
0.000118426801715465	-3.48595472162601\\
0.000134347345475044	-3.48552613851113\\
0.000189779887373647	-3.48403400245621\\
0.000245212429272251	-3.48254202093974\\
0.000300644971170855	-3.48105016323868\\
0.000356077513069458	-3.47955840021515\\
0.000411510054968062	-3.47806670424035\\
0.000466942596866666	-3.47657504912594\\
0.000532203611459508	-3.47481893471714\\
0.000597464626052351	-3.4730628044465\\
0.000662725640645194	-3.47130662252427\\
0.000727986655238036	-3.46955035541473\\
0.000793247669830879	-3.46779397171586\\
0.000858508684423721	-3.4660374420418\\
0.000923769699016564	-3.464280738911\\
0.00103873397645211	-3.46118561866387\\
0.00115369825388766	-3.45808975357046\\
0.0012686625313232	-3.45499303067767\\
0.00138362680875875	-3.45189535048571\\
0.00149859108619429	-3.44879662568462\\
0.00161355536362984	-3.44569677999948\\
0.00172851964106539	-3.44259574713463\\
0.00187553903271815	-3.43862824326918\\
0.00202255842437091	-3.43465860436387\\
0.00216957781602367	-3.43068674900766\\
0.00231659720767643	-3.42671261264054\\
0.00246361659932919	-3.42273614551624\\
0.00261063599098195	-3.41875731088656\\
0.00275765538263471	-3.41477608337864\\
0.00294435013522383	-3.4097169907331\\
0.00313104488781295	-3.40465400359182\\
0.00331773964040207	-3.39958712926241\\
0.00350443439299119	-3.39451639101811\\
0.0036911291455803	-3.38944182553898\\
0.00387782389816942	-3.3843634807009\\
0.00406451865075854	-3.37928141364609\\
0.00430327930002514	-3.37277672693328\\
0.00454203994929175	-3.36626620889482\\
0.00478080059855835	-3.359750022315\\
0.00501956124782495	-3.35322833957223\\
0.00525832189709155	-3.34670134014747\\
0.00549708254635816	-3.3401692085581\\
0.00573584319562476	-3.33363213259729\\
0.00604854021380708	-3.32506357634591\\
0.0063612372319894	-3.31648729240569\\
0.00667393425017172	-3.3079037072677\\
0.00698663126835404	-3.29931324380158\\
0.00729932828653637	-3.29071631960092\\
0.00761202530471869	-3.28211334571949\\
0.00792472232290101	-3.27350472561742\\
0.00835348610700806	-3.26169230251737\\
0.0087822498911151	-3.24987099713631\\
0.00921101367522215	-3.23804176885584\\
0.0096397774593292	-3.226205544314\\
0.0100685412434362	-3.21436321774773\\
0.0104973050275433	-3.20251565151402\\
0.0109260688116503	-3.19066367656223\\
0.011573730047872	-3.17275430192017\\
0.0122213912840937	-3.15483931597395\\
0.0128690525203154	-3.13692120009617\\
0.0135167137565371	-3.11900229575404\\
0.0141643749927588	-3.10108481148962\\
0.0148120362289805	-3.08317082966669\\
0.0158195685650938	-3.05531454995912\\
0.0168271009012072	-3.02747832737779\\
0.0178346332373206	-2.99966841568214\\
0.0188421655734339	-2.97189051820851\\
0.0198496979095473	-2.94414983540772\\
0.0208572302456606	-2.91645110726834\\
0.0220008301472745	-2.88506794082387\\
0.0231444300488884	-2.85375009079386\\
0.0242880299505023	-2.82250269907003\\
0.0254316298521162	-2.79133038471993\\
0.0265752297537301	-2.76023729286117\\
0.027718829655344	-2.72922714141304\\
0.0288624295569579	-2.69830326382292\\
0.0304835349977048	-2.65462083822275\\
0.0321046404384517	-2.6111251681712\\
0.0337257458791986	-2.56782260393163\\
0.0353468513199455	-2.52471854496318\\
0.0369679567606924	-2.48181756572091\\
0.0385890622014393	-2.43912352614222\\
0.0402101676421862	-2.3966396689264\\
0.0420399985456778	-2.34894169528252\\
0.0438698294491695	-2.30151814629754\\
0.0456996603526611	-2.2543715801982\\
0.0475294912561527	-2.20750401441053\\
0.0493593221596444	-2.16091700597072\\
0.051189153063136	-2.11461172038557\\
0.0530189839666277	-2.06858899125597\\
0.0552171038213478	-2.01367764162156\\
0.0574152236760679	-1.959175327021\\
0.059613343530788	-1.90508214831196\\
0.0618114633855081	-1.85139787212273\\
0.0640095832402282	-1.7981219892908\\
0.0662077030949483	-1.74525376370433\\
0.0684058229496684	-1.69279227416281\\
0.0711159409684139	-1.62866944352366\\
0.0738260589871594	-1.56516099581198\\
0.0765361770059048	-1.50226444922117\\
0.0792462950246503	-1.43997715093935\\
0.0819564130433958	-1.37829631405668\\
0.0846665310621412	-1.31721904749891\\
0.0873766490808867	-1.25674238149194\\
0.0908656754953445	-1.17976370515236\\
0.0943547019098024	-1.10376888594532\\
0.0978437283242602	-1.02875127803764\\
0.101332754738718	-0.954704184510877\\
0.104821781153176	-0.881620876212344\\
0.108310807567634	-0.809494606501236\\
0.111799833982092	-0.738318624299765\\
0.116704214488559	-0.639863251678736\\
0.121608594995026	-0.543253505134063\\
0.126512975501493	-0.448470753694346\\
0.13141735600796	-0.355496446294065\\
0.136321736514428	-0.264312116775021\\
0.141226117020895	-0.174899387920899\\
0.146130497527362	-0.0872399769992513\\
0.154167762509756	0.0526722527647503\\
0.16220502749215	0.188003954813752\\
0.170242292474544	0.318833891086842\\
0.178279557456937	0.445240131639722\\
0.186316822439331	0.567300057422469\\
0.194354087421725	0.685090365998769\\
0.208509335515431	0.88239265261326\\
0.222664583609137	1.06709715093659\\
0.236819831702842	1.23960803464468\\
0.250975079796548	1.40032341208012\\
0.265130327890254	1.54963542753718\\
0.279285575983959	1.6879303545709\\
0.296335646065869	1.84041944572626\\
0.313385716147779	1.97813023320672\\
0.330435786229689	2.1017047865066\\
0.347485856311599	2.21177338116642\\
0.364535926393509	2.30895455631454\\
0.381585996475419	2.39385515202978\\
0.398636066557329	2.46707031443608\\
0.422370035404442	2.55062841116006\\
0.446104004251554	2.6142079033575\\
0.469837973098666	2.65929983508395\\
0.493571941945779	2.68735013384075\\
0.517305910792891	2.69975901676686\\
0.541039879640004	2.69788034671925\\
0.564773848487116	2.68302095714507\\
0.591475198996247	2.65235311570937\\
0.618176549505377	2.6085676773939\\
0.644877900014508	2.55330560062196\\
0.671579250523639	2.48812819045896\\
0.69828060103277	2.41451621932605\\
0.7249819515419	2.33386923210688\\
0.751683302051031	2.24750508008538\\
0.784632820449226	2.13488054524767\\
0.817582338847421	2.01757735187805\\
0.850531857245616	1.89755710342868\\
0.883481375643812	1.77659442758659\\
0.916430894042007	1.65627973438345\\
0.949380412440202	1.5380229034004\\
0.982329930838397	1.42305799562312\\
1.02051401380824	1.2953398659131\\
1.05869809677809	1.17484693942931\\
1.09688217974794	1.06268070667745\\
1.13506626271778	0.959679477297217\\
1.17325034568763	0.866436346769465\\
1.21143442865748	0.783318235424123\\
1.24961851162733	0.710485936761563\\
1.29405047445646	0.638642752854087\\
1.3384824372856	0.580370841374645\\
1.38291440011473	0.53509719242483\\
1.42734636294387	0.502035982891602\\
1.471778325773	0.480231385180631\\
1.51621028860214	0.468598572270845\\
1.56064225143127	0.465962386414658\\
1.60507421426041	0.471093212908612\\
1.64950617708954	0.482739719097042\\
1.69393813991868	0.499658217609456\\
1.73837010274782	0.52063847892944\\
1.78280206557695	0.544525883147803\\
1.82723402840609	0.570239878993156\\
1.87166599123522	0.59678879807503\\
1.91609795406436	0.623281139413231\\
1.96052991689349	0.648933490741974\\
2.00496187972263	0.673075294711592\\
2.04939384255176	0.6951507061397\\
2.09513875576048	0.71525402288933\\
2.14088366896919	0.732339629046864\\
2.18662858217791	0.746166693725907\\
2.23237349538662	0.756604847503589\\
2.27811840859533	0.763624765276043\\
2.32386332180405	0.767287603260642\\
2.36960823501276	0.767733439658424\\
2.42131349170287	0.764626439540102\\
2.47301874839297	0.758056383448936\\
2.52472400508308	0.748470915519199\\
2.57642926177318	0.736362894313434\\
2.62813451846329	0.722250156505981\\
2.67983977515339	0.706656972728624\\
2.7315450318435	0.690097676872401\\
2.7832502885336	0.673062666240757\\
2.83495554522371	0.656006736033444\\
2.88666080191381	0.639339750722301\\
2.93836605860391	0.623419723529997\\
2.99007131529402	0.608548266612183\\
3.04213549719279	0.594879088484485\\
3.09419967909155	0.582706501022971\\
3.14626386099032	0.572158095604881\\
3.19832804288909	0.563305065162269\\
3.25039222478785	0.55616633788132\\
3.30245640668662	0.550713645166329\\
3.35452058858538	0.546877476491009\\
3.41956112325278	0.544195113181507\\
3.48460165792018	0.543608664438117\\
3.54964219258758	0.54480826032166\\
3.61468272725498	0.547456046681059\\
3.67972326192237	0.551204276031877\\
3.74476379658977	0.555711333961879\\
3.80980433125717	0.560655201763729\\
3.87484486592457	0.56574410491885\\
3.93988540059197	0.570724370024883\\
4.00492593525937	0.575385593280848\\
4.06996646992676	0.579563238895793\\
4.13500700459416	0.583138883294671\\
4.2027738289708	0.586144407484323\\
4.27054065334744	0.588379719917327\\
4.33830747772408	0.58985090271234\\
4.40607430210072	0.590599480629329\\
4.47384112647735	0.590695121533581\\
4.54160795085399	0.590228258236689\\
4.60937477523063	0.589302664010211\\
4.68354868502109	0.587893987860246\\
4.75772259481154	0.586211664672572\\
4.831896504602	0.584391548052909\\
4.90607041439245	0.582555318113783\\
4.98024432418291	0.580806080985843\\
5.05441823397336	0.579225354679408\\
5.12859214376381	0.577872322379568\\
5.2094628947603	0.576699865365944\\
5.29033364575678	0.575862116537422\\
5.37120439675327	0.57535366050767\\
5.45207514774975	0.575148795908043\\
5.53294589874623	0.575206814527894\\
5.61381664974272	0.575477002432445\\
5.6946874007392	0.575903605715471\\
5.78423695233517	0.576490685861189\\
5.87378650393113	0.577129901466554\\
5.9633360555271	0.577760340428476\\
6.05288560712306	0.578333618103193\\
6.14243515871903	0.578815328712524\\
6.231984710315	0.579184757803413\\
6.32153426191096	0.579434049961272\\
6.43552860544557	0.579583655918971\\
6.54952294898017	0.579571350562783\\
6.66351729251478	0.579437947843443\\
6.77751163604938	0.579229200709695\\
6.89150597958398	0.578988878304726\\
7.00550032311859	0.578754159050641\\
7.11949466665319	0.578552798831785\\
7.24203666586693	0.578392784321236\\
7.36457866508066	0.578298948631222\\
7.4871206642944	0.578267805890095\\
7.60966266350814	0.578287623554306\\
7.73220466272187	0.578342332383785\\
7.85474666193561	0.578415073897494\\
7.98431875509124	0.578495067046873\\
8.11389084824686	0.578564776502465\\
8.24346294140249	0.578616108825027\\
8.37303503455812	0.578646049703382\\
8.50260712771375	0.578655630057952\\
8.63217922086938	0.578648778791012\\
8.81321192723325	0.578622194587718\\
8.99424463359713	0.578588749506576\\
9.17527733996101	0.578560421268248\\
9.35631004632488	0.578544016245433\\
9.53734275268876	0.578540562486334\\
9.71837545905263	0.578546752305666\\
9.94135056481037	0.578559996020987\\
10.1643256705681	0.578571878828614\\
10.3873007763258	0.578577638087539\\
10.6102758820836	0.578576853451773\\
10.8332509878413	0.578572216940451\\
11.1332509878413	0.578565690520393\\
11.4332509878413	0.578563119489522\\
11.7332509878413	0.578565103247465\\
12.0332509878413	0.578568851443027\\
12.3332509878413	0.578571080240309\\
12.6332509878413	0.578570499183214\\
12.9332509878413	0.578568605612606\\
13.2332509878413	0.578566986821809\\
13.5332509878413	0.578566463155898\\
13.8332509878413	0.578566942235331\\
14.1332509878413	0.578567751370251\\
14.4332509878413	0.578568194280725\\
14.7332509878413	0.578568333925387\\
15	0.578568318129197\\
};
\addplot [color=mycolor2, thick]
  table[row sep=crcr]{%
0	-0.862605362039618\\
1.28615122312548e-07	-0.862605208370453\\
2.57230244625095e-07	-0.862605054701024\\
3.85845366937643e-07	-0.862604901031325\\
9.48078129744786e-07	-0.862604229271127\\
1.51031089255193e-06	-0.862603557505864\\
2.07254365535907e-06	-0.862602885735456\\
2.63477641816621e-06	-0.862602213959875\\
4.35933951202973e-06	-0.862600153358615\\
6.08390260589324e-06	-0.862598092708632\\
7.80846569975676e-06	-0.862596032009925\\
9.53302879362027e-06	-0.862593971262494\\
1.53912396184108e-05	-0.862586970694962\\
2.12494504432013e-05	-0.862579969565226\\
2.71076612679918e-05	-0.862572967873297\\
3.29658720927823e-05	-0.862565965619185\\
3.88240829175729e-05	-0.862558962802896\\
5.47446266771513e-05	-0.86253992878673\\
7.06651704367298e-05	-0.862520890618757\\
8.65857141963082e-05	-0.86250184829909\\
0.000102506257955887	-0.86248280182781\\
0.000118426801715465	-0.862463751204971\\
0.000134347345475044	-0.862444696430599\\
0.000189779887373647	-0.862378318654785\\
0.000245212429272251	-0.862311890546493\\
0.000300644971170855	-0.862245412099198\\
0.000356077513069458	-0.862178883302727\\
0.000411510054968062	-0.862112304143455\\
0.000466942596866666	-0.862045674604484\\
0.000532203611459508	-0.86196716671027\\
0.000597464626052351	-0.861888588921701\\
0.000662725640645194	-0.861809941195642\\
0.000727986655238036	-0.861731223483987\\
0.000793247669830879	-0.861652435733974\\
0.000858508684423721	-0.861573577888479\\
0.000923769699016564	-0.86149464988631\\
0.00103873397645211	-0.861355438892465\\
0.00115369825388766	-0.861216009599031\\
0.0012686625313232	-0.86107636159419\\
0.00138362680875875	-0.860936494439356\\
0.00149859108619429	-0.860796407672399\\
0.00161355536362984	-0.86065610081059\\
0.00172851964106539	-0.86051557335327\\
0.00187553903271815	-0.86033554082768\\
0.00202255842437091	-0.86015514558458\\
0.00216957781602367	-0.859974386484139\\
0.00231659720767643	-0.859793262363162\\
0.00246361659932919	-0.859611772040181\\
0.00261063599098195	-0.859429914319979\\
0.00275765538263471	-0.859247687997556\\
0.00294435013522383	-0.859015752065496\\
0.00313104488781295	-0.858783217300285\\
0.00331773964040207	-0.858550081214309\\
0.00350443439299119	-0.858316341326818\\
0.0036911291455803	-0.85808199516992\\
0.00387782389816942	-0.857847040293665\\
0.00406451865075854	-0.857611474270292\\
0.00430327930002514	-0.857309318615332\\
0.00454203994929175	-0.857006154522313\\
0.00478080059855835	-0.856701977119296\\
0.00501956124782495	-0.856396781623835\\
0.00525832189709155	-0.85609056334729\\
0.00549708254635816	-0.855783317698078\\
0.00573584319562476	-0.855475040183918\\
0.00604854021380708	-0.855069731243114\\
0.0063612372319894	-0.854662635293072\\
0.00667393425017172	-0.854253743029484\\
0.00698663126835404	-0.853843045444293\\
0.00729932828653637	-0.853430533822527\\
0.00761202530471869	-0.853016199738326\\
0.00792472232290101	-0.852600035050163\\
0.00835348610700806	-0.852026408252941\\
0.0087822498911151	-0.851449305490574\\
0.00921101367522215	-0.850868708258074\\
0.0096397774593292	-0.850284598953942\\
0.0100685412434362	-0.849696960849321\\
0.0104973050275433	-0.849105778057765\\
0.0109260688116503	-0.848511035505453\\
0.011573730047872	-0.847605876858945\\
0.0122213912840937	-0.846692517431237\\
0.0128690525203154	-0.845770914926859\\
0.0135167137565371	-0.844841030548542\\
0.0141643749927588	-0.843902828803494\\
0.0148120362289805	-0.842956277325619\\
0.0158195685650938	-0.841467110815884\\
0.0168271009012072	-0.839957569676427\\
0.0178346332373206	-0.838427571398246\\
0.0188421655734339	-0.836877047065274\\
0.0198496979095473	-0.835305940195854\\
0.0208572302456606	-0.833714205719891\\
0.0220008301472745	-0.831882469259775\\
0.0231444300488884	-0.830024077371527\\
0.0242880299505023	-0.828139008474769\\
0.0254316298521162	-0.826227253915168\\
0.0265752297537301	-0.824288816748357\\
0.027718829655344	-0.822323710581077\\
0.0288624295569579	-0.820331958515506\\
0.0304835349977048	-0.817462964350333\\
0.0321046404384517	-0.814540608314812\\
0.0337257458791986	-0.811565033870644\\
0.0353468513199455	-0.80853640798017\\
0.0369679567606924	-0.805454917940317\\
0.0385890622014393	-0.802320768600134\\
0.0402101676421862	-0.799134179960163\\
0.0420399985456778	-0.795474592906061\\
0.0438698294491695	-0.791748843333034\\
0.0456996603526611	-0.787957299243092\\
0.0475294912561527	-0.78410034186828\\
0.0493593221596444	-0.780178363643007\\
0.051189153063136	-0.776191766464988\\
0.0530189839666277	-0.772140960220188\\
0.0552171038213478	-0.767190544345571\\
0.0574152236760679	-0.762148806142083\\
0.059613343530788	-0.757016488128485\\
0.0618114633855081	-0.751794340657464\\
0.0640095832402282	-0.746483120415933\\
0.0662077030949483	-0.741083589160981\\
0.0684058229496684	-0.735596512677508\\
0.0711159409684139	-0.728711989752476\\
0.0738260589871594	-0.721697004231373\\
0.0765361770059048	-0.714553006536067\\
0.0792462950246503	-0.707281450026229\\
0.0819564130433958	-0.699883789997711\\
0.0846665310621412	-0.69236148285616\\
0.0873766490808867	-0.684715985459361\\
0.0908656754953445	-0.674694077926976\\
0.0943547019098024	-0.664473514404048\\
0.0978437283242602	-0.654057399775458\\
0.101332754738718	-0.643448835731911\\
0.104821781153176	-0.63265092009756\\
0.108310807567634	-0.621666746261253\\
0.111799833982092	-0.610499402717085\\
0.116704214488559	-0.594498148065353\\
0.121608594995026	-0.578149608665149\\
0.126512975501493	-0.561462306188324\\
0.13141735600796	-0.544444738101089\\
0.136321736514428	-0.527105376562376\\
0.141226117020895	-0.509452667338824\\
0.146130497527362	-0.491495028773097\\
0.154167762509756	-0.461428841367759\\
0.16220502749215	-0.430602997575347\\
0.170242292474544	-0.399054029647183\\
0.178279557456937	-0.366818235706594\\
0.186316822439331	-0.333931668411295\\
0.194354087421725	-0.300430123954032\\
0.208509335515431	-0.240039939001205\\
0.222664583609137	-0.178044190333432\\
0.236819831702842	-0.114631956737703\\
0.250975079796548	-0.0499892340797854\\
0.265130327890254	0.0157012595352262\\
0.279285575983959	0.0822602700877471\\
0.296335646065869	0.163335275080129\\
0.313385716147779	0.245116321875469\\
0.330435786229689	0.327312489671583\\
0.347485856311599	0.40964181405769\\
0.364535926393509	0.491831690287228\\
0.381585996475419	0.573619262891473\\
0.398636066557329	0.654751791435528\\
0.422370035404442	0.766144872665382\\
0.446104004251554	0.875182096022935\\
0.469837973098666	0.981286500148746\\
0.493571941945779	1.08392399644955\\
0.517305910792891	1.18260427282314\\
0.541039879640004	1.27688145966866\\
0.564773848487116	1.36635454828117\\
0.591475198996247	1.46083003560718\\
0.618176549505377	1.54833682739463\\
0.644877900014508	1.6285074477288\\
0.671579250523639	1.70104478561186\\
0.69828060103277	1.76572061333391\\
0.7249819515419	1.82237370712365\\
0.751683302051031	1.87090758427644\\
0.784632820449226	1.91956117538015\\
0.817582338847421	1.95586158783618\\
0.850531857245616	1.97999570408323\\
0.883481375643812	1.99226824053976\\
0.916430894042007	1.99309099572827\\
0.949380412440202	1.98297164543825\\
0.982329930838397	1.96250218214513\\
1.02051401380824	1.92671222608846\\
1.05869809677809	1.87905181094764\\
1.09688217974794	1.82074065904254\\
1.13506626271778	1.75305963928162\\
1.17325034568763	1.67732781951055\\
1.21143442865748	1.59488070576488\\
1.24961851162733	1.50704987372096\\
1.29405047445646	1.39980625904566\\
1.3384824372856	1.28903638470587\\
1.38291440011473	1.17662839939593\\
1.42734636294387	1.06433938147684\\
1.471778325773	0.953773603685806\\
1.51621028860214	0.84636567853176\\
1.56064225143127	0.743368587897991\\
1.60507421426041	0.645846455340981\\
1.64950617708954	0.554671740707596\\
1.69393813991868	0.470526455357579\\
1.73837010274782	0.393906988919609\\
1.78280206557695	0.325132129255434\\
1.82723402840609	0.264353824048093\\
1.87166599123522	0.211570202569646\\
1.91609795406436	0.166640372081267\\
1.96052991689349	0.129300520174426\\
2.00496187972263	0.0991808775742724\\
2.04939384255176	0.0758231200574162\\
2.09513875576048	0.0582806713344333\\
2.14088366896919	0.0467068858608002\\
2.18662858217791	0.0404229498454454\\
2.23237349538662	0.0387265252342761\\
2.27811840859533	0.0409083981508447\\
2.32386332180405	0.0462675705029492\\
2.36960823501276	0.0541246468475141\\
2.42131349170287	0.065202481866498\\
2.47301874839297	0.077780526610411\\
2.52472400508308	0.0910761370031361\\
2.57642926177318	0.104400585177333\\
2.62813451846329	0.117165489694417\\
2.67983977515339	0.128885890283692\\
2.7315450318435	0.139180166073968\\
2.7832502885336	0.147767129677379\\
2.83495554522371	0.154460747614515\\
2.88666080191381	0.15916294290767\\
2.93836605860391	0.161854892694327\\
2.99007131529402	0.162587216347158\\
3.04213549719279	0.161455576622268\\
3.09419967909155	0.158609184748039\\
3.14626386099032	0.154247260011525\\
3.19832804288909	0.148596281548206\\
3.25039222478785	0.141899678046441\\
3.30245640668662	0.134408309220611\\
3.35452058858538	0.126371896267875\\
3.41956112325278	0.115932540190471\\
3.48460165792018	0.105453965297253\\
3.54964219258758	0.0953208882932284\\
3.61468272725498	0.0858565092752688\\
3.67972326192237	0.0773164532172421\\
3.74476379658977	0.0698864867473228\\
3.80980433125717	0.0636836778863064\\
3.87484486592457	0.0587605242993838\\
3.93988540059197	0.0551114012834839\\
4.00492593525937	0.0526806504587043\\
4.06996646992676	0.0513717056026737\\
4.13500700459416	0.0510567259851573\\
4.2027738289708	0.0516244940809107\\
4.27054065334744	0.0529257816321183\\
4.33830747772408	0.054773248052264\\
4.40607430210072	0.0569844306955174\\
4.47384112647735	0.0593892189702965\\
4.54160795085399	0.0618356770258615\\
4.60937477523063	0.0641942106301601\\
4.68354868502109	0.0665517589772779\\
4.75772259481154	0.068577160721083\\
4.831896504602	0.0702050884495478\\
4.90607041439245	0.0714040278704804\\
4.98024432418291	0.0721723167847479\\
5.05441823397336	0.0725332626700474\\
5.12859214376381	0.0725296811503693\\
5.2094628947603	0.0721771619208003\\
5.29033364575678	0.0715432990491829\\
5.37120439675327	0.0707165687746798\\
5.45207514774975	0.0697828311367683\\
5.53294589874623	0.0688201358709701\\
5.61381664974272	0.0678949764616359\\
5.6946874007392	0.067059913357256\\
5.78423695233517	0.06628533822109\\
5.87378650393113	0.0656968156461764\\
5.9633360555271	0.0653037640820902\\
6.05288560712306	0.065098525267282\\
6.14243515871903	0.0650602813785408\\
6.231984710315	0.0651591541204848\\
6.32153426191096	0.065360233574788\\
6.43552860544557	0.0657069467552542\\
6.54952294898017	0.0660909587761882\\
6.66351729251478	0.0664550264879464\\
6.77751163604938	0.0667584651293236\\
6.89150597958398	0.0669780104717228\\
7.00550032311859	0.0671063560991758\\
7.11949466665319	0.0671492398634435\\
7.24203666586693	0.0671169636588491\\
7.36457866508066	0.0670283188407996\\
7.4871206642944	0.0669104639499674\\
7.60966266350814	0.0667877888676589\\
7.73220466272187	0.0666789991834724\\
7.85474666193561	0.0665959345893556\\
7.98431875509124	0.0665416989105596\\
8.11389084824686	0.0665211211879031\\
8.24346294140249	0.066528304238477\\
8.37303503455812	0.0665545762087856\\
8.50260712771375	0.0665906184859253\\
8.63217922086938	0.0666280957277987\\
8.81321192723325	0.0666722087052376\\
8.99424463359713	0.0666980986084695\\
9.17527733996101	0.0667041586115493\\
9.35631004632488	0.0666951861465508\\
9.53734275268876	0.0666789821776046\\
9.71837545905263	0.0666629797181919\\
9.94135056481037	0.0666500730052908\\
10.1643256705681	0.0666472447052548\\
10.3873007763258	0.0666520252169489\\
10.6102758820836	0.0666594344323865\\
10.8332509878413	0.0666650694628537\\
11.1332509878413	0.0666671751639072\\
11.4332509878413	0.0666638650186098\\
11.7332509878413	0.0666594660244071\\
12.0332509878413	0.066657740619636\\
12.3332509878413	0.0666593417148622\\
12.6332509878413	0.0666622111019546\\
12.9332509878413	0.0666637606418048\\
13.2332509878413	0.0666635364395987\\
13.5332509878413	0.0666622614156847\\
13.8332509878413	0.0666609673555847\\
14.1332509878413	0.066660539332779\\
14.4332509878413	0.0666606614199861\\
14.7332509878413	0.0666609910868043\\
15	0.0666612798124921\\
};
\end{axis}
\end{tikzpicture}%

%% file: plots/transient_dyn_y_eta.tex
%
%
\definecolor{mycolor1}{RGB}{0,81,158}%
\definecolor{mycolor2}{RGB}{0, 190, 255}
\definecolor{usGray}{RGB}{139,143,148}%
\begin{tikzpicture}

\begin{axis}[%
width=1.15in,
height=0.8in,
at={(0.758in,0.481in)},
scale only axis,
xmin=0,
xmax=8,
ymin=-2,
ymax=3.5,
axis background/.style={fill=white},
xmajorgrids,
ymajorgrids,
ylabel style = {yshift=-3.5mm},
ylabel={$y$},
xlabel={$t$},
xlabel style = {yshift=1.5mm},
]

\addplot [color=gray, dashed, thick]
  table[row sep=crcr]{%
0	0.576082308650491\\
20	0.576082308650491\\
};
\addplot [color=gray, dashed, thick]
  table[row sep=crcr]{%
0	0.0627947023452091\\
20	0.0627947023452091\\
};
\addplot [color=mycolor1, thick]
  table[row sep=crcr]{%
0	-3.48914335047704\\
4.75332248397097e-08	-3.48914181185225\\
9.50664496794195e-08	-3.48914027321449\\
1.42599674519129e-07	-3.48913873456349\\
3.49255436102956e-07	-3.4891320449706\\
5.55911197686783e-07	-3.48912535513197\\
7.6256695927061e-07	-3.48911866504369\\
9.69222720854437e-07	-3.4891119747044\\
1.89468293915099e-06	-3.48908201047876\\
2.82014315744754e-06	-3.48905204121504\\
3.7456033757441e-06	-3.4890220669137\\
4.67106359404065e-06	-3.48899208757551\\
1.39256657770062e-05	-3.48869201744488\\
2.31802679599717e-05	-3.48839144450312\\
3.24348701429372e-05	-3.48809036947783\\
4.16894723259028e-05	-3.48778879313034\\
7.28225490253081e-05	-3.48677060076229\\
0.000103955625724713	-3.4857467724168\\
0.000135088702424119	-3.48471733712403\\
0.000166221779123524	-3.48368232379036\\
0.000197354855822929	-3.48264176120411\\
0.000338688446366252	-3.47784897077082\\
0.000480022036909575	-3.47294507988519\\
0.000621355627452898	-3.46793273268022\\
0.000762689217996221	-3.46281454878818\\
0.000904022808539544	-3.45759312150093\\
0.00104535639908287	-3.45227101622126\\
0.00135843550409311	-3.44013573255835\\
0.00167151460910336	-3.42754591313772\\
0.00198459371411361	-3.41452795461349\\
0.00229767281912385	-3.40110754711468\\
0.0026107519241341	-3.38730966058559\\
0.00292383102914434	-3.37315853522558\\
0.00336685904084663	-3.35257571687679\\
0.00380988705254891	-3.33139657013279\\
0.0042529150642512	-3.30968216865519\\
0.00469594307595348	-3.2874906013995\\
0.00513897108765577	-3.26487698761561\\
0.00558199909935805	-3.24189350298083\\
0.00602502711106034	-3.21858941690199\\
0.00666766713872375	-3.18430978807192\\
0.00731030716638716	-3.14958392037654\\
0.00795294719405057	-3.11453106473246\\
0.00859558722171398	-3.07925897634724\\
0.00923822724937739	-3.04386444011046\\
0.0098808672770408	-3.00843382454247\\
0.0105235073047042	-2.9730436601251\\
0.0112531065518746	-2.93299855498829\\
0.0119827057990449	-2.89317709132222\\
0.0127123050462153	-2.85365109824973\\
0.0134419042933856	-2.8144803243831\\
0.0141715035405559	-2.77571353869788\\
0.0149011027877263	-2.73738959699756\\
0.0156307020348966	-2.69953846789428\\
0.0165419803048852	-2.65295889745852\\
0.0174532585748738	-2.6071787462427\\
0.0183645368448624	-2.56221155790755\\
0.019275815114851	-2.51805888929636\\
0.0201870933848395	-2.47471248011307\\
0.0210983716548281	-2.4321562073239\\
0.0220096499248167	-2.39036782351166\\
0.0229492275691682	-2.3480573146712\\
0.0238888052135198	-2.30650151896143\\
0.0248283828578714	-2.26566501155507\\
0.0257679605022229	-2.22551110612477\\
0.0267075381465745	-2.18600278369072\\
0.027647115790926	-2.14710345168294\\
0.0285866934352776	-2.10877754833795\\
0.0295310460883144	-2.07080029682518\\
0.0304753987413512	-2.03333520105326\\
0.031419751394388	-1.99635168965237\\
0.0323641040474248	-1.95982148689907\\
0.0333084567004616	-1.92371869691925\\
0.0342528093534985	-1.88801982434688\\
0.0351971620065353	-1.85270374323123\\
0.0362881467364139	-1.81235605880271\\
0.0373791314662925	-1.77246881577514\\
0.0384701161961712	-1.73302025552407\\
0.0395611009260498	-1.69399195542335\\
0.0406520856559284	-1.65536850639703\\
0.0417430703858071	-1.61713718594847\\
0.0428340551156857	-1.5792876320166\\
0.0442254555563078	-1.53155669372396\\
0.0456168559969298	-1.48442009399082\\
0.0470082564375519	-1.43786826306514\\
0.048399656878174	-1.39189447322853\\
0.049791057318796	-1.34649412140454\\
0.0511824577594181	-1.30166411828391\\
0.0525738582000402	-1.25740237886701\\
0.0542922166946045	-1.20352187560699\\
0.0560105751891688	-1.15050352612563\\
0.0577289336837332	-1.0983449633369\\
0.0594472921782975	-1.04704333167206\\
0.0611656506728618	-0.996595017914027\\
0.0628840091674261	-0.946995504299083\\
0.0646023676619905	-0.898239326136368\\
0.0665178343926226	-0.844876625744143\\
0.0684333011232548	-0.792543877794125\\
0.070348767853887	-0.741229993468247\\
0.0722642345845191	-0.690923028175023\\
0.0741797013151513	-0.641610383973227\\
0.0760951680457834	-0.593278993798767\\
0.0780106347764156	-0.545915485507851\\
0.0809159074846546	-0.475894166197677\\
0.0838211801928937	-0.408020888618437\\
0.0867264529011328	-0.342248547644982\\
0.0896317256093718	-0.278530664532781\\
0.0925369983176109	-0.216821534090259\\
0.09544227102585	-0.157076282207816\\
0.098347543734089	-0.099250867385833\\
0.102048498309089	-0.0283007450569961\\
0.10574945288409	0.0396918190615874\\
0.10945040745909	0.104812835401062\\
0.11315136203409	0.167146750006639\\
0.116852316609091	0.226776525402582\\
0.120553271184091	0.283783663096081\\
0.125217947914857	0.352023171985149\\
0.129882624645623	0.416380314577594\\
0.13454730137639	0.477008299577467\\
0.139211978107156	0.534056584256495\\
0.143876654837922	0.587670849668782\\
0.148541331568688	0.637992991233898\\
0.155766865156762	0.709763573333436\\
0.162992398744837	0.774461046804521\\
0.170217932332911	0.832559518054903\\
0.177443465920985	0.884511918687561\\
0.184668999509059	0.930750336859128\\
0.191894533097134	0.971686396504187\\
0.200982798135329	1.0162485672592\\
0.210071063173525	1.0537754332334\\
0.219159328211721	1.08495491780741\\
0.228247593249916	1.11042917145673\\
0.237335858288112	1.13079629942555\\
0.246424123326308	1.14661215550076\\
0.255512388364503	1.15839217662762\\
0.26565527192151	1.16735722457082\\
0.275798155478518	1.17249750893863\\
0.285941039035525	1.17436828148891\\
0.296083922592532	1.17347415000922\\
0.306226806149539	1.17027216051773\\
0.316369689706546	1.16517480310105\\
0.326512573263554	1.15855293126928\\
0.337855175130169	1.14974959250173\\
0.349197776996785	1.13986517076896\\
0.3605403788634	1.12925889143919\\
0.371882980730016	1.11824240987056\\
0.383225582596632	1.10708391451394\\
0.394568184463247	1.09601200804696\\
0.405910786329863	1.08521936175146\\
0.420005236250031	1.0724358535233\\
0.4340996861702	1.06057383699237\\
0.448194136090369	1.04981732527422\\
0.462288586010537	1.04029727106871\\
0.476383035930706	1.03209887268755\\
0.490477485850874	1.02526823062481\\
0.504571935771043	1.01981835862489\\
0.520134249913814	1.01538643597292\\
0.535696564056585	1.01256756214591\\
0.551258878199356	1.011282879822\\
0.566821192342126	1.01143262339553\\
0.582383506484897	1.01290145664987\\
0.597945820627668	1.01556310536716\\
0.613508134770439	1.01928432225309\\
0.631434358459007	1.0247056668982\\
0.649360582147575	1.03114105514191\\
0.667286805836143	1.03838360071226\\
0.685213029524711	1.04623338177835\\
0.703139253213279	1.05450032200757\\
0.721065476901846	1.06300638508374\\
0.738991700590414	1.07158713307773\\
0.765535105707778	1.0841141421114\\
0.792078510825141	1.09605425543369\\
0.818621915942505	1.10705608137849\\
0.845165321059868	1.11683962152881\\
0.871708726177231	1.12519212490132\\
0.898252131294595	1.1319628927011\\
0.924795536411958	1.13705718902402\\
0.951338941529322	1.14042949626664\\
0.977882346646685	1.14207657830579\\
1.00442575176405	1.14203075582805\\
1.03096915688141	1.14035356615449\\
1.05751256199878	1.13712984496964\\
1.08405596711614	1.13246227741718\\
1.1105993722335	1.12646649537487\\
1.13714277735087	1.11926676831781\\
1.16368618246823	1.11099227893313\\
1.19449073706786	1.10021447902426\\
1.22529529166749	1.08836805989579\\
1.25609984626713	1.07564997552053\\
1.28690440086676	1.06224723048441\\
1.31241953469769	1.05075261013187\\
1.33793466852862	1.03900031524243\\
1.36344980235955	1.02707539474759\\
1.38896493619049	1.01505575082207\\
1.41448007002142	1.00301194592105\\
1.43999520385235	0.991007208771349\\
1.47789961360104	0.973364045678889\\
1.51580402334973	0.956085376696826\\
1.55370843309842	0.939295961165198\\
1.59161284284711	0.923092153663333\\
1.6295172525958	0.907544880362154\\
1.6674216623445	0.89270267046824\\
1.70532607209319	0.878594712735354\\
1.74323048184188	0.865233839247726\\
1.78113489159057	0.852619314223082\\
1.81903930133926	0.840739364600837\\
1.85738672386942	0.829447067108337\\
1.89573414639957	0.818857311009425\\
1.93408156892973	0.808936657361051\\
1.97242899145988	0.799648115204726\\
2.01077641399004	0.790952502720732\\
2.04912383652019	0.78280957867182\\
2.08747125905035	0.775178969953389\\
2.13480783757584	0.766407121861316\\
2.18214441610134	0.758283227490896\\
2.22948099462683	0.750739107864459\\
2.27681757315233	0.74371133317451\\
2.32415415167782	0.737141688275308\\
2.37149073020332	0.730977406203547\\
2.41882730872881	0.72517119343457\\
2.48418311392144	0.717666370728334\\
2.54953891911408	0.710670051902096\\
2.61489472430671	0.704104234717249\\
2.68025052949934	0.697905628795864\\
2.74560633469198	0.692023638489229\\
2.81096213988461	0.686418447861701\\
2.87631794507724	0.6810591995248\\
2.94167375026987	0.67592229826011\\
3.00702955546251	0.670989908787084\\
3.07238536065514	0.666248683648169\\
3.1392430554008	0.661585983731335\\
3.20610075014645	0.657105089675043\\
3.27295844489211	0.652800097154238\\
3.33981613963776	0.648666433442903\\
3.40667383438342	0.644700366401738\\
3.47353152912907	0.640898636491911\\
3.54038922387473	0.637258187157056\\
3.64229405180416	0.63201278956711\\
3.74419887973359	0.627123652174539\\
3.84610370766302	0.622578268157524\\
3.94800853559245	0.618362952495176\\
4.04991336352189	0.614462982086705\\
4.15181819145132	0.610862789915155\\
4.25372301938075	0.607546217442312\\
4.39239390708575	0.603459036837581\\
4.53106479479075	0.59982525077338\\
4.66973568249575	0.596604547198532\\
4.80840657020075	0.593758368362156\\
4.94707745790576	0.591250317332878\\
5.08574834561076	0.5890463918747\\
5.22441923331576	0.587115120257708\\
5.3882501095173	0.585145428918207\\
5.55208098571885	0.583472852100266\\
5.7159118619204	0.582058737534914\\
5.87974273812194	0.580868630648063\\
6.04357361432349	0.579871951213557\\
6.20740449052503	0.579041685638398\\
6.37123536672658	0.578354063064217\\
6.5656979581504	0.577694430473042\\
6.76016054957421	0.577176364402786\\
6.95462314099803	0.576774281472199\\
7.14908573242185	0.576466596063791\\
7.34354832384566	0.576235198218074\\
7.53801091526948	0.576064970594611\\
7.73247350669329	0.575943358560562\\
7.97625128211937	0.575843897346151\\
8.22002905754545	0.575788336691604\\
8.46380683297153	0.575764041593862\\
8.70758460839761	0.575761371072956\\
8.95136238382369	0.575773063484946\\
9.19514015924976	0.575793729348618\\
9.43891793467584	0.575819439962328\\
9.73891793467584	0.575853955028601\\
10.0389179346758	0.575888452083833\\
10.3389179346758	0.575920834622795\\
10.6389179346758	0.575949978769403\\
10.9389179346758	0.57597541824806\\
11.2389179346758	0.575997103204315\\
11.5389179346758	0.576015228711558\\
11.8389179346758	0.576030124184875\\
12.1389179346758	0.57604218345925\\
12.4389179346758	0.57605181683329\\
12.7389179346759	0.57605941767313\\
13.0389179346759	0.576065343269023\\
13.3389179346759	0.576069907911218\\
13.6389179346759	0.576073382500546\\
13.9389179346759	0.576075995900066\\
14.2389179346759	0.576077937128146\\
14.5389179346759	0.57607935934853\\
14.8389179346759	0.576080385323873\\
15	0.576080807701696\\
};
\addplot [color=mycolor2, thick]
  table[row sep=crcr]{%
0	-0.862605362039618\\
4.75332248397097e-08	-0.862605305247019\\
9.50664496794195e-08	-0.862605248454358\\
1.42599674519129e-07	-0.862605191661633\\
3.49255436102956e-07	-0.862604944748514\\
5.55911197686783e-07	-0.862604697834212\\
7.6256695927061e-07	-0.862604450918706\\
9.69222720854437e-07	-0.862604204001991\\
1.89468293915099e-06	-0.862603098227502\\
2.82014315744754e-06	-0.862601992428712\\
3.7456033757441e-06	-0.86260088660561\\
4.67106359404065e-06	-0.862599780758184\\
1.39256657770062e-05	-0.862588720942434\\
2.31802679599717e-05	-0.862577658682593\\
3.24348701429372e-05	-0.862566593968804\\
4.16894723259028e-05	-0.862555526790757\\
7.28225490253081e-05	-0.862518277870887\\
0.000103955625724713	-0.862481000557289\\
0.000135088702424119	-0.862443694459003\\
0.000166221779123524	-0.862406359187986\\
0.000197354855822929	-0.862368994358996\\
0.000338688446366252	-0.86219898859354\\
0.000480022036909575	-0.862028330508734\\
0.000621355627452898	-0.861856985611313\\
0.000762689217996221	-0.861684920200823\\
0.000904022808539544	-0.861512101380431\\
0.00104535639908287	-0.861338497064592\\
0.00135843550409311	-0.860950979943627\\
0.00167151460910336	-0.860559129644569\\
0.00198459371411361	-0.860162638178592\\
0.00229767281912385	-0.8597612154871\\
0.0026107519241341	-0.859354588959517\\
0.00292383102914434	-0.858942502952589\\
0.00336685904084663	-0.858349582692531\\
0.00380988705254891	-0.857744630551912\\
0.0042529150642512	-0.857127084038239\\
0.00469594307595348	-0.856496438774907\\
0.00513897108765577	-0.855852245620296\\
0.00558199909935805	-0.855194107778394\\
0.00602502711106034	-0.854521677928721\\
0.00666766713872375	-0.853520269788113\\
0.00731030716638716	-0.852487409501443\\
0.00795294719405057	-0.851422518456443\\
0.00859558722171398	-0.850325178591732\\
0.00923822724937739	-0.849195116503667\\
0.0098808672770408	-0.848032188304754\\
0.0105235073047042	-0.846836365285833\\
0.0112531065518746	-0.845438952432581\\
0.0119827057990449	-0.843999481598282\\
0.0127123050462153	-0.842518323216588\\
0.0134419042933856	-0.840995953292698\\
0.0141715035405559	-0.839432934908757\\
0.0149011027877263	-0.837829901422526\\
0.0156307020348966	-0.836187541353151\\
0.0165419803048852	-0.834082092776164\\
0.0174532585748738	-0.831917920186329\\
0.0183645368448624	-0.829696549676611\\
0.019275815114851	-0.827419522322577\\
0.0201870933848395	-0.825088372475862\\
0.0210983716548281	-0.822704609932278\\
0.0220096499248167	-0.820269705816151\\
0.0229492275691682	-0.817707141916061\\
0.0238888052135198	-0.815093216544421\\
0.0248283828578714	-0.812429353517268\\
0.0257679605022229	-0.809716902152658\\
0.0267075381465745	-0.806957135641151\\
0.027647115790926	-0.804151251152039\\
0.0285866934352776	-0.801300371369737\\
0.0295310460883144	-0.798390724896183\\
0.0304753987413512	-0.795437683534569\\
0.031419751394388	-0.792442179137135\\
0.0323641040474248	-0.789405082629528\\
0.0333084567004616	-0.786327208885649\\
0.0342528093534985	-0.7832093217095\\
0.0351971620065353	-0.780052138803165\\
0.0362881467364139	-0.776356694030321\\
0.0373791314662925	-0.772610687303557\\
0.0384701161961712	-0.76881504536124\\
0.0395611009260498	-0.764970647243881\\
0.0406520856559284	-0.761078331913489\\
0.0417430703858071	-0.757138905093165\\
0.0428340551156857	-0.753153145326227\\
0.0442254555563078	-0.74800382834687\\
0.0456168559969298	-0.742781907606775\\
0.0470082564375519	-0.737488879732309\\
0.048399656878174	-0.732126217784249\\
0.049791057318796	-0.726695378744938\\
0.0511824577594181	-0.721197809129692\\
0.0525738582000402	-0.715634948889265\\
0.0542922166946045	-0.708676930182906\\
0.0560105751891688	-0.701624219026489\\
0.0577289336837332	-0.69447950823181\\
0.0594472921782975	-0.687245483416417\\
0.0611656506728618	-0.679924820524941\\
0.0628840091674261	-0.672520182801584\\
0.0646023676619905	-0.665034217354508\\
0.0665178343926226	-0.65659691541034\\
0.0684333011232548	-0.648065427205431\\
0.070348767853887	-0.639443317303168\\
0.0722642345845191	-0.630734105589488\\
0.0741797013151513	-0.621941264113827\\
0.0760951680457834	-0.613068214712597\\
0.0780106347764156	-0.604118327336955\\
0.0809159074846546	-0.590404390486053\\
0.0838211801928937	-0.576532644287001\\
0.0867264529011328	-0.562514145369816\\
0.0896317256093718	-0.548359671056184\\
0.0925369983176109	-0.534079724010789\\
0.09544227102585	-0.51968453694721\\
0.098347543734089	-0.505184077447399\\
0.102048498309089	-0.48657515177559\\
0.10574945288409	-0.467830594728488\\
0.10945040745909	-0.448969215363577\\
0.11315136203409	-0.430009177716233\\
0.116852316609091	-0.410968012345205\\
0.120553271184091	-0.391862627163329\\
0.125217947914857	-0.367715922253493\\
0.129882624645623	-0.343524315841289\\
0.13454730137639	-0.319317604503344\\
0.139211978107156	-0.295124144820339\\
0.143876654837922	-0.270970889863581\\
0.148541331568688	-0.246883425600658\\
0.155766865156762	-0.209758342517642\\
0.162992398744837	-0.172933042263504\\
0.170217932332911	-0.136484470389979\\
0.177443465920985	-0.100482795579103\\
0.184668999509059	-0.0649917298844485\\
0.191894533097134	-0.0300688364506443\\
0.200982798135329	0.0129711837722932\\
0.210071063173525	0.0549403729301998\\
0.219159328211721	0.0957610656454264\\
0.228247593249916	0.135367618851284\\
0.237335858288112	0.173705556740253\\
0.246424123326308	0.210730730860661\\
0.255512388364503	0.246408511217687\\
0.26565527192151	0.284603964888881\\
0.275798155478518	0.321064928969093\\
0.285941039035525	0.355778448244992\\
0.296083922592532	0.388741709554457\\
0.306226806149539	0.41996098181402\\
0.316369689706546	0.449450622431426\\
0.326512573263554	0.477232150213434\\
0.337855175130169	0.506311062215868\\
0.349197776996785	0.533335792524427\\
0.3605403788634	0.558360607169133\\
0.371882980730016	0.581445881315471\\
0.383225582596632	0.602657014259148\\
0.394568184463247	0.622063439327145\\
0.405910786329863	0.639737726707306\\
0.420005236250031	0.659399505176987\\
0.4340996861702	0.676649477281309\\
0.448194136090369	0.69163654457497\\
0.462288586010537	0.704509901441582\\
0.476383035930706	0.715417629373779\\
0.490477485850874	0.724505491449633\\
0.504571935771043	0.731915921827539\\
0.520134249913814	0.738315605762291\\
0.535696564056585	0.743018009451947\\
0.551258878199356	0.746194351715701\\
0.566821192342126	0.74800718210203\\
0.582383506484897	0.748609931441657\\
0.597945820627668	0.748146624473717\\
0.613508134770439	0.746751740154297\\
0.631434358459007	0.744152596896562\\
0.649360582147575	0.740658174162288\\
0.667286805836143	0.736427556037719\\
0.685213029524711	0.731604071096446\\
0.703139253213279	0.726315946869643\\
0.721065476901846	0.720677050077195\\
0.738991700590414	0.714787697845052\\
0.765535105707778	0.705791035151446\\
0.792078510825141	0.696672630746655\\
0.818621915942505	0.687619891233661\\
0.845165321059868	0.678776110716258\\
0.871708726177231	0.670246453770007\\
0.898252131294595	0.662103549063689\\
0.924795536411958	0.654392679242273\\
0.951338941529322	0.647136539483107\\
0.977882346646685	0.640339520007914\\
1.00442575176405	0.633991493067446\\
1.03096915688141	0.628071124290627\\
1.05751256199878	0.622548744383451\\
1.08405596711614	0.617388811473499\\
1.1105993722335	0.612551989603347\\
1.13714277735087	0.607996874222965\\
1.16368618246823	0.603681402602454\\
1.19449073706786	0.598918869909986\\
1.22529529166749	0.594360565940646\\
1.25609984626713	0.589947615333506\\
1.28690440086676	0.585626086563249\\
1.31241953469769	0.58208106225979\\
1.33793466852862	0.578540903864792\\
1.36344980235955	0.574983802785198\\
1.38896493619049	0.571390653496309\\
1.41448007002142	0.567745020064157\\
1.43999520385235	0.56403304061066\\
1.47789961360104	0.558372232951099\\
1.51580402334973	0.552511928238539\\
1.55370843309842	0.546433633690532\\
1.59161284284711	0.540127443416202\\
1.6295172525958	0.533590796980986\\
1.6674216623445	0.526827285040073\\
1.70532607209319	0.519845513862235\\
1.74323048184188	0.512658057257437\\
1.78113489159057	0.505280522316419\\
1.81903930133926	0.497730734558268\\
1.85738672386942	0.489937182676583\\
1.89573414639957	0.482008090911013\\
1.93408156892973	0.473964838434644\\
1.97242899145988	0.46582877592749\\
2.01077641399004	0.457620872674151\\
2.04912383652019	0.449361439025901\\
2.08747125905035	0.441069913759281\\
2.13480783757584	0.430817771590543\\
2.18214441610134	0.420576942469657\\
2.22948099462683	0.410376835877163\\
2.27681757315233	0.400243988698015\\
2.32415415167782	0.390202083290222\\
2.37149073020332	0.38027202018069\\
2.41882730872881	0.370472039151732\\
2.48418311392144	0.357184105346943\\
2.54953891911408	0.344208056803962\\
2.61489472430671	0.331570632214818\\
2.68025052949934	0.319292026701074\\
2.74560633469198	0.307386794692168\\
2.81096213988461	0.295864678994124\\
2.87631794507724	0.284731372203945\\
2.94167375026987	0.273989205185167\\
3.00702955546251	0.263637749714667\\
3.07238536065514	0.253674332578745\\
3.1392430554008	0.24387878717031\\
3.20610075014645	0.234478269371292\\
3.27295844489211	0.225465198728493\\
3.33981613963776	0.216831052863696\\
3.40667383438342	0.20856660037107\\
3.47353152912907	0.200662089904883\\
3.54038922387473	0.193107403237229\\
3.64229405180416	0.182241989780829\\
3.74419887973359	0.172127908873047\\
3.84610370766302	0.162727729230294\\
3.94800853559245	0.154004238964031\\
4.04991336352189	0.145920740793611\\
4.15181819145132	0.138441286482506\\
4.25372301938075	0.13153084891922\\
4.39239390708575	0.12298019580679\\
4.53106479479075	0.115338105866027\\
4.66973568249575	0.10852638911269\\
4.80840657020075	0.102471142134868\\
4.94707745790576	0.0971028790834403\\
5.08574834561076	0.0923565992190358\\
5.22441923331576	0.088171778399281\\
5.3882501095173	0.0838747573447991\\
5.55208098571885	0.080198463322068\\
5.7159118619204	0.0770661403797144\\
5.87974273812194	0.0744086892526079\\
6.04357361432349	0.072164205633702\\
6.20740449052503	0.0702774866367699\\
6.37123536672658	0.068699523228779\\
6.5656979581504	0.0671678720466025\\
6.76016054957421	0.0659473719502355\\
6.95462314099803	0.064984098297578\\
7.14908573242185	0.0642321857381623\\
7.34354832384566	0.063652833500254\\
7.53801091526948	0.0632133962903065\\
7.73247350669329	0.0628865572246626\\
7.97625128211937	0.0626014632040013\\
8.22002905754545	0.0624216523757708\\
8.46380683297153	0.0623187832402134\\
8.70758460839761	0.0622709372211654\\
8.95136238382369	0.0622613629251259\\
9.19514015924976	0.0622774158555908\\
9.43891793467584	0.0623096830580012\\
9.73891793467584	0.0623616669167868\\
10.0389179346758	0.0624190373409785\\
10.3389179346758	0.0624761279076935\\
10.6389179346758	0.0625295724042809\\
10.9389179346758	0.0625775940063127\\
11.2389179346758	0.0626194637919237\\
11.5389179346758	0.0626551144543646\\
11.8389179346758	0.0626848799382502\\
12.1389179346758	0.062709319123635\\
12.4389179346758	0.0627290926230764\\
12.7389179346759	0.0627448793477106\\
13.0389179346759	0.0627573261834602\\
13.3389179346759	0.0627670219675327\\
13.6389179346759	0.0627744860086823\\
13.9389179346759	0.0627801646483004\\
14.2389179346759	0.0627844330517418\\
14.5389179346759	0.062787600753528\\
14.8389179346759	0.0627899193175505\\
15	0.0627908860492656\\
};

\end{axis}
\end{tikzpicture}%

%% file: plots/transient_con_soft.tex
%
%
\definecolor{mycolor1}{RGB}{0,81,158}%
\definecolor{mycolor2}{RGB}{0, 190, 255}
\definecolor{usGray}{RGB}{139,143,148}%
\begin{tikzpicture}

\begin{axis}[%
    name=mainaxis, 
    at={(0.758in,1.5in)},
    width=1.15in,   
    height=0.8in,
    scale only axis,
    xmin=0,
    xmax=15,
    ymin=0 ,
    ymax=1.2,
    axis background/.style={fill=white},
    xmajorgrids,
    ymajorgrids,
    ylabel={$u$},
]
    \fill[red!80!black, opacity=0.05] (axis cs:0,1) rectangle (axis cs:18,1.2);
    \draw[red!80!black, opacity=0.5,dashed] (axis cs:0,1) -- (axis cs:18,1); 
    
    \fill[red!80!black, opacity=0.05] (axis cs:0,0.2) rectangle (axis cs:18,0);
    \draw[red!80!black,opacity=0.5,dashed] (axis cs:0,0.2) -- (axis cs:18,0.2);

    \addplot [color=gray, dashed] table[row sep=crcr]{0 0.9091\\ 20 0.9091\\};\label{tik:gray2}

    \addplot [color=mycolor1, thick] table[row sep=crcr]{
5.94903675262398e-05	5.12645528746174e-05\\
8.92455558703722e-05	8.95152258671139e-05\\
0.000119000744214505	0.000136079065262711\\
0.000148755932558637	0.000190884608888524\\
0.000178511120902769	0.000253860748042774\\
0.000330834722065931	0.000701616652245767\\
0.000483158323229092	0.00135228292487617\\
0.000635481924392254	0.00219666047222181\\
0.000787805525555415	0.00322575477345711\\
0.000940129126718577	0.00443077389371305\\
0.00140835038828275	0.00914634745466696\\
0.00187657164984693	0.01520623604648\\
0.0023447929114111	0.022389949009016\\
0.00281301417297528	0.0304941993876231\\
0.00328123543453946	0.0393325592395082\\
0.00374945669610363	0.048735008039157\\
0.00447981992343722	0.0641672594375682\\
0.00521018315077081	0.0800889830787808\\
0.00594054637810441	0.0960722592452866\\
0.006670909605438	0.111766168049736\\
0.00740127283277159	0.126889642486693\\
0.00813163606010518	0.141224364962043\\
0.00903228644112706	0.157577157375168\\
0.00993293682214894	0.172281531148125\\
0.0108335872031708	0.185227797287282\\
0.0117342375841927	0.196382262728156\\
0.0126348879652146	0.205771592182768\\
0.0135355383462365	0.213469069042934\\
0.0146992805209031	0.221088608967286\\
0.0158630226955697	0.226353265058608\\
0.0170267648702363	0.22958291631057\\
0.0181905070449029	0.23110727062834\\
0.0193542492195695	0.231248241378411\\
0.0205179913942362	0.230306505793802\\
0.0216817335689028	0.228552501336998\\
0.0231149783675184	0.225622327461463\\
0.0245482231661341	0.222179714122926\\
0.0259814679647497	0.218507231936772\\
0.0274147127633654	0.21481366750463\\
0.028847957561981	0.21124363274172\\
0.0302812023605967	0.207888017409363\\
0.0317144471592123	0.204794947765293\\
0.033650275737842	0.201058560487801\\
0.0355861043164717	0.197805076357206\\
0.0375219328951014	0.194970824517499\\
0.0394577614737311	0.192475275836922\\
0.0413935900523608	0.190236858378737\\
0.0433294186309905	0.188182506815884\\
0.0452652472096201	0.186252756691079\\
0.0475787000401237	0.184048698689238\\
0.0498921528706272	0.181910391295477\\
0.0522056057011308	0.179811383438328\\
0.0545190585316343	0.177740499440315\\
0.0568325113621379	0.175696044995416\\
0.0591459641926414	0.173681829857916\\
0.061459417023145	0.171704065221717\\
0.0648229065114633	0.168906364293713\\
0.0681863959997816	0.166212050534014\\
0.0715498854880998	0.163626092975228\\
0.0749133749764181	0.161148239254026\\
0.0782768644647364	0.158776103151736\\
0.0816403539530547	0.156506107984013\\
0.085003843441373	0.154333541099407\\
0.0910185701975383	0.150676304351784\\
0.0970332969537036	0.147290184727128\\
0.103048023709869	0.144156926982057\\
0.109062750466034	0.141262467473445\\
0.1150774772222	0.138592220489333\\
0.121092203978365	0.136130331195346\\
0.130716276413736	0.132589890255467\\
0.140340348849108	0.129496237853484\\
0.149964421284479	0.126804770853417\\
0.15958849371985	0.124474353450363\\
0.169212566155222	0.122467195324663\\
0.182262532166814	0.12020228496987\\
0.195312498178407	0.118397574291941\\
0.20836246419	0.116990510769273\\
0.221412430201593	0.11592586104169\\
0.234462396213186	0.115155604162214\\
0.247512362224779	0.114638891191766\\
0.269151649268809	0.11424716293346\\
0.290790936312839	0.114322176073251\\
0.312430223356869	0.114757948719349\\
0.334069510400898	0.115471439597259\\
0.355708797444928	0.116396956307432\\
0.377348084488958	0.117482198778994\\
0.408128360995123	0.119221773399987\\
0.438908637501287	0.121115036563489\\
0.469688914007451	0.123099669915656\\
0.500469190513616	0.125130289942586\\
0.53124946701978	0.127174449223182\\
0.562029743525944	0.129209317226316\\
0.605001962060835	0.13200450722999\\
0.647974180595726	0.134722334446758\\
0.690946399130617	0.137346163084529\\
0.732696637621808	0.139795712895583\\
0.732696637621815	0.139795712895583\\
0.733334386748233	0.139885317713283\\
0.733972135874651	0.140082063458818\\
0.73460988500107	0.140433858553414\\
0.735749614797906	0.141500308553818\\
0.736889344594742	0.143283468138026\\
0.738029074391578	0.14592614228689\\
0.739168804188414	0.149518302453796\\
0.741258966692088	0.158666236435443\\
0.743349129195762	0.171140828426552\\
0.745439291699435	0.18669944242674\\
0.747529454203109	0.204941832874387\\
0.749619616706782	0.225417760805211\\
0.752436764555991	0.255775038148634\\
0.755253912405201	0.28837629563591\\
0.75807106025441	0.322308082471068\\
0.760888208103619	0.356787628873818\\
0.763705355952828	0.391171672882696\\
0.765956147597659	0.418256821510194\\
0.765956147597666	0.418256821510278\\
0.766553424930181	0.425325987642265\\
0.767150702262695	0.43234986035764\\
0.76774797959521	0.439324494004261\\
0.770352911932362	0.469119444130871\\
0.772957844269515	0.497760583532321\\
0.775562776606667	0.525086816305984\\
0.77816770894382	0.550977724434039\\
0.781599780219385	0.582781565525418\\
0.78503185149495	0.611885069072839\\
0.788463922770516	0.638308878196046\\
0.791895994046081	0.662152437301642\\
0.795328065321646	0.683559255689819\\
0.801128551274248	0.714673486581756\\
0.806929037226851	0.740159506377223\\
0.812729523179453	0.76091041148979\\
0.818530009132055	0.777793702427366\\
0.824330495084657	0.791593850107227\\
0.831665342931552	0.805673663871482\\
0.839000190778447	0.817012859461226\\
0.839167188149244	0.817246253393082\\
0.839167188149251	0.817246253393092\\
0.839954494843681	0.818268818260165\\
0.840741801538112	0.819157589105651\\
0.841529108232543	0.819867016996362\\
0.842946191970531	0.820671279621718\\
0.844363275708518	0.820749805420171\\
0.845780359446506	0.820035547954698\\
0.847197443184494	0.818537027205677\\
0.849673140830041	0.814318725923222\\
0.852148838475588	0.808483331172652\\
0.854624536121135	0.801600879873794\\
0.857100233766682	0.794195957547446\\
0.860733162433376	0.783145430654764\\
0.864366091100069	0.772625246896421\\
0.867999019766763	0.762962128568367\\
0.871631948433457	0.754100311929924\\
0.875264877100151	0.745844044018139\\
0.880598683728518	0.734519317191713\\
0.885932490356885	0.723927246491696\\
0.891266296985251	0.714007940640104\\
0.896600103613618	0.704774193791497\\
0.901933910241985	0.696218450327601\\
0.914633337161827	0.678289958441805\\
0.927332764081668	0.663402662383998\\
0.94003219100151	0.651108726096151\\
0.952731617921352	0.641088316722967\\
0.965431044841194	0.633060055770007\\
0.980471820521895	0.625744471994258\\
0.995512596202597	0.62042956135459\\
1.0105533718833	0.616809679285582\\
1.025594147564	0.614630306713919\\
1.0406349232447	0.613649845361286\\
1.0556756989254	0.613662837846129\\
1.08047763748093	0.61544593690835\\
1.10527957603646	0.618933586970519\\
1.13008151459198	0.623664660704176\\
1.15488345314751	0.629287744698222\\
1.17968539170304	0.635550708035296\\
1.20448733025857	0.64225483395778\\
1.21576320689173	0.645403649372267\\
1.21576320689175	0.64540364937227\\
1.2217942941732	0.647110510899116\\
1.22782538145466	0.648828407734777\\
1.23283413053473	0.650262328269942\\
1.2378428796148	0.651702405076706\\
1.24285162869488	0.65314775798721\\
1.264314550307	0.659377183731677\\
1.28577747191913	0.665646068686109\\
1.30724039353125	0.671919422551738\\
1.32870331514337	0.6781667180705\\
1.36372040693861	0.688234142944939\\
1.39873749873386	0.698095489247766\\
1.42748173839686	0.706001395799692\\
1.45622597805985	0.713712806479686\\
1.48497021772285	0.721217199899383\\
1.51371445738585	0.728507046602072\\
1.57893541937361	0.744238360034259\\
1.64415638136137	0.758836605220758\\
1.70937734334913	0.77232857161939\\
1.77459830533689	0.784765291962772\\
1.83981926732465	0.796207049660977\\
1.93263478168043	0.810897541307297\\
2.0254502960362	0.823887087961802\\
2.11826581039198	0.835360448833676\\
2.21108132474776	0.845494548397701\\
2.30389683910354	0.854455405602579\\
2.39671235345932	0.862396123810782\\
2.51084280561197	0.870965851491633\\
2.62497325776462	0.87843128105883\\
2.73910370991727	0.884989242957303\\
2.85323416206992	0.89080635486946\\
2.96736461422257	0.896021804577401\\
3.08149506637522	0.900750381574369\\
3.24562182862528	0.906881528535452\\
3.40974859087534	0.912409155262406\\
3.57387535312541	0.917483136146293\\
3.73800211537547	0.92220687003241\\
3.90212887762554	0.92664970951788\\
4.0662556398756	0.930856389305493\\
4.25489403319794	0.935435304464715\\
4.44353242652027	0.939762258710573\\
4.6321708198426	0.943849760728437\\
4.82080921316494	0.947705306866119\\
5.00944760648727	0.951334449038612\\
5.19808599980961	0.954742416822977\\
5.4600657467971	0.959119949335044\\
5.72204549378459	0.963101941346338\\
5.98402524077208	0.966712292504898\\
6.24600498775956	0.969978139809001\\
6.50798473474705	0.972927724886311\\
6.76996448173454	0.97558885803512\\
7.12996448173454	0.978823218565893\\
7.48996448173455	0.981626868788343\\
7.84996448173455	0.98405756613457\\
8.20996448173454	0.986165683341198\\
8.56996448173454	0.987994557835314\\
8.92996448173454	0.989581425176752\\
9.28996448173454	0.990958411041998\\
9.64996448173454	0.992153334005337\\
10.0099644817345	0.993190448666251\\
10.3699644817345	0.99409063384844\\
10.4436607789625	0.994259710258543\\
10.4436607789626	0.99425971025854\\
10.4567102670559	0.990931287612234\\
10.4697597551492	0.984763770533323\\
10.4828092432425	0.977717464637447\\
10.5004271336021	0.969080309575755\\
10.5180450239618	0.962920795842283\\
10.5356629143214	0.959570721573762\\
10.553280804681	0.958109090261321\\
10.5814887693115	0.956958674094093\\
10.609696733942	0.954667554152181\\
10.6379046985725	0.951250389762663\\
10.666112663203	0.94977409193929\\
10.6943206278335	0.952396572208087\\
10.7299519988119	0.953059852547593\\
10.7655833697903	0.94875395136658\\
10.8012147407688	0.944879986606367\\
10.8368461117472	0.944208648158369\\
10.8724774827256	0.944272387767931\\
10.9131109509035	0.942920934986862\\
10.9537444190815	0.940876794520609\\
10.9943778872594	0.939443260799329\\
11.0350113554373	0.938553348779809\\
11.0756448236152	0.93760829512539\\
11.1581077821789	0.935489914734731\\
11.2405707407426	0.933599863699477\\
11.3230336993063	0.931994201835135\\
11.40549665787	0.93057047519229\\
11.5213299583129	0.928791435193347\\
11.6371632587558	0.927221611590134\\
11.7529965591986	0.925822067436638\\
11.8688298596415	0.924545080132905\\
11.9846631600843	0.923358817458087\\
12.1698835346324	0.921617680281642\\
12.3551039091805	0.920029954873876\\
12.3862919288963	0.919776615034596\\
12.3862919288965	0.919776615034603\\
12.4731829173858	0.91910034794747\\
12.5600739058752	0.918454693586231\\
12.6469648943645	0.917840533334502\\
12.8322381481414	0.916623933358768\\
13.0175114019183	0.915543748547616\\
13.2027846556951	0.914600603773685\\
13.388057909472	0.91378859644416\\
13.6921006892173	0.912712062720782\\
13.9961434689627	0.911903482846956\\
14.300186248708	0.911299020197379\\
14.6042290284533	0.910842002742566\\
14.9082718081986	0.910491069691794\\
15.2682718081986	0.910178241305946\\
15.6282718081986	0.909950118391393\\
15.9882718081986	0.909787902419922\\
16.3482718081986	0.909676098458558\\
16.7082718081986	0.909600868714442\\
17.0682718081986	0.909550952360114\\
17.4282718081986	0.909518423192058\\
17.7882718081986	0.909498219182499\\
18	0.90949062743551\\ 
    };
    \coordinate (zoomBoxBL) at (axis cs:10, 0.95);
    \coordinate (zoomBoxBR) at (axis cs:11, 0.95);
    \coordinate (zoomBoxTR) at (axis cs:11, 1.05);
    \coordinate (zoomBoxTL) at (axis cs:10, 1.05);
    \draw[black, line width=0.01pt] (zoomBoxBL) rectangle (zoomBoxTR);

\end{axis}

\begin{axis}[%
    scale only axis,
    at={(mainaxis.south west)},
    width=0.28in,
    height=0.2in,
    xshift=0.5in, 
    yshift=0.3in, 
    xmin=10.4,    
    xmax=10.53,    
    ymin=0.94,   
    ymax=1.02,
    axis background/.style={fill=white},
    xtick=\empty,
    ytick=\empty,
    xlabel={},
    ylabel={},
    xmajorgrids,
    ymajorgrids,
    grid style={line width=0.05pt, draw=gray!20}
]
    \addplot [color=mycolor1, thick] table[row sep=crcr]{
7.48996448173455	0.981626868788343\\
7.84996448173455	0.98405756613457\\
8.20996448173454	0.986165683341198\\
8.56996448173454	0.987994557835314\\
8.92996448173454	0.989581425176752\\
9.28996448173454	0.990958411041998\\
9.64996448173454	0.992153334005337\\
10.0099644817345	0.993190448666251\\
10.3699644817345	0.99409063384844\\
10.4436607789625	0.994259710258543\\
10.4436607789626	0.99425971025854\\
10.4567102670559	0.990931287612234\\
10.4697597551492	0.984763770533323\\
10.4828092432425	0.977717464637447\\
10.5004271336021	0.969080309575755\\
10.5180450239618	0.962920795842283\\
10.5356629143214	0.959570721573762\\
10.553280804681	0.958109090261321\\
10.5814887693115	0.956958674094093\\
10.609696733942	0.954667554152181\\
10.6379046985725	0.951250389762663\\
10.666112663203	0.94977409193929\\
10.6943206278335	0.952396572208087\\
10.7299519988119	0.953059852547593\\
10.7655833697903	0.94875395136658\\
10.8012147407688	0.944879986606367\\
10.8368461117472	0.944208648158369\\
10.8724774827256	0.944272387767931\\
10.9131109509035	0.942920934986862\\
10.9537444190815	0.940876794520609\\
10.9943778872594	0.939443260799329\\
11.0350113554373	0.938553348779809\\
11.0756448236152	0.93760829512539\\
11.1581077821789	0.935489914734731\\
11.2405707407426	0.933599863699477\\
11.3230336993063	0.931994201835135\\
11.40549665787	0.93057047519229\\
11.5213299583129	0.928791435193347\\
11.6371632587558	0.927221611590134\\
11.7529965591986	0.925822067436638\\
    };

    \coordinate (insetBR) at (rel axis cs:1,0);
    \coordinate (insetTL) at (rel axis cs:0,1);

\end{axis}

Draws dashed indicator lines from the main axis target region to the inset limits
\draw[black, dotted, thin] (zoomBoxBR) -- (insetBR);
\draw[black, dotted, thin] (zoomBoxTL) -- (insetTL);

\begin{axis}[%
width=1.15in,
height=0.8in,
at={(0.758in,0.481in)},
scale only axis,
xmin=0,
xmax=17,
ymin=-1,
ymax=2,
axis background/.style={fill=white},
xmajorgrids,
ymajorgrids,
ylabel style = {yshift=-3.5mm},
ylabel={$y$},
xlabel={$t$},
xlabel style = {yshift=1.5mm},
legend style={
    at={(0.97,0.03)},
    anchor=south east,
    nodes={scale=0.65, transform shape}
  }
]

\addplot [color=gray, dashed, forget plot]
  table[row sep=crcr]{%
0	0.852272727272727\\
18	0.852272727272727\\
};

\addplot [color=gray, dashed, forget plot]
  table[row sep=crcr]{%
0	0.909090909090909\\
18	0.909090909090909\\
};

\addplot [color=mycolor1, thick]
  table[row sep=crcr]{%
2.97351791821074e-05	-3.48849130613148\\
5.94903675262398e-05	-3.48783889889966\\
8.92455558703722e-05	-3.48718656883422\\
0.000119000744214505	-3.48653431717222\\
0.000148755932558637	-3.48588214513832\\
0.000178511120902769	-3.48523005394513\\
0.000330834722065931	-3.48189316025343\\
0.000483158323229092	-3.47855857244443\\
0.000635481924392254	-3.47522644065488\\
0.000787805525555415	-3.47189690773907\\
0.000940129126718577	-3.46857010949185\\
0.00140835038828275	-3.45836245834657\\
0.00187657164984693	-3.4481851970856\\
0.0023447929114111	-3.43804115076264\\
0.00281301417297528	-3.42793264466\\
0.00328123543453946	-3.41786154579975\\
0.00374945669610363	-3.40782930110913\\
0.00447981992343722	-3.39226032723038\\
0.00521018315077081	-3.37679081357667\\
0.00594054637810441	-3.36142142629759\\
0.006670909605438	-3.34615142521693\\
0.00740127283277159	-3.33097893983204\\
0.00813163606010518	-3.31590119935544\\
0.00903228644112706	-3.2974333276758\\
0.00993293682214894	-3.27909676235409\\
0.0108335872031708	-3.26088332865147\\
0.0117342375841927	-3.24278459007623\\
0.0126348879652146	-3.22479214492906\\
0.0135355383462365	-3.20689785082489\\
0.0146992805209031	-3.18390926835802\\
0.0158630226955697	-3.16105653623878\\
0.0170267648702363	-3.1383263392619\\
0.0181905070449029	-3.11570734431233\\
0.0193542492195695	-3.09319017129282\\
0.0205179913942362	-3.07076727603922\\
0.0216817335689028	-3.04843279587626\\
0.0231149783675184	-3.02104120693353\\
0.0245482231661341	-2.99377171729225\\
0.0259814679647497	-2.96662117336927\\
0.0274147127633654	-2.93958816433715\\
0.028847957561981	-2.91267256265975\\
0.0302812023605967	-2.88587509072395\\
0.0317144471592123	-2.8591969470132\\
0.033650275737842	-2.82335575882705\\
0.0355861043164717	-2.78773829945471\\
0.0375219328951014	-2.75234749465009\\
0.0394577614737311	-2.71718554882598\\
0.0413935900523608	-2.68225389908574\\
0.0433294186309905	-2.64755323059734\\
0.0452652472096201	-2.61308358053191\\
0.0475787000401237	-2.57219216664606\\
0.0498921528706272	-2.53162841899928\\
0.0522056057011308	-2.49139045290697\\
0.0545190585316343	-2.45147616752971\\
0.0568325113621379	-2.41188337797923\\
0.0591459641926414	-2.37260994784372\\
0.061459417023145	-2.33365384301374\\
0.0648229065114633	-2.27757826540887\\
0.0681863959997816	-2.22216350129465\\
0.0715498854880998	-2.1674039614168\\
0.0749133749764181	-2.11329422117995\\
0.0782768644647364	-2.05982890569176\\
0.0816403539530547	-2.00700256891247\\
0.085003843441373	-1.95480967381973\\
0.0910185701975383	-1.86303806082619\\
0.0970332969537036	-1.7732427130673\\
0.103048023709869	-1.68539193481049\\
0.109062750466034	-1.59945410065083\\
0.1150774772222	-1.51539768605876\\
0.121092203978365	-1.43319126767399\\
0.130716276413736	-1.30542433605916\\
0.140340348849108	-1.18218645724234\\
0.149964421284479	-1.06335156830064\\
0.15958849371985	-0.948795215603366\\
0.169212566155222	-0.838394692140627\\
0.182262532166814	-0.69511723480375\\
0.195312498178407	-0.558963590320648\\
0.20836246419	-0.429645836607501\\
0.221412430201593	-0.306883731913471\\
0.234462396213186	-0.19040487853533\\
0.247512362224779	-0.0799449049939092\\
0.269151649268809	0.0906383377413278\\
0.290790936312839	0.246491861278924\\
0.312430223356869	0.388672676563214\\
0.334069510400898	0.518177260970829\\
0.355708797444928	0.635943275064043\\
0.377348084488958	0.74285166152855\\
0.408128360995123	0.87782912750996\\
0.438908637501287	0.994700908018703\\
0.469688914007451	1.09546593669438\\
0.500469190513616	1.18193884540968\\
0.53124946701978	1.25576409468374\\
0.562029743525944	1.31842913514397\\
0.605001962060835	1.38977090457918\\
0.647974180595726	1.44518930668337\\
0.690946399130617	1.48746583432199\\
0.732696637621808	1.51822486462916\\
0.732696637621815	1.51822486462916\\
0.733334386748233	1.51862533644518\\
0.733972135874651	1.51902349075773\\
0.73460988500107	1.51941883064684\\
0.735749614797906	1.52011764112927\\
0.736889344594742	1.52080305832952\\
0.738029074391578	1.52147041638541\\
0.739168804188414	1.52211457898652\\
0.741258966692088	1.52321846265683\\
0.743349129195762	1.5241934259238\\
0.745439291699435	1.52500682671582\\
0.747529454203109	1.52562964673768\\
0.749619616706782	1.52603783353624\\
0.752436764555991	1.5262167586464\\
0.755253912405201	1.52593900249207\\
0.75807106025441	1.52518503088364\\
0.760888208103619	1.523947866664\\
0.763705355952828	1.52223071146915\\
0.765956147597659	1.52052258981038\\
0.765956147597666	1.52052258981038\\
0.766553424930181	1.52001258755811\\
0.767150702262695	1.51948292327942\\
0.76774797959521	1.51893337320963\\
0.770352911932362	1.51632638302502\\
0.772957844269515	1.51336965425752\\
0.775562776606667	1.51007673650903\\
0.77816770894382	1.50646724905099\\
0.781599780219385	1.50126378867242\\
0.78503185149495	1.49560051603568\\
0.788463922770516	1.48953208987594\\
0.791895994046081	1.48311074416554\\
0.795328065321646	1.47638506803081\\
0.801128551274248	1.46444762609461\\
0.806929037226851	1.45196028025185\\
0.812729523179453	1.43908796615116\\
0.818530009132055	1.42596532973129\\
0.824330495084657	1.41269952042613\\
0.831665342931552	1.39584930948186\\
0.839000190778447	1.37902752524348\\
0.839167188149244	1.37864559715354\\
0.839167188149251	1.37864559715353\\
0.839954494843681	1.37684625034697\\
0.840741801538112	1.37504893582525\\
0.841529108232543	1.373254347367\\
0.842946191970531	1.37003242349995\\
0.844363275708518	1.36682526794846\\
0.845780359446506	1.36363837130819\\
0.847197443184494	1.3604772679728\\
0.849673140830041	1.35503407044296\\
0.852148838475588	1.34970617148522\\
0.854624536121135	1.3445083727259\\
0.857100233766682	1.33944921047908\\
0.860733162433376	1.33228023548568\\
0.864366091100069	1.32541041470681\\
0.867999019766763	1.31882342897629\\
0.871631948433457	1.312499236807\\
0.875264877100151	1.30642029353689\\
0.880598683728518	1.29791339210973\\
0.885932490356885	1.28987367697354\\
0.891266296985251	1.28227313975846\\
0.896600103613618	1.27508538119269\\
0.901933910241985	1.26828411179364\\
0.914633337161827	1.25350431565918\\
0.927332764081668	1.24047188694344\\
0.94003219100151	1.22892495056639\\
0.952731617921352	1.21864161637572\\
0.965431044841194	1.20943148176873\\
0.980471820521895	1.1996866298945\\
0.995512596202597	1.19098826335529\\
1.0105533718833	1.18314793968095\\
1.025594147564	1.17600871338274\\
1.0406349232447	1.16944002929142\\
1.0556756989254	1.16333485160177\\
1.08047763748093	1.15405264629383\\
1.10527957603646	1.14549790489458\\
1.13008151459198	1.13746042817929\\
1.15488345314751	1.12979374244635\\
1.17968539170304	1.12239809815439\\
1.20448733025857	1.11520846518975\\
1.21576320689173	1.11199564387862\\
1.21576320689175	1.11199564387861\\
1.2217942941732	1.11029410229592\\
1.22782538145466	1.10860102423268\\
1.23283413053473	1.10720096036668\\
1.2378428796148	1.10580666839147\\
1.24285162869488	1.10441803849529\\
1.264314550307	1.09852572987946\\
1.28577747191913	1.09273195829758\\
1.30724039353125	1.08703828091744\\
1.32870331514337	1.08144726619616\\
1.36372040693861	1.07255795273582\\
1.39873749873386	1.06397197997086\\
1.42748173839686	1.05716356606115\\
1.45622597805985	1.05058287515816\\
1.48497021772285	1.04423817262999\\
1.51371445738585	1.03813535017584\\
1.57893541937361	1.02519505942554\\
1.64415638136137	1.01351185593494\\
1.70937734334913	1.00303468925647\\
1.77459830533689	0.993669005807379\\
1.83981926732465	0.98529399356251\\
1.93263478168043	0.974827673551184\\
2.0254502960362	0.965721391035339\\
2.11826581039198	0.957626044703233\\
2.21108132474776	0.950248918176718\\
2.30389683910354	0.943357336822046\\
2.39671235345932	0.936775470099505\\
2.51084280561197	0.928927599487196\\
2.62497325776462	0.921227852251608\\
2.73910370991727	0.913622664164816\\
2.85323416206992	0.906113706688922\\
2.96736461422257	0.898734516144567\\
3.08149506637522	0.891533362317657\\
3.24562182862528	0.881588610543147\\
3.40974859087534	0.872248500861655\\
3.57387535312541	0.863599160849066\\
3.73800211537547	0.855668446133061\\
3.90212887762554	0.848437113310407\\
4.0662556398756	0.841856161506367\\
4.25489403319794	0.835012359164963\\
4.44353242652027	0.828834487677536\\
4.6321708198426	0.823218049653013\\
4.82080921316494	0.818072794109609\\
5.00944760648727	0.813326624673106\\
5.19808599980961	0.808924735557383\\
5.4600657467971	0.803307365410929\\
5.72204549378459	0.798200860746452\\
5.98402524077208	0.793555680944731\\
6.24600498775956	0.789334772980606\\
6.50798473474705	0.785505685774588\\
6.76996448173454	0.782037601053305\\
7.12996448173454	0.777807154081572\\
7.48996448173455	0.774130152770937\\
7.84996448173455	0.770938404597235\\
8.20996448173454	0.768169328143289\\
8.56996448173454	0.765766979365083\\
8.92996448173454	0.763682500473369\\
9.28996448173454	0.761873704081226\\
9.64996448173454	0.76030414835133\\
10.0099644817345	0.758942034960808\\
10.3699644817345	0.757759891625382\\
10.4436607789625	0.757537866981587\\
10.4436607789626	0.757537866981587\\
10.4567102670559	0.757675589290153\\
10.4697597551492	0.758156863974282\\
10.4828092432425	0.759036605376303\\
10.5004271336021	0.760697158753438\\
10.5180450239618	0.762794767343008\\
10.5356629143214	0.765120487398187\\
10.553280804681	0.767480597549759\\
10.5814887693115	0.771123338412671\\
10.609696733942	0.774532723131301\\
10.6379046985725	0.777854314083316\\
10.666112663203	0.781074269614569\\
10.6943206278335	0.783850895370234\\
10.7299519988119	0.78665623732622\\
10.7655833697903	0.789205252840505\\
10.8012147407688	0.791918078180868\\
10.8368461117472	0.794622015408662\\
10.8724774827256	0.797009309812718\\
10.9131109509035	0.799352849778948\\
10.9537444190815	0.801488690868755\\
10.9943778872594	0.803519373839468\\
11.0350113554373	0.805403588207542\\
11.0756448236152	0.807107036770001\\
11.1581077821789	0.810172144948\\
11.2405707407426	0.812816162745326\\
11.3230336993063	0.815133355716135\\
11.40549665787	0.817194262974794\\
11.5213299583129	0.819765020563928\\
11.6371632587558	0.822080844552243\\
11.7529965591986	0.824232020985648\\
11.8688298596415	0.826268400296893\\
11.9846631600843	0.828215014179005\\
12.1698835346324	0.831158213180208\\
12.3551039091805	0.833896392907069\\
12.3862919288963	0.83433548078208\\
12.3862919288965	0.834335480782083\\
12.4731829173858	0.835503560713903\\
12.5600739058752	0.836615853596902\\
12.6469648943645	0.837670013982214\\
12.8322381481414	0.839745255820744\\
13.0175114019183	0.841568837356598\\
13.2027846556951	0.843143817352934\\
13.388057909472	0.844486997069573\\
13.6921006892173	0.846253533715161\\
13.9961434689627	0.847580746331316\\
14.300186248708	0.848586559586971\\
14.6042290284533	0.849363330007954\\
14.9082718081986	0.849971209407475\\
15.2682718081986	0.850520009739114\\
15.6282718081986	0.850923074344802\\
15.9882718081986	0.851212338861381\\
16.3482718081986	0.851415998294265\\
16.7082718081986	0.851558233056841\\
17.0682718081986	0.851657317297838\\
17.4282718081986	0.851725532288848\\
17.7882718081986	0.851770954561502\\
18	0.851789544534929\\
};
\addlegendentry{$y_1$}
\addplot [color=mycolor2, thick]
  table[row sep=crcr]{%
1.81405864337411e-05	-0.862674385488145\\
2.39378828079243e-05	-0.862696441551286\\
2.97351791821074e-05	-0.862718496559785\\
5.94903675262398e-05	-0.862831679442221\\
8.92455558703722e-05	-0.862944834545777\\
0.000119000744214505	-0.863057961873841\\
0.000148755932558637	-0.863171061429847\\
0.000178511120902769	-0.863284133217303\\
0.000330834722065931	-0.863862538712664\\
0.000483158323229092	-0.864440217106897\\
0.000635481924392254	-0.865017168962194\\
0.000787805525555415	-0.865593394892213\\
0.000940129126718577	-0.86616889555887\\
0.00140835038828275	-0.867933370048604\\
0.00187657164984693	-0.869691021600641\\
0.0023447929114111	-0.871441876326343\\
0.00281301417297528	-0.873185962979233\\
0.00328123543453946	-0.874923312450833\\
0.00374945669610363	-0.87665395733835\\
0.00447981992343722	-0.879340228946506\\
0.00521018315077081	-0.882010402440486\\
0.00594054637810441	-0.884664613258384\\
0.006670909605438	-0.887302996646439\\
0.00740127283277159	-0.889925685648357\\
0.00813163606010518	-0.892532809599772\\
0.00903228644112706	-0.8957265473435\\
0.00993293682214894	-0.89889702533174\\
0.0108335872031708	-0.902044448965546\\
0.0117342375841927	-0.905169007092319\\
0.0126348879652146	-0.908270871883597\\
0.0135355383462365	-0.911350199243972\\
0.0146992805209031	-0.915295890641251\\
0.0158630226955697	-0.919204454529107\\
0.0170267648702363	-0.923076121713425\\
0.0181905070449029	-0.926911091829246\\
0.0193542492195695	-0.930709538450116\\
0.0205179913942362	-0.934471613914896\\
0.0216817335689028	-0.938197453788076\\
0.0231149783675184	-0.942736520360848\\
0.0245482231661341	-0.94722102197059\\
0.0259814679647497	-0.951651157437836\\
0.0274147127633654	-0.956027119426083\\
0.028847957561981	-0.960349099099798\\
0.0302812023605967	-0.964617289451374\\
0.0317144471592123	-0.968831887440077\\
0.033650275737842	-0.974439664936773\\
0.0355861043164717	-0.979950556885858\\
0.0375219328951014	-0.985365093453931\\
0.0394577614737311	-0.990683818378326\\
0.0413935900523608	-0.995907285599732\\
0.0433294186309905	-1.0010360562633\\
0.0452652472096201	-1.00607069592674\\
0.0475787000401237	-1.01196477564508\\
0.0498921528706272	-1.01772619587448\\
0.0522056057011308	-1.0233559207816\\
0.0545190585316343	-1.0288549097992\\
0.0568325113621379	-1.03422411710892\\
0.0591459641926414	-1.03946449146737\\
0.061459417023145	-1.04457697641215\\
0.0648229065114633	-1.0517837249359\\
0.0681863959997816	-1.05872499613884\\
0.0715498854880998	-1.06540364074611\\
0.0749133749764181	-1.07182249281358\\
0.0782768644647364	-1.0779843702609\\
0.0816403539530547	-1.08389207465319\\
0.085003843441373	-1.08954839044432\\
0.0910185701975383	-1.09904522254838\\
0.0970332969537036	-1.10776267989172\\
0.103048023709869	-1.1157162566939\\
0.109062750466034	-1.12292125910574\\
0.1150774772222	-1.12939280355731\\
0.121092203978365	-1.13514581541104\\
0.130716276413736	-1.14289303168847\\
0.140340348849108	-1.14889754490765\\
0.149964421284479	-1.15321745519894\\
0.15958849371985	-1.15590953997153\\
0.169212566155222	-1.15702928871007\\
0.182262532166814	-1.15613290337577\\
0.195312498178407	-1.15257669815508\\
0.20836246419	-1.14648798350401\\
0.221412430201593	-1.13798998345377\\
0.234462396213186	-1.127201935597\\
0.247512362224779	-1.11423915834753\\
0.269151649268809	-1.08824958665824\\
0.290790936312839	-1.05707127851746\\
0.312430223356869	-1.02116296436412\\
0.334069510400898	-0.980957682457495\\
0.355708797444928	-0.936864042126287\\
0.377348084488958	-0.889267406987228\\
0.408128360995123	-0.816239577139826\\
0.438908637501287	-0.737828838068701\\
0.469688914007451	-0.65492812326166\\
0.500469190513616	-0.568356358943322\\
0.53124946701978	-0.47886351248518\\
0.562029743525944	-0.38713508852212\\
0.605001962060835	-0.25650883834717\\
0.647974180595726	-0.124256372697006\\
0.690946399130617	0.00829749260669921\\
0.732696637621808	0.136280258821093\\
0.732696637621815	0.136280258821114\\
0.733334386748233	0.138224021676195\\
0.733972135874651	0.140167387786418\\
0.73460988500107	0.142110344808749\\
0.735749614797906	0.14558173141263\\
0.736889344594742	0.149051834505556\\
0.738029074391578	0.152520564517498\\
0.739168804188414	0.155987850606909\\
0.741258966692088	0.162342558555976\\
0.743349129195762	0.168691645364071\\
0.745439291699435	0.175034438032258\\
0.747529454203109	0.181370117142326\\
0.749619616706782	0.187697737737983\\
0.752436764555991	0.196211564271985\\
0.755253912405201	0.204706093136758\\
0.75807106025441	0.213178410995615\\
0.760888208103619	0.221625513468889\\
0.763705355952828	0.23004438457868\\
0.765956147597659	0.236748473713357\\
0.765956147597666	0.236748473713379\\
0.766553424930181	0.238523421878093\\
0.767150702262695	0.240296855867472\\
0.76774797959521	0.242068718597674\\
0.770352911932362	0.249778499643778\\
0.772957844269515	0.257456558006653\\
0.775562776606667	0.265100544182288\\
0.77816770894382	0.272708415863482\\
0.781599780219385	0.282674041101548\\
0.78503185149495	0.292570262118755\\
0.788463922770516	0.302393677370544\\
0.791895994046081	0.312141353537824\\
0.795328065321646	0.321810788986392\\
0.801128551274248	0.337969486158998\\
0.806929037226851	0.353890485682962\\
0.812729523179453	0.369567689852921\\
0.818530009132055	0.384997040039146\\
0.824330495084657	0.400176066274504\\
0.831665342931552	0.419009929113601\\
0.839000190778447	0.437441084168186\\
0.839167188149244	0.437856036817282\\
0.839167188149251	0.4378560368173\\
0.839954494843681	0.439807948505892\\
0.840741801538112	0.441755352573877\\
0.841529108232543	0.443698158448374\\
0.842946191970531	0.447185001064311\\
0.844363275708518	0.450657731348143\\
0.845780359446506	0.454115851396177\\
0.847197443184494	0.457559237098424\\
0.849673140830041	0.463539842341139\\
0.852148838475588	0.469476236018473\\
0.854624536121135	0.475369081386165\\
0.857100233766682	0.481219113175151\\
0.860733162433376	0.489727660587785\\
0.864366091100069	0.498147996460873\\
0.867999019766763	0.506482517108259\\
0.871631948433457	0.514733455196075\\
0.875264877100151	0.522902862839532\\
0.880598683728518	0.53475328130482\\
0.885932490356885	0.546437774741741\\
0.891266296985251	0.557961712001354\\
0.896600103613618	0.569330134291193\\
0.901933910241985	0.580547779354499\\
0.914633337161827	0.60667418489157\\
0.927332764081668	0.632024054364298\\
0.94003219100151	0.656643898088745\\
0.952731617921352	0.680573121816296\\
0.965431044841194	0.703845319300818\\
0.980471820521895	0.730597830752451\\
0.995512596202597	0.756509728850512\\
1.0105533718833	0.781614765553034\\
1.025594147564	0.805941362171626\\
1.0406349232447	0.829513574755362\\
1.0556756989254	0.852351806033351\\
1.08047763748093	0.888454182948483\\
1.10527957603646	0.922672177626743\\
1.13008151459198	0.955058179720035\\
1.15488345314751	0.985657156721496\\
1.17968539170304	1.01450949325857\\
1.20448733025857	1.04165281845684\\
1.21576320689173	1.05343804939675\\
1.21576320689175	1.05343804939676\\
1.2217942941732	1.05956732534119\\
1.22782538145466	1.06560191555717\\
1.23283413053473	1.07054511126548\\
1.2378428796148	1.07542168876258\\
1.24285162869488	1.08023219242513\\
1.264314550307	1.10016039138668\\
1.28577747191913	1.11893074209359\\
1.30724039353125	1.1365482583784\\
1.32870331514337	1.15303184089207\\
1.36372040693861	1.17756994789426\\
1.39873749873386	1.19927852342143\\
1.42748173839686	1.21506343758853\\
1.45622597805985	1.22909567456672\\
1.48497021772285	1.2414485025741\\
1.51371445738585	1.25219597978685\\
1.57893541937361	1.27101102597285\\
1.64415638136137	1.28281590229922\\
1.70937734334913	1.28849381583326\\
1.77459830533689	1.28890180898939\\
1.83981926732465	1.28485300034994\\
1.93263478168043	1.27289034797187\\
2.0254502960362	1.25541092778647\\
2.11826581039198	1.23412340214749\\
2.21108132474776	1.21046102068328\\
2.30389683910354	1.18559378729651\\
2.39671235345932	1.16044833993571\\
2.51084280561197	1.13017750381958\\
2.62497325776462	1.10146659295135\\
2.73910370991727	1.07487447603666\\
2.85323416206992	1.05068129398031\\
2.96736461422257	1.02896105095398\\
3.08149506637522	1.00964221811021\\
3.24562182862528	0.985737310491698\\
3.40974859087534	0.965769614066034\\
3.57387535312541	0.94898629247118\\
3.73800211537547	0.934696955363778\\
3.90212887762554	0.922319701600704\\
4.0662556398756	0.91138657044353\\
4.25489403319794	0.900141435607994\\
4.44353242652027	0.889941440733512\\
4.6321708198426	0.880535090891496\\
4.82080921316494	0.871768946519554\\
5.00944760648727	0.863551621827213\\
5.19808599980961	0.855828573753168\\
5.4600657467971	0.845859616115218\\
5.72204549378459	0.836727187025621\\
5.98402524077208	0.828390849475013\\
6.24600498775956	0.820804155345211\\
6.50798473474705	0.813915588525959\\
6.76996448173454	0.807673346235584\\
7.12996448173454	0.800057651043046\\
7.48996448173455	0.793438383692223\\
7.84996448173455	0.787692334313212\\
8.20996448173454	0.782706544495973\\
8.56996448173454	0.778380848736935\\
8.92996448173454	0.774627908642283\\
9.28996448173454	0.771371862639762\\
9.64996448173454	0.768546786731519\\
10.0099644817345	0.766095109113559\\
10.3699644817345	0.763967286067873\\
10.4436607789625	0.763567641960673\\
10.4436607789626	0.763567641960672\\
10.4567102670559	0.763502472884128\\
10.4697597551492	0.763450871690955\\
10.4828092432425	0.763423840492437\\
10.5004271336021	0.763433170714644\\
10.5180450239618	0.763513813484491\\
10.5356629143214	0.763679999439115\\
10.553280804681	0.763937399942572\\
10.5814887693115	0.764536363983869\\
10.609696733942	0.765356098775333\\
10.6379046985725	0.766383420734908\\
10.666112663203	0.767607389499489\\
10.6943206278335	0.769007991172668\\
10.7299519988119	0.77098125326023\\
10.7655833697903	0.773140202354377\\
10.8012147407688	0.77546911085361\\
10.8368461117472	0.777965217275561\\
10.8724774827256	0.780612590023465\\
10.9131109509035	0.783778087551196\\
10.9537444190815	0.787061847140816\\
10.9943778872594	0.79043580158357\\
11.0350113554373	0.793878024029265\\
11.0756448236152	0.797365738579825\\
11.1581077821789	0.804468963702043\\
11.2405707407426	0.811514408837534\\
11.3230336993063	0.818383980125734\\
11.40549665787	0.824983272148717\\
11.5213299583129	0.833678220986109\\
11.6371632587558	0.841588245908968\\
11.7529965591986	0.848663137549214\\
11.8688298596415	0.854915634225642\\
11.9846631600843	0.860399036006902\\
12.1698835346324	0.867764120315037\\
12.3551039091805	0.87369374935231\\
12.3862919288963	0.874575050763252\\
12.3862919288965	0.874575050763257\\
12.4731829173858	0.876821962030458\\
12.5600739058752	0.878893325431167\\
12.6469648943645	0.880807064288508\\
12.8322381481414	0.88450108910217\\
13.0175114019183	0.887704669698436\\
13.2027846556951	0.890503483741424\\
13.388057909472	0.892967816472569\\
13.6921006892173	0.896396991815767\\
13.9961434689627	0.899181979650701\\
14.300186248708	0.901401001439303\\
14.6042290284533	0.903121889120075\\
14.9082718081986	0.904429593126838\\
15.2682718081986	0.905572001207291\\
15.6282718081986	0.906399777537759\\
15.9882718081986	0.90700752389121\\
16.3482718081986	0.907453002704244\\
16.7082718081986	0.90777263317492\\
17.0682718081986	0.90799506806408\\
17.4282718081986	0.908146091401342\\
17.7882718081986	0.908247390644299\\
18	0.908290755815753\\
};
\addlegendentry{$y_2$}
\end{axis}
\end{tikzpicture}%

%% file: plots/transient_con_hard.tex
%
%
\definecolor{mycolor1}{RGB}{0,81,158}%
\definecolor{mycolor2}{RGB}{0, 190, 255}
\definecolor{usGray}{RGB}{139,143,148}%
\begin{tikzpicture}

\begin{axis}[%
    name=mainaxis, 
    at={(0.758in,1.5in)},
    width=1.15in,   
    height=0.8in,
    scale only axis,
    xmin=0,
    xmax=15,
    ymin=0 ,
    ymax=1.2,
    axis background/.style={fill=white},
    xmajorgrids,
    ymajorgrids,
    ylabel={$u$},
]
    \fill[red!80!black, opacity=0.05] (axis cs:0,1) rectangle (axis cs:18,1.2);
    \draw[red!80!black, opacity=0.5,dashed] (axis cs:0,1) -- (axis cs:18,1); 
    
    \fill[red!80!black, opacity=0.05] (axis cs:0,0.2) rectangle (axis cs:18,0);
    \draw[red!80!black,opacity=0.5,dashed] (axis cs:0,0.2) -- (axis cs:18,0.2);

    \addplot [color=gray, dashed] table[row sep=crcr]{0 0.9091\\ 20 0.9091\\};

    \addplot [color=mycolor1, thick] table[row sep=crcr]{
0	0.2\\
3.29148401249502e-07	0.2\\
6.58296802499004e-07	0.2\\
9.87445203748505e-07	0.2\\
2.40185241278055e-06	0.2\\
3.8162596218126e-06	0.2\\
5.23066683084465e-06	0.2\\
6.6450740398767e-06	0.2\\
1.25028029768298e-05	0.2\\
1.83605319137828e-05	0.2\\
2.42182608507359e-05	0.2\\
3.0075989787689e-05	0.2\\
6.00482120075413e-05	0.2\\
9.00204342273936e-05	0.2\\
0.000119992656447246	0.2\\
0.000149964878667098	0.2\\
0.000179937100886951	0.2\\
0.000332323598807076	0.2\\
0.000484710096727201	0.2\\
0.000637096594647326	0.2\\
0.000789483092567451	0.2\\
0.000941869590487576	0.2\\
0.00138404454605448	0.2\\
0.00182621950162138	0.2\\
0.00226839445718828	0.2\\
0.00271056941275518	0.2\\
0.00315274436832208	0.2\\
0.00359491932388899	0.2\\
0.00433362333684666	0.2\\
0.00507232734980433	0.2\\
0.005811031362762	0.2\\
0.00654973537571967	0.2\\
0.00728843938867734	0.2\\
0.00802714340163501	0.2\\
0.00893054121565263	0.2\\
0.00983393902967025	0.2\\
0.0107373368436879	0.2\\
0.0116407346577055	0.2\\
0.0125441324717231	0.2\\
0.0134475302857407	0.2\\
0.0146224277507307	0.2\\
0.0157973252157207	0.2\\
0.0169722226807107	0.2\\
0.0181471201457007	0.2\\
0.0193220176106907	0.2\\
0.0204969150756807	0.2\\
0.0216718125406707	0.2\\
0.0231041812400222	0.2\\
0.0245365499393737	0.2\\
0.0259689186387252	0.2\\
0.0274012873380767	0.2\\
0.0288336560374282	0.2\\
0.0302660247367797	0.2\\
0.0316983934361312	0.2\\
0.0336294714734546	0.2\\
0.035560549510778	0.2\\
0.0374916275481013	0.2\\
0.0394227055854247	0.2\\
0.0413537836227481	0.2\\
0.0432848616600715	0.2\\
0.0452159396973949	0.2\\
0.0475942398069608	0.2\\
0.0499725399165266	0.2\\
0.0523508400260925	0.2\\
0.0547291401356584	0.2\\
0.0571074402452243	0.2\\
0.0594857403547902	0.2\\
0.0618640404643561	0.2\\
0.0654611559218425	0.2\\
0.0690582713793289	0.2\\
0.0726553868368153	0.2\\
0.0762525022943017	0.2\\
0.0798496177517881	0.2\\
0.0834467332092745	0.2\\
0.0870438486667609	0.2\\
0.0948395748196346	0.2\\
0.102635300972508	0.2\\
0.110431027125382	0.2\\
0.118226753278256	0.2\\
0.126022479431129	0.2\\
0.133818205584003	0.2\\
0.145679573186037	0.2\\
0.157540940788072	0.2\\
0.169402308390106	0.2\\
0.181263675992141	0.2\\
0.193125043594175	0.2\\
0.20498641119621	0.2\\
0.222976361272891	0.2\\
0.240966311349572	0.2\\
0.258956261426252	0.2\\
0.276946211502933	0.2\\
0.294936161579614	0.2\\
0.312926111656295	0.2\\
0.338332798699639	0.2\\
0.363739485742983	0.2\\
0.389146172786327	0.2\\
0.41455285982967	0.2\\
0.439959546873014	0.2\\
0.465366233916358	0.2\\
0.501191457304087	0.2\\
0.537016680691815	0.2\\
0.572841904079544	0.2\\
0.608667127467273	0.2\\
0.644492350855001	0.2\\
0.68031757424273	0.2\\
0.736084579616188	0.2\\
0.774701678957587	0.2\\
0.774701678957594	0.2\\
0.77508750920008	0.2\\
0.775473339442565	0.2\\
0.775859169685051	0.2\\
0.776959412267718	0.2\\
0.778059654850384	0.2\\
0.779159897433051	0.2\\
0.780260140015717	0.2\\
0.782245372849324	0.2\\
0.784230605682932	0.2\\
0.786215838516539	0.2\\
0.788201071350146	0.2\\
0.790186304183754	0.2\\
0.792778950437179	0.2\\
0.795371596690605	0.2\\
0.797964242944031	0.2\\
0.800556889197457	0.2\\
0.803149535450883	0.2\\
0.807069734006657	0.2\\
0.810989932562431	0.2\\
0.814910131118206	0.2\\
0.81883032967398	0.2\\
0.822750528229754	0.2\\
0.829904494132436	0.2\\
0.837058460035119	0.2\\
0.844212425937801	0.2\\
0.851366391840484	0.2\\
0.853573891945413	0.2\\
0.85357389194542	0.200000000000001\\
0.854501628534333	0.202882072064435\\
0.855429365123245	0.20583777612759\\
0.856357101712158	0.208896119208401\\
0.858281664089495	0.215606288025064\\
0.860206226466832	0.222907375070729\\
0.86213078884417	0.230862262205257\\
0.864055351221507	0.239472064184042\\
0.867292922924893	0.255247736304183\\
0.870530494628279	0.272353162741123\\
0.873768066331665	0.290383416909899\\
0.877005638035052	0.308938084017443\\
0.881670228187441	0.335937460801839\\
0.88633481833983	0.362474945395258\\
0.890999408492219	0.387943997250156\\
0.895663998644608	0.411997918843317\\
0.898715552345879	0.426893303714033\\
0.898715552345886	0.426893303714072\\
0.899350330113535	0.429893147486201\\
0.899985107881185	0.432864267139179\\
0.900619885648835	0.435806035651259\\
0.901737821448125	0.440924769612581\\
0.902855757247416	0.445956579984667\\
0.903973693046706	0.450898225615875\\
0.905091628845997	0.455748981359591\\
0.90713097398682	0.464365951161692\\
0.909170319127642	0.472687890427002\\
0.911209664268465	0.480723028463155\\
0.913249009409288	0.488481193377092\\
0.915288354550111	0.495973051076358\\
0.91858842760321	0.5075611829169\\
0.92188850065631	0.518530579167711\\
0.925188573709409	0.528931849524858\\
0.928488646762508	0.538815963646057\\
0.931788719815608	0.548232811260188\\
0.936125538778493	0.559976391567108\\
0.940462357741379	0.571094938759233\\
0.944799176704265	0.581679485698162\\
0.949135995667151	0.591810192832122\\
0.953472814630037	0.601556419246313\\
0.961899024816125	0.61959890044445\\
0.970325235002213	0.636765338625404\\
0.978751445188301	0.653311440163039\\
0.987177655374389	0.669379441196853\\
0.995603865560477	0.685035361666104\\
1.01047569450743	0.711800654169044\\
1.02534752345438	0.737522724596996\\
1.04021935240133	0.762369598105266\\
1.05509118134828	0.786585861729115\\
1.06996301029524	0.81018811772345\\
1.08483483924219	0.832767619403709\\
1.10600607332612	0.862592757657334\\
1.12717730741006	0.89007087065765\\
1.14834854149399	0.916529188636741\\
1.16951977557792	0.942171196374977\\
1.19069100966186	0.965585233355375\\
1.21186224374579	0.98601530608761\\
1.22767455421788	1\\
1.2276745542179	1\\
1.23427198020141	1\\
1.23957340651417	1\\
1.24487483282693	1\\
1.25017625913969	1\\
1.2637203604155	1\\
1.27726446169132	1\\
1.29080856296713	1\\
1.30435266424295	1\\
1.32714784544659	1\\
1.34994302665023	1\\
1.37273820785388	1\\
1.39553338905752	1\\
1.41832857026116	1\\
1.45292233798962	1\\
1.48751610571808	1\\
1.52210987344654	1\\
1.556703641175	1\\
1.59129740890347	1\\
1.64184733780232	1\\
1.69239726670117	1\\
1.74294719560002	1\\
1.79349712449887	1\\
1.84404705339773	1\\
1.91678327088859	1\\
1.98951948837946	1\\
2.06225570587033	1\\
2.1349919233612	1\\
2.20772814085207	1\\
2.29665508334235	1\\
2.38558202583264	1\\
2.47450896832292	1\\
2.56343591081321	1\\
2.6523628533035	1\\
2.74128979579378	1\\
2.86573345209315	1\\
2.99017710839252	1\\
3.11462076469189	1\\
3.23906442099125	1\\
3.36350807729062	1\\
3.48795173358999	1\\
3.61843705168917	1\\
3.61843705168919	1\\
3.63204764124076	1\\
3.64565823079233	1\\
3.6592688203439	1\\
3.67906822172875	1\\
3.68820949052243	1\\
3.68820949052247	0.999999999999993\\
3.70331819837544	0.994226230542544\\
3.71842690622841	0.98747670108482\\
3.73353561408137	0.980918464893218\\
3.75574411716739	0.972723654049574\\
3.77795262025341	0.966749396774127\\
3.80016112333942	0.962609380175447\\
3.82236962642544	0.959334274547352\\
3.85357645144836	0.95495185606157\\
3.88478327647129	0.950603710474851\\
3.91599010149421	0.946557632940166\\
3.94719692651714	0.94311610297208\\
4.00309693317951	0.938115565865047\\
4.05899693984188	0.933969812671251\\
4.11489694650425	0.93039774392188\\
4.17079695316662	0.927658515505275\\
4.226696959829	0.925731029003497\\
4.29976365035829	0.923990487986895\\
4.37283034088758	0.922906350984139\\
4.44589703141687	0.922402941513937\\
4.51896372194617	0.922355627590075\\
4.59203041247546	0.922586079566202\\
4.69650948133496	0.923174996280697\\
4.80098855019445	0.92386573077022\\
4.90546761905394	0.924490554520133\\
5.00994668791344	0.92492935777322\\
5.11442575677293	0.925123536908188\\
5.24376976254039	0.925011923034166\\
5.37311376830786	0.924532127831284\\
5.50245777407532	0.923748779703413\\
5.63180177984278	0.922749133020066\\
5.76114578561024	0.921622683329218\\
5.92583541398891	0.920125901566362\\
6.09052504236758	0.918676795493941\\
6.25521467074625	0.917360561572451\\
6.41990429912492	0.916220367783922\\
6.58459392750359	0.91526561227009\\
6.77528722102658	0.914372800500725\\
6.96598051454957	0.913672190676561\\
7.15667380807257	0.913118315781755\\
7.34736710159556	0.9126690336197\\
7.53806039511855	0.912291514552422\\
7.82653856220132	0.911813109486229\\
8.11501672928408	0.911406673585276\\
8.40349489636684	0.911055884284803\\
8.6919730634496	0.910758888765542\\
8.98045123053236	0.910515227076257\\
9.34045123053236	0.910278025787694\\
9.70045123053236	0.910100892768416\\
10.0604512305324	0.909967432710836\\
10.4204512305324	0.909864795512441\\
10.7804512305324	0.909785012600206\\
11.1404512305324	0.909723470752553\\
11.5004512305324	0.909676809733984\\
11.8604512305324	0.909641833298956\\
12.2204512305324	0.909615541829591\\
12.5804512305324	0.909595547044805\\
12.9404512305324	0.909580236431776\\
13.3004512305324	0.909568582455957\\
13.6604512305324	0.909559839055522\\
14.0204512305324	0.909553356599158\\
14.3804512305324	0.909548557806843\\
14.7404512305324	0.909544982508699\\
15.1004512305324	0.909542307600976\\
15.4604512305323	0.909540318374553\\
15.8204512305323	0.909538860945961\\
16.1804512305323	0.909537809504819\\
16.5404512305323	0.909537057334617\\
16.9004512305323	0.909536520090203\\
17.2604512305323	0.909536137520526\\
17.6204512305323	0.909535868900036\\
17.9804512305323	0.909535685528201\\
18	0.909535677508032\\
    };

\end{axis}


\begin{axis}[%
width=1.15in,
height=0.8in,
at={(0.758in,0.481in)},
scale only axis,
xmin=0,
xmax=17,
ymin=-1,
ymax=2,
axis background/.style={fill=white},
xmajorgrids,
ymajorgrids,
ylabel style = {yshift=-3.5mm},
ylabel={$y$},
xlabel={$t$},
xlabel style = {yshift=1.5mm},
legend style={
    at={(0.97,0.03)},
    anchor=south east,
    nodes={scale=0.65, transform shape}
}
]

\addplot [color=gray, dashed, forget plot]
  table[row sep=crcr]{%
0	0.852272727272727\\
18	0.852272727272727\\
};

\addplot [color=gray, dashed, forget plot]
  table[row sep=crcr]{%
0	0.909090909090909\\
18	0.909090909090909\\
};

\addplot [color=mycolor1, thick]
  table[row sep=crcr]{%
0	-3.48914335047704\\
3.29148401249502e-07	-3.48913646152557\\
6.58296802499004e-07	-3.48912957258226\\
9.87445203748505e-07	-3.48912268364725\\
2.40185241278055e-06	-3.48909308079867\\
3.8162596218126e-06	-3.48906347810081\\
5.23066683084465e-06	-3.48903387555608\\
6.6450740398767e-06	-3.48900427316533\\
1.25028029768298e-05	-3.48888167731167\\
1.83605319137828e-05	-3.48875908410179\\
2.42182608507359e-05	-3.48863649353575\\
3.0075989787689e-05	-3.48851390561352\\
6.00482120075413e-05	-3.48788670176816\\
9.00204342273936e-05	-3.48725956713014\\
0.000119992656447246	-3.48663250169238\\
0.000149964878667098	-3.48600550544791\\
0.000179937100886951	-3.48537857838976\\
0.000332323598807076	-3.48219218959172\\
0.000484710096727201	-3.47900758814049\\
0.000637096594647326	-3.47582477311936\\
0.000789483092567451	-3.47264374361198\\
0.000941869590487576	-3.46946449870256\\
0.00138404454605448	-3.46024948048423\\
0.00182621950162138	-3.45104945795645\\
0.00226839445718828	-3.44186440881268\\
0.00271056941275518	-3.4326943107784\\
0.00315274436832208	-3.42353914161145\\
0.00359491932388899	-3.41439887910229\\
0.00433362333684666	-3.39916221454774\\
0.00507232734980433	-3.38396698870223\\
0.005811031362762	-3.36881309861569\\
0.00654973537571967	-3.35370044158859\\
0.00728843938867734	-3.33862891517153\\
0.00802714340163501	-3.32359841716455\\
0.00893054121565263	-3.30527249503556\\
0.00983393902967025	-3.28700759680367\\
0.0107373368436879	-3.26880353705272\\
0.0116407346577055	-3.25066013091807\\
0.0125441324717231	-3.23257719408482\\
0.0134475302857407	-3.21455454278618\\
0.0146224277507307	-3.19120539714571\\
0.0157973252157207	-3.16795750591558\\
0.0169722226807107	-3.14481046899505\\
0.0181471201457007	-3.12176388782857\\
0.0193220176106907	-3.09881736539949\\
0.0204969150756807	-3.07597050622372\\
0.0216718125406707	-3.05322291634334\\
0.0231041812400222	-3.02562405299239\\
0.0245365499393737	-2.99817144343386\\
0.0259689186387252	-2.97086438292912\\
0.0274012873380767	-2.94370217005026\\
0.0288336560374282	-2.91668410666356\\
0.0302660247367797	-2.88980949791301\\
0.0316983934361312	-2.86307765220404\\
0.0336294714734546	-2.82726331080912\\
0.035560549510778	-2.79170552239192\\
0.0374916275481013	-2.75640262010613\\
0.0394227055854247	-2.72135294763017\\
0.0413537836227481	-2.68655485909682\\
0.0432848616600715	-2.65200671902326\\
0.0452159396973949	-2.61770690224166\\
0.0475942398069608	-2.57580238250403\\
0.0499725399165266	-2.5342690796513\\
0.0523508400260925	-2.49310402126805\\
0.0547291401356584	-2.45230425797276\\
0.0571074402452243	-2.41186686322908\\
0.0594857403547902	-2.37178893315889\\
0.0618640404643561	-2.33206758635695\\
0.0654611559218425	-2.27266127202311\\
0.0690582713793289	-2.21405433747598\\
0.0726553868368153	-2.15623710062216\\
0.0762525022943017	-2.09919999223019\\
0.0798496177517881	-2.04293355454621\\
0.0834467332092745	-1.98742843992731\\
0.0870438486667609	-1.93267540949334\\
0.0948395748196346	-1.81655154593762\\
0.102635300972508	-1.70382610049173\\
0.110431027125382	-1.59441010563554\\
0.118226753278256	-1.48821681281845\\
0.126022479431129	-1.38516163372563\\
0.133818205584003	-1.28516208584468\\
0.145679573186037	-1.13869158325496\\
0.157540940788072	-0.998834759953678\\
0.169402308390106	-0.865327448515778\\
0.181263675992141	-0.737915364693236\\
0.193125043594175	-0.616353729026336\\
0.20498641119621	-0.500406914148164\\
0.222976361272891	-0.334762964376787\\
0.240966311349572	-0.180760673554586\\
0.258956261426252	-0.0376879144627849\\
0.276946211502933	0.0951276587133425\\
0.294936161579614	0.218320823299024\\
0.312926111656295	0.332490807061108\\
0.338332798699639	0.47944995508255\\
0.363739485742983	0.611022164148893\\
0.389146172786327	0.728577010401849\\
0.41455285982967	0.833376760704607\\
0.439959546873014	0.926584349471166\\
0.465366233916358	1.0092705394696\\
0.501191457304087	1.10987584705188\\
0.537016680691815	1.19400880528365\\
0.572841904079544	1.2638875385684\\
0.608667127467273	1.32148017786181\\
0.644492350855001	1.36852990476868\\
0.68031757424273	1.4065768678013\\
0.736084579616188	1.45104750621075\\
0.774701678957587	1.4732920596593\\
0.774701678957594	1.47329205965931\\
0.77508750920008	1.47348383660328\\
0.775473339442565	1.4736750725425\\
0.775859169685051	1.47386575713506\\
0.776959412267718	1.47440678722937\\
0.778059654850384	1.47494349758379\\
0.779159897433051	1.47547580693173\\
0.780260140015717	1.47600370331135\\
0.782245372849324	1.47694512141695\\
0.784230605682932	1.47787236531385\\
0.786215838516539	1.47878559175298\\
0.788201071350146	1.47968495544935\\
0.790186304183754	1.48057060320118\\
0.792778950437179	1.48170681287641\\
0.795371596690605	1.48282019046028\\
0.797964242944031	1.48391105146252\\
0.800556889197457	1.48497970879449\\
0.803149535450883	1.48602647224396\\
0.807069734006657	1.48756832533566\\
0.810989932562431	1.48906186900733\\
0.814910131118206	1.49050813419493\\
0.81883032967398	1.49190813564454\\
0.822750528229754	1.49326287250179\\
0.829904494132436	1.49562166651212\\
0.837058460035119	1.49783883173008\\
0.844212425937801	1.49992006246612\\
0.851366391840484	1.50187088603939\\
0.853573891945413	1.50244735768569\\
0.85357389194542	1.50244735768569\\
0.854501628534333	1.50267416834933\\
0.855429365123245	1.50288601776327\\
0.856357101712158	1.50308207965288\\
0.858281664089495	1.50344145882495\\
0.860206226466832	1.50372939684911\\
0.86213078884417	1.50393807686279\\
0.864055351221507	1.50406083329927\\
0.867292922924893	1.50405725567959\\
0.870530494628279	1.50377685990763\\
0.873768066331665	1.50320618857467\\
0.877005638035052	1.50233783850914\\
0.881670228187441	1.50056267439754\\
0.88633481833983	1.49817987708664\\
0.890999408492219	1.49521905299173\\
0.895663998644608	1.49172165029761\\
0.898715552345879	1.48916571331184\\
0.898715552345886	1.48916571331183\\
0.899350330113535	1.48860605903495\\
0.899985107881185	1.48803824827227\\
0.900619885648835	1.48746228590546\\
0.901737821448125	1.48643119157591\\
0.902855757247416	1.48537696723862\\
0.903973693046706	1.4842992427831\\
0.905091628845997	1.4831983064012\\
0.90713097398682	1.48113203667467\\
0.909170319127642	1.47899393889632\\
0.911209664268465	1.47678775980213\\
0.913249009409288	1.47451709917339\\
0.915288354550111	1.47218535623681\\
0.91858842760321	1.46829085411698\\
0.92188850065631	1.46425776052367\\
0.925188573709409	1.46009805766544\\
0.928488646762508	1.45582268323286\\
0.931788719815608	1.45144158617146\\
0.936125538778493	1.44553826382817\\
0.940462357741379	1.43948598896416\\
0.944799176704265	1.43330061572134\\
0.949135995667151	1.42699582387007\\
0.953472814630037	1.42058341807106\\
0.961899024816125	1.40785241984233\\
0.970325235002213	1.39481183868308\\
0.978751445188301	1.38150823618536\\
0.987177655374389	1.36797846292139\\
0.995603865560477	1.35425399036494\\
1.01047569450743	1.32964135678427\\
1.02534752345438	1.30464759625219\\
1.04021935240133	1.27938104440368\\
1.05509118134828	1.25391715966057\\
1.06996301029524	1.22831934977988\\
1.08483483924219	1.20268184548416\\
1.10600607332612	1.16636020603184\\
1.12717730741006	1.13052266400467\\
1.14834854149399	1.09532272811358\\
1.16951977557792	1.06080496207669\\
1.19069100966186	1.02710419388324\\
1.21186224374579	0.994485809612886\\
1.22767455421788	0.970925208204639\\
1.2276745542179	0.970925208204618\\
1.23427198020141	0.961516819243888\\
1.23957340651417	0.954154155685179\\
1.24487483282693	0.946983755876758\\
1.25017625913969	0.940001449718074\\
1.2637203604155	0.922899394494377\\
1.27726446169132	0.906903167457943\\
1.29080856296713	0.891987028691062\\
1.30435266424295	0.878104330486552\\
1.32714784544659	0.856930792124662\\
1.34994302665023	0.838291637009107\\
1.37273820785388	0.821948786226456\\
1.39553338905752	0.807685667400338\\
1.41832857026116	0.795306207132807\\
1.45292233798962	0.779724967549847\\
1.48751610571808	0.767502882926669\\
1.52210987344654	0.75813511163549\\
1.556703641175	0.751179807433193\\
1.59129740890347	0.746249698900697\\
1.64184733780232	0.742007546423403\\
1.69239726670117	0.740472731673377\\
1.74294719560002	0.740905862676691\\
1.79349712449887	0.742716282807773\\
1.84404705339773	0.745433173910545\\
1.91678327088859	0.750226384002821\\
1.98951948837946	0.755310096712766\\
2.06225570587033	0.760125679750584\\
2.1349919233612	0.764338690867105\\
2.20772814085207	0.767770545810452\\
2.29665508334235	0.770799763275545\\
2.38558202583264	0.772578198827957\\
2.47450896832292	0.773237256959622\\
2.56343591081321	0.772958037794082\\
2.6523628533035	0.771939936742441\\
2.74128979579378	0.770382029703189\\
2.86573345209315	0.767639758142181\\
2.99017710839252	0.764624565213658\\
3.11462076469189	0.761652898036619\\
3.23906442099125	0.758929795691513\\
3.36350807729062	0.756569980580715\\
3.48795173358999	0.754621289757564\\
3.61843705168917	0.753020792195959\\
3.61843705168919	0.753020792195959\\
3.63204764124076	0.752879626071011\\
3.64565823079233	0.752742751777924\\
3.6592688203439	0.752610196357541\\
3.67906822172875	0.752423806689681\\
3.68820949052243	0.752340551596694\\
3.68820949052247	0.752340551596693\\
3.70331819837544	0.752535308739805\\
3.71842690622841	0.753181219042063\\
3.73353561408137	0.754267926599778\\
3.75574411716739	0.75640018818896\\
3.77795262025341	0.759012513208932\\
3.80016112333942	0.761878805064469\\
3.82236962642544	0.764849501204461\\
3.85357645144836	0.769087527711101\\
3.88478327647129	0.7733963962141\\
3.91599010149421	0.777739487105709\\
3.94719692651714	0.782030839157548\\
4.00309693317951	0.789369263237256\\
4.05899693984188	0.796065244348439\\
4.11489694650425	0.802053785885601\\
4.17079695316662	0.807321649592897\\
4.226696959829	0.811846935227821\\
4.29976365035829	0.81666608069571\\
4.37283034088758	0.820351729718229\\
4.44589703141687	0.823066203252172\\
4.51896372194617	0.824975385953676\\
4.59203041247546	0.826242568251616\\
4.69650948133496	0.82724723662191\\
4.80098855019445	0.82766393465966\\
4.90546761905394	0.827815138952901\\
5.00994668791344	0.827933672992394\\
5.11442575677293	0.828170918602343\\
5.24376976254039	0.828740302603465\\
5.37311376830786	0.82966415221406\\
5.50245777407532	0.83090942001285\\
5.63180177984278	0.832393034569069\\
5.76114578561024	0.834014689315003\\
5.92583541398891	0.836132172722727\\
6.09052504236758	0.838163221427778\\
6.25521467074625	0.840001292708212\\
6.41990429912492	0.84159439628951\\
6.58459392750359	0.842935530542909\\
6.77528722102658	0.844205363533107\\
6.96598051454957	0.845224289457204\\
7.15667380807257	0.84605446460553\\
7.34736710159556	0.846750373351104\\
7.53806039511855	0.847352045609583\\
7.82653856220132	0.848132125372305\\
8.11501672928408	0.848798422291735\\
8.40349489636684	0.84936653295214\\
8.6919730634496	0.849839109457199\\
8.98045123053236	0.850221381304463\\
9.34045123053236	0.850591071532042\\
9.70045123053236	0.85086827387417\\
10.0604512305324	0.8510795159883\\
10.4204512305324	0.851243432791275\\
10.7804512305324	0.851370984885492\\
11.1404512305324	0.851468971272039\\
11.5004512305324	0.851543085562713\\
11.8604512305324	0.851598864716224\\
12.2204512305324	0.851641177598645\\
12.5804512305324	0.851673619101143\\
12.9404512305324	0.851698525902824\\
13.3004512305324	0.851717456370567\\
13.6604512305324	0.851731655385615\\
14.0204512305324	0.851742238394845\\
14.3804512305324	0.851750154285443\\
14.7404512305324	0.851756115175332\\
15.1004512305324	0.851760605216251\\
15.4604512305323	0.851763956019453\\
15.8204512305323	0.85176642300914\\
16.1804512305323	0.8517682219905\\
16.5404512305323	0.851769531228785\\
16.9004512305323	0.851770485203027\\
17.2604512305323	0.851771177509673\\
17.6204512305323	0.85177167289726\\
17.9804512305323	0.851772019790554\\
18	0.851772035245452\\
};
\addlegendentry{$y_1$}
\addplot [color=mycolor2, thick]
  table[row sep=crcr]{%
0	-0.862605362039618\\
3.29148401249502e-07	-0.862606614513501\\
6.58296802499004e-07	-0.862607866984299\\
9.87445203748505e-07	-0.862609119451948\\
2.40185241278055e-06	-0.862614501487115\\
3.8162596218126e-06	-0.862619883465203\\
5.23066683084465e-06	-0.862625265385296\\
6.6450740398767e-06	-0.862630647247071\\
1.25028029768298e-05	-0.862652935459392\\
1.83605319137828e-05	-0.862675222670452\\
2.42182608507359e-05	-0.862697508880223\\
3.0075989787689e-05	-0.862719794088726\\
6.00482120075413e-05	-0.862833805073528\\
9.00204342273936e-05	-0.86294778984807\\
0.000119992656447246	-0.86306174841517\\
0.000149964878667098	-0.863175680777605\\
0.000179937100886951	-0.863289586938144\\
0.000332323598807076	-0.863868310035677\\
0.000484710096727201	-0.86444635625969\\
0.000637096594647326	-0.865023725975205\\
0.000789483092567451	-0.865600419547084\\
0.000941869590487576	-0.866176437339954\\
0.00138404454605448	-0.867844027482884\\
0.00182621950162138	-0.869505939747539\\
0.00226839445718828	-0.8711621830138\\
0.00271056941275518	-0.872812766148005\\
0.00315274436832208	-0.874457698002814\\
0.00359491932388899	-0.876096987417114\\
0.00433362333684666	-0.878823045543851\\
0.00507232734980433	-0.881533421568512\\
0.005811031362762	-0.884228156448409\\
0.00654973537571967	-0.886907291035231\\
0.00728843938867734	-0.88957086607533\\
0.00802714340163501	-0.892218922210033\\
0.00893054121565263	-0.895436326217081\\
0.00983393902967025	-0.898630654610497\\
0.0107373368436879	-0.90180198108938\\
0.0116407346577055	-0.904950379121316\\
0.0125441324717231	-0.908075921943273\\
0.0134475302857407	-0.911178682562489\\
0.0146224277507307	-0.91517995955562\\
0.0157973252157207	-0.919142985714951\\
0.0169722226807107	-0.923067919905833\\
0.0181471201457007	-0.926954920348179\\
0.0193220176106907	-0.930804144619712\\
0.0204969150756807	-0.934615749659244\\
0.0216718125406707	-0.938389891769944\\
0.0231041812400222	-0.942940661672062\\
0.0245365499393737	-0.94743626216101\\
0.0259689186387252	-0.951876972670514\\
0.0274012873380767	-0.956263071261375\\
0.0288336560374282	-0.960594834629931\\
0.0302660247367797	-0.964872538116467\\
0.0316983934361312	-0.969096455713554\\
0.0336294714734546	-0.974706364691276\\
0.035560549510778	-0.980219673987197\\
0.0374916275481013	-0.98563704352041\\
0.0394227055854247	-0.99095912888461\\
0.0413537836227481	-0.996186581383735\\
0.0432848616600715	-1.00132004806729\\
0.0452159396973949	-1.00636017176531\\
0.0475942398069608	-1.01244021999216\\
0.0499725399165266	-1.01838083668574\\
0.0523508400260925	-1.02418319680419\\
0.0547291401356584	-1.02984846594771\\
0.0571074402452243	-1.0353778004526\\
0.0594857403547902	-1.04077234748412\\
0.0618640404643561	-1.04603324512839\\
0.0654611559218425	-1.05373878458834\\
0.0690582713793289	-1.06114503967313\\
0.0726553868368153	-1.06825583249711\\
0.0762525022943017	-1.07507493999313\\
0.0798496177517881	-1.08160609458714\\
0.0834467332092745	-1.08785298486174\\
0.0870438486667609	-1.09381925620802\\
0.0948395748196346	-1.10580348432597\\
0.102635300972508	-1.11652256085624\\
0.110431027125382	-1.12601159187459\\
0.118226753278256	-1.1343048165422\\
0.126022479431129	-1.14143563457748\\
0.133818205584003	-1.1474366314347\\
0.145679573186037	-1.15447135448284\\
0.157540940788072	-1.15907255375943\\
0.169402308390106	-1.16134492693044\\
0.181263675992141	-1.16138943988316\\
0.193125043594175	-1.15930349157627\\
0.20498641119621	-1.15518106513813\\
0.222976361272891	-1.14524412147447\\
0.240966311349572	-1.13113343669228\\
0.258956261426252	-1.11313633773137\\
0.276946211502933	-1.09152563423473\\
0.294936161579614	-1.06656047982518\\
0.312926111656295	-1.03848713749778\\
0.338332798699639	-0.99399700221273\\
0.363739485742983	-0.944393495783175\\
0.389146172786327	-0.890254993155507\\
0.41455285982967	-0.832120818605615\\
0.439959546873014	-0.77049389459469\\
0.465366233916358	-0.705843049863967\\
0.501191457304087	-0.610381161263576\\
0.537016680691815	-0.5108958457531\\
0.572841904079544	-0.408402907093034\\
0.608667127467273	-0.303818555507347\\
0.644492350855001	-0.197966745543078\\
0.68031757424273	-0.0915855903790637\\
0.736084579616188	0.0735329537026557\\
0.774701678957587	0.186561366159034\\
0.774701678957594	0.186561366159054\\
0.77508750920008	0.187682657544552\\
0.775473339442565	0.188803775124403\\
0.775859169685051	0.189924714710067\\
0.776959412267718	0.193120322370385\\
0.778059654850384	0.19631451056767\\
0.779159897433051	0.199507233256413\\
0.780260140015717	0.202698465818333\\
0.782245372849324	0.208452784013998\\
0.784230605682932	0.214202126181967\\
0.786215838516539	0.219946420389221\\
0.788201071350146	0.225685595900222\\
0.790186304183754	0.231419581240841\\
0.792778950437179	0.238900036532578\\
0.795371596690605	0.246371360289473\\
0.797964242944031	0.253833396811699\\
0.800556889197457	0.261285992192265\\
0.803149535450883	0.26872899413959\\
0.807069734006657	0.279964596615613\\
0.810989932562431	0.291177407382615\\
0.814910131118206	0.302366920022159\\
0.81883032967398	0.313532636508187\\
0.822750528229754	0.324674066939882\\
0.829904494132436	0.344941880398899\\
0.837058460035119	0.365124355930555\\
0.844212425937801	0.385218714791592\\
0.851366391840484	0.405222266912726\\
0.853573891945413	0.411376027406271\\
0.85357389194542	0.411376027406291\\
0.854501628534333	0.413959000081816\\
0.855429365123245	0.416540373334697\\
0.856357101712158	0.419120081544966\\
0.858281664089495	0.424466797647642\\
0.860206226466832	0.429806252594866\\
0.86213078884417	0.435137892411667\\
0.864055351221507	0.440461274349809\\
0.867292922924893	0.449396496660097\\
0.870530494628279	0.45830519725414\\
0.873768066331665	0.46718533399682\\
0.877005638035052	0.476034764973001\\
0.881670228187441	0.488726638565864\\
0.88633481833983	0.501344048696189\\
0.890999408492219	0.513880618314854\\
0.895663998644608	0.526330400676205\\
0.898715552345879	0.53442544104586\\
0.898715552345886	0.534425441045879\\
0.899350330113535	0.536103637898688\\
0.899985107881185	0.537780089837392\\
0.900619885648835	0.539454748899905\\
0.901737821448125	0.542400262366724\\
0.902855757247416	0.545340415609723\\
0.903973693046706	0.548274940411166\\
0.905091628845997	0.551203702947203\\
0.90713097398682	0.556531326695621\\
0.909170319127642	0.561839176429831\\
0.911209664268465	0.567126937163958\\
0.913249009409288	0.57239431446055\\
0.915288354550111	0.577641020693438\\
0.91858842760321	0.586086769608428\\
0.92188850065631	0.594476582556187\\
0.925188573709409	0.602809461565474\\
0.928488646762508	0.611084498837232\\
0.931788719815608	0.619300865250596\\
0.936125538778493	0.630008003511458\\
0.940462357741379	0.640610950369627\\
0.944799176704265	0.651108306733815\\
0.949135995667151	0.661498824674274\\
0.953472814630037	0.671781383436075\\
0.961899024816125	0.691447583011013\\
0.970325235002213	0.71069600432491\\
0.978751445188301	0.729521144607988\\
0.987177655374389	0.747918407744995\\
0.995603865560477	0.765883912161704\\
1.01047569450743	0.796528789008917\\
1.02534752345438	0.825804916121434\\
1.04021935240133	0.85370323934457\\
1.05509118134828	0.880218368198667\\
1.06996301029524	0.905347819081779\\
1.08483483924219	0.929092240428591\\
1.10600607332612	0.960515753736334\\
1.12717730741006	0.989172468979071\\
1.14834854149399	1.0151060904243\\
1.16951977557792	1.03836824076228\\
1.19069100966186	1.05901634046221\\
1.21186224374579	1.07711801602167\\
1.22767455421788	1.08902408816973\\
1.2276745542179	1.08902408816974\\
1.23427198020141	1.09351702253728\\
1.23957340651417	1.09698107513876\\
1.24487483282693	1.10030247356005\\
1.25017625913969	1.10348404236652\\
1.2637203604155	1.11106314750482\\
1.27726446169132	1.11781529071177\\
1.29080856296713	1.12375620038405\\
1.30435266424295	1.12891740739164\\
1.32714784544659	1.1359420903858\\
1.34994302665023	1.14103270909615\\
1.37273820785388	1.14435892422687\\
1.39553338905752	1.14607762789322\\
1.41832857026116	1.1463331899899\\
1.45292233798962	1.14422109219684\\
1.48751610571808	1.13948050923328\\
1.52210987344654	1.13250577020598\\
1.556703641175	1.12365137490196\\
1.59129740890347	1.11323626480831\\
1.64184733780232	1.09578883556331\\
1.69239726670117	1.07639933819328\\
1.74294719560002	1.05573608019754\\
1.79349712449887	1.03435598034846\\
1.84404705339773	1.01272111750207\\
1.91678327088859	0.98187014774544\\
1.98951948837946	0.952145880248395\\
2.06225570587033	0.92417719054524\\
2.1349919233612	0.898368781433706\\
2.20772814085207	0.874953387914191\\
2.29665508334235	0.849725364080195\\
2.38558202583264	0.828205092958938\\
2.47450896832292	0.810205748457283\\
2.56343591081321	0.79544109567071\\
2.6523628533035	0.783568062753558\\
2.74128979579378	0.774215876177025\\
2.86573345209315	0.764656198913446\\
2.99017710839252	0.758331305914104\\
3.11462076469189	0.754384153182483\\
3.23906442099125	0.752110437309087\\
3.36350807729062	0.750956270918799\\
3.48795173358999	0.750501658031991\\
3.61843705168917	0.750442861734651\\
3.61843705168919	0.750442861734651\\
3.63204764124076	0.750452373749595\\
3.64565823079233	0.750463591212621\\
3.6592688203439	0.750476373992189\\
3.67906822172875	0.750497071184416\\
3.68820949052243	0.750507443270632\\
3.68820949052247	0.750507443270632\\
3.70331819837544	0.750535227535241\\
3.71842690622841	0.750587092057881\\
3.73353561408137	0.750677181704254\\
3.75574411716739	0.750886774608568\\
3.77795262025341	0.751214376182034\\
3.80016112333942	0.751679413898707\\
3.82236962642544	0.752290757067085\\
3.85357645144836	0.753401167277091\\
3.88478327647129	0.754804747484847\\
3.91599010149421	0.756498764345257\\
3.94719692651714	0.758475691882054\\
4.00309693317951	0.762679808567377\\
4.05899693984188	0.767646503999126\\
4.11489694650425	0.773250187891547\\
4.17079695316662	0.779356927950692\\
4.226696959829	0.785832473607522\\
4.29976365035829	0.794636429833171\\
4.37283034088758	0.803573887731033\\
4.44589703141687	0.812402913776401\\
4.51896372194617	0.820921641586984\\
4.59203041247546	0.828970645984112\\
4.69650948133496	0.839451152631597\\
4.80098855019445	0.848531733097112\\
4.90546761905394	0.856141108021159\\
5.00994668791344	0.862326161682583\\
5.11442575677293	0.867216056633307\\
5.24376976254039	0.871748245715032\\
5.37311376830786	0.874949182774378\\
5.50245777407532	0.877201775352582\\
5.63180177984278	0.87885176279258\\
5.76114578561024	0.880179664173889\\
5.92583541398891	0.881709557963682\\
6.09052504236758	0.883305115805604\\
6.25521467074625	0.885079411395373\\
6.41990429912492	0.887030559944732\\
6.58459392750359	0.889091013065246\\
6.77528722102658	0.891492807041795\\
6.96598051454957	0.893786774057209\\
7.15667380807257	0.895872895488806\\
7.34736710159556	0.897696190752611\\
7.53806039511855	0.899244458609431\\
7.82653856220132	0.901116641726537\\
8.11501672928408	0.902521940599444\\
8.40349489636684	0.903597258166583\\
8.6919730634496	0.904456399496104\\
8.98045123053236	0.905168555278292\\
9.34045123053236	0.905894729373804\\
9.70045123053236	0.906469795532473\\
10.0604512305324	0.906912703938787\\
10.4204512305324	0.90724642343565\\
10.7804512305324	0.907497823957638\\
11.1404512305324	0.907690468615278\\
11.5004512305324	0.907840203152885\\
11.8604512305324	0.907956218751896\\
12.2204512305324	0.90804462470991\\
12.5804512305324	0.908110989425415\\
12.9404512305324	0.908160744937271\\
13.3004512305324	0.90819846610087\\
13.6604512305324	0.908227359894072\\
14.0204512305324	0.908249446111723\\
14.3804512305324	0.908266102439713\\
14.7404512305324	0.90827848686355\\
15.1004512305324	0.908287651801254\\
15.4604512305323	0.908294472271564\\
15.8204512305323	0.908299581621324\\
16.1804512305323	0.908303399114923\\
16.5404512305323	0.908306213690176\\
16.9004512305323	0.908308254927058\\
17.2604512305323	0.908309719790215\\
17.6204512305323	0.908310768886012\\
17.9804512305323	0.908311519907472\\
18	0.90831155422311\\
};
\addlegendentry{$y_2$}
\end{axis}
\end{tikzpicture}%

%% file: sections/conclusion.tex
\section{Conclusion}\label{sec:conclusion}

In this paper, we developed a robust control framework for feedback optimization of LTI plants that places stability analysis, performance, dynamic controller synthesis, and constrained optimization within a single framework. By charaterizing first-order oracles through dynamic Zames--Falb multipliers, we analyze and systematically design controllers that stabilize a generalized plant comprising the plant dynamics and an optimality model. The framework establishes a formal link between IQC and timescale separation based analysis of gradient flows, and goes beyond these by enabling the synthesis of dynamic output-feedback controllers with certified performance levels.

Future directions include the development of internal model principle based robust regulation frameworks to address time-varying and composite optimization problems. Moreover, the robust control baseline lends itself to generalizations based on linear parameter-varying or sum-of-squares techniques, which could extend the framework beyond LTI plants.

%% file: sections/appendix.tex
\subsection{Generalization to non-invertible $A$}
\label{appendix:generalization}

When $A$ is not invertible, the steady-state equation \mbox{$0 = A\bar x + B \bar u$} induces the linear subspace
\begin{equation}
    \bmat{\bar x \\ \bar u} = \mathcal{N} \xi, \quad \mathrm{ran}\, \mathcal{N} = \mathrm{null} \bmat{A & B},
\end{equation}
where $\xi \in \mathbb{R}^{n_\xi}$ is a free variable with the dimension \mbox{$n_\xi = \mathrm{dim}(\mathrm{null}\bmat{A & B})$}. We can re-parametrize the OSS problem in $\xi$, which eliminates the steady-state constraint and gives
\begin{equation}\label{eq:gen-OSS}
    \min_{\xi\in \mathbb{R}^{n_\xi}} \;\; \Phi_1(\Pi_u \xi) + \Phi_2(\Pi_y \xi + b(w)),
\end{equation}
where \mbox{$\Pi_u \coloneq \bmat{0 & I_{n_u}} \mathcal{N}$}, \mbox{$\,\Pi_y \coloneq \bmat{C & D} \mathcal{N}$} and $b(w)$ is an affine function of $w$. The first-order optimality condition gives
\begin{equation}\label{eq:first-order-opt:generalized}
    0 = \Pi_u^\top \nabla \Phi_1(\Pi_u \xi) + \Pi_y^\top \nabla \Phi_2(\Pi_y \xi + b(w)).
\end{equation}
We obtain an optimality model by integrating~\eqref{eq:first-order-opt:generalized} and replacing $\Pi_u \xi$ and $\Pi_y \xi + b(w)$ with $u$ and $y$, respectively, giving
\begin{equation}\label{eq:optimality-model:generalized}
    \dot{\eta} = \Pi_u^\top \nabla \Phi_1(u) + \Pi_y^\top \nabla \Phi_2(y).
\end{equation}
If $A$ is invertible, we have \mbox{$\mathcal{N} = \bmat{\Pi_{xu} & I_{n_u}}$}, \mbox{$\Pi_u = I_{n_u}$}, \mbox{$\Pi_y = \Pi_{yu}$} and \mbox{$\xi = u$}, recovering \eqref{eq:controller-eta}. All derivations that include $\eta$ may be generalized with \eqref{eq:optimality-model:generalized}.

\subsection{Proof of Proposition~\ref{prop:stability-general}}\label{appendix:a}

The proof follows standard arguments, e.g. \cite{schererLMIs}. Introducing a state $\zeta$, the dynamics of \eqref{eq:augmented-plant} can be written as
\begin{equation}\label{eq:zeta_dynamics}
    \dot{\zeta} = \mathcal{A} \zeta + \mathcal{B} \bmat{\tilde{p}_1 \\ \tilde{p}_2}, \quad \psi = \mathcal{C} \zeta + \mathcal{D} \bmat{\tilde{p}_1 \\ \tilde{p}_2},
\end{equation}
where $\psi = \mathrm{col}(\psi_1, \psi_2)$. 
Define the positive definite storage function $V(\zeta)\coloneqq \zeta^\top\mathcal X\zeta$. Then, by left and right multiplying~\eqref{eq:LMI-zf-general} with $\mathrm{col}(\zeta, \tilde{p}_1, \tilde{p}_2)$ we can conclude that \mbox{$\dot{V} + \lambda (\psi_1^\top J_{n_u} \psi_1 + \psi_2^\top J_{n_y} \psi_2) < 0$}. By the strict inequality, we can use a perturbation argument such that for some $\delta>0$ 
\[
    \dot V + \lambda (\psi_1^\top J_{n_u} \psi_1 + \psi_2^\top J_{n_y} \psi_2) \le -\delta \!
    \left( \|\zeta\|^2 + \|\tilde p_1\|^2 + \|\tilde p_2\|^2
    \right).
\]
Integrating and leveraging that $\int_{0}^{T} \psi_1^\top J_{n_u} \psi_1 \, \mathrm{d}t \geq 0$ and $\int_{0}^{T} \psi_2^\top J_{n_y} \psi_2\, \mathrm{d}t \geq 0$, gives
\begin{equation*}\label{eq:V_bounded}
    V(T)-V(0) \le - \delta \int_0^T
    \left(
        \|\zeta(t)\|^2 + \|\tilde p_1(t)\|^2 + \|\tilde p_2(t)\|^2
    \right)\mathrm{d}t.
\end{equation*}
Since $V(T) \geq 0$, it holds $\int_{0}^{T} \|\zeta\|^2 + \|\tilde{p}_1\|^2 + \|\tilde{p}_2\|^2\,\mathrm{d}t \leq \tfrac{1}{\delta}V(0)$ for all $T\geq 0$, and letting $T\to\infty$ implies that the signals $\zeta, \tilde{p}_1, \tilde{p}_2$ must be square integrable. Moreover, \mbox{$V(T) \leq V(0)$}, which implies that $\zeta$ is bounded. Since $\nabla \Phi_1$ and $\nabla \Phi_2$ are both Lipschitz continuous functions in their arguments, and $\tilde{q}_1$ and $\tilde{q}_2$ can both be retrieved as linear functions of~$\zeta$, also $\tilde{p}_1, \tilde{p}_2$ are bounded. By~\eqref{eq:zeta_dynamics}, $\dot{\zeta}$ is hence bounded as well, implying uniform continuity of $\zeta$. Barbalat's Lemma then implies $\zeta(t) \to 0$, and by continuity also $\tilde{p}_1(t) \to 0$ and $\tilde{p}_2(t) \to 0$, as $t \to \infty$.

\subsection{Proof of Proposition~\ref{prop:zf-LMI}}
\label{appendix:b}

We first derive some structural implications of the integrator dynamics of $\tilde{u}$. Substituting the realization \eqref{eq:augmented-plant-realization} into~\eqref{eq:LMI-zf}, it can be observed that $\Lambda(\mathcal{X},\lambda, \varepsilon)$ has the structure
\begin{equation}\label{eq:LMI-structure}
    \Lambda(\mathcal X,\lambda,\varepsilon)
    =
    \begin{bmatrix}
        0
        &
        \star
        \\
        \star
        &
        \Lambda_0(\mathcal{X}_0,\lambda,\varepsilon)
    \end{bmatrix}.
\end{equation}
The condition $\Lambda(\mathcal X,\lambda,\varepsilon)\preceq 0$ forces the subdiagonal elements in~\eqref{eq:LMI-structure} to zero and we have $\Lambda_0(\mathcal{X}_0,\lambda,\varepsilon) \preceq 0$. Define accordingly the reduced realization by stripping the $\tilde{u}$-dynamics
\begin{align}
    \bar{\mathcal A}
    &\!=\!
    \begin{bmatrix}
        A & 0 \\
        B_H \!\otimes\! (L_yC) & A_H \!\otimes\! I_{n_y}
    \end{bmatrix},
    &
    \bar{\mathcal B}(\varepsilon)
    &\!=\!
    \begin{bmatrix}
        \varepsilon\Pi_{xu}\Pi_{yu}^\top \\
        -B_H \!\otimes\! I_{n_y}
    \end{bmatrix},
    \notag
    \\
    \bar{\mathcal C}
    &\!=\!
    \begin{bmatrix}
        \tilde{d} L_yC & -C_H \!\otimes\! I_{n_y} \\
        0 & 0
    \end{bmatrix},
    &
    \bar{\mathcal D}
    &\!=\!
    \begin{bmatrix}
        -\tilde{d} I_{n_y} \\
        I_{n_y}
    \end{bmatrix}.
    \label{eq:augmentedplant:reduced}
\end{align}
Recall that $\tilde{d} = 1 - D_H$ with $D_H = \lim_{s \to \infty} H_y(s)$, and note that $\tilde{d} \geq 0$ since $\| H_y \|_1 \leq 1$. The case $\tilde{d}=0$ is degenerate since it decouples the plant dynamics from \eqref{eq:augmentedplant:reduced} and can therefore be excluded.
Define $\bar{\mathcal B}_1 \!=\! \left[\! \begin{smallmatrix}\Pi_{xu} \Pi_{yu}^\top \\ 0\end{smallmatrix} \! \right]$ such that \mbox{$\bar{\mathcal B}(\varepsilon) = \bar{\mathcal B}(0) \!+\! \varepsilon \bar{\mathcal B}_1$}, and $\bar{\mathcal{C}}_1 = \bmat{\tilde{d} L_yC & \,\, -C_H \otimes I_{n_y}}$. 
Using Schur complement arguments, we have $\Lambda_0(\mathcal{X}_0,\lambda,\varepsilon) \preceq 0$ if and only if $F(\varepsilon)\preceq 0$, with 
\begin{equation*}
    F(\varepsilon) \coloneq R_0+\varepsilon R_1+\varepsilon^2R_2,
\end{equation*}
where
\begin{align*}
    R_0\!
    \coloneqq{}&
    \bar{\mathcal A}^\top\bar{\mathcal X}
    +
    \bar{\mathcal X}\bar{\mathcal A}
    \\[-1ex]
    & \qquad +
    \frac{1}{2\tilde{d}\lambda}
    \bigl(
        \bar{\mathcal X}\bar{\mathcal B}(0)
        +
        \lambda\bar{\mathcal C}_1^\top
    \bigr)
    \bigl(
        \bar{\mathcal X}\bar{\mathcal B}(0)
        +
        \lambda\bar{\mathcal C}_1^\top
    \bigr)^\top,
    \\
    R_1\!
    \coloneqq{}&
    \frac{1}{2\tilde{d}\lambda}
    \Big(
        \bar{\mathcal X}\bar{\mathcal B}_1
        \bigl(
            \bar{\mathcal X}\bar{\mathcal B}(0)
            \!+\!
            \lambda\bar{\mathcal C}_1^\top
        \bigr)^\top \!
        \!+\!
        \bigl(
            \bar{\mathcal X}\bar{\mathcal B}(0)
            \!+\!
            \lambda\bar{\mathcal C}_1^\top
        \bigr)
        \bar{\mathcal B}_1^\top\bar{\mathcal X}
    \Big),
    \\
    R_2\!
    \coloneqq{}&
    \frac{1}{2\tilde{d}\lambda}
    \bar{\mathcal X}\bar{\mathcal B}_1
    \bar{\mathcal B}_1^\top
    \bar{\mathcal X}.
\end{align*}
Note that $R_2 \succeq 0$. Moreover, the condition $\Lambda_0(\mathcal{X}_0,\lambda,0) \prec 0$ implies \mbox{$R_0 \prec 0$}. Hence, $F(\varepsilon)$ is a matrix-convex quadratic function with $F(0) \prec 0$ and $F(\varepsilon^\star_{\mathrm{ZF}})\preceq 0$. It follows that $F(\varepsilon)\prec 0$ for all  $\varepsilon\in(0,\varepsilon^\star_{\mathrm{ZF}})$, which is equivalent to $\Lambda_0(\mathcal{X}_0, \lambda, \varepsilon) \prec 0 \; \forall \varepsilon\in(0,\varepsilon^\star_{\mathrm{ZF}})$.

Now introduce $\zeta = \mathrm{col}(\tilde{u}, \tilde{e}, x_H)$ and the positive definite storage function $V(\zeta) =\zeta^\top \mathcal{X} \zeta$.
By left- and right-multiplying~\eqref{eq:LMI-zf} with $\mathrm{col}(\zeta, \tilde p_2)$ we infer $0 \leq \dot V + \lambda \psi^\top J_{n_y} \psi$, where $\psi = \mathcal{G}[\, \mathrm{col}(\tilde{p}_1, \tilde{p}_2) \,]$ satisfies $\int_{0}^{T} \psi^\top J_{n_y} \psi\, \mathrm{d}t \geq 0$ as before. Integrating over $[0,T]$ gives $V(T)-V(0) \leq 0$, which, following the same reasoning as in the proof of Proposition~\ref{prop:stability-general},
implies that all trajectories are bounded.

Now note that by the zero subdiagonals that arise in~\eqref{eq:LMI-structure}, we have
\begin{equation*}
    \dot V+\lambda \psi^\top J_{n_y} \psi = \smat{e \\ x_H \\ \tilde{p}_2}^\top \Lambda_0(\mathcal{X}_0,\lambda,\varepsilon) \smat{e \\ x_H \\ \tilde{p}_2}.
\end{equation*}
Now let $\varepsilon\in(0,\varepsilon^\star_{\mathrm{ZF}})$ such that $\Lambda_0(\mathcal{X}_0,\lambda,\varepsilon) \prec 0$. Using a perturbation argument, there exists $\delta > 0$ such that after integration we have
\[
    V(T)-V(0)
    \le
    -
    \delta
    \int_0^T
    \bigl(
        \|e(t)\|^2 +
        \|x_H(t)\|^2
        +
        \|\tilde p(t)\|^2
    \bigr)\,\mathrm{d}t.
\]
The same reasoning as in the proof of Proposition~\ref{prop:stability-general} implies $x_H(t)\to 0, e(t)\to 0, \tilde p_2(t)\to 0$ as $t \to \infty$.
Thus, $y(t) \to y^\star$, giving $\tilde q_2(t) \to 0$. From~\eqref{eq:lure}, $\Piyu \tilde u = \tilde{q}_2 - C e$, and full column rank of $\Piyu$ implies $\tilde u \to 0$, i.e., $u(t) \to u^\star$. Since the choice of $u^\star, y^\star$ is so that first-order optimality holds, the trajectories converge to a solution of~\eqref{eq:oss}.

\subsection{Proof of Corollary~\ref{corr:static-LMI}}
\label{appendix:c}

By construction, $\eps^\star_{\mathrm{static}} \leq \eps^\star_{\mathrm{ZF}}$, since $H=0$ is just one special case of Proposition~\ref{prop:zf-LMI}.

We now show that any $\varepsilon \leq \tfrac{\lmin(Q)}{2 \| X \Pixu \| \, \ell}$ renders \eqref{eq:LMI-zf} feasible for $H=0$. We choose the Ansatz
\begin{equation*}
    \mathcal{X} = \bmat{ X & X_{12} \\ X_{12}^\top & X_{22} }.
\end{equation*}
With $H=0$, the filter state $x_H$ is eliminated from \eqref{eq:augmented-plant-realization}, and inserting $\mathcal{A}, \mathcal{B}, \mathcal{C}, \mathcal{D}$ into \eqref{eq:LMI-zf} therefore leads to
\begin{equation}\label{eq:LMI-expanded}
    \bmat{
        A^\top X + X A & A^\top X_{12} & M_{13} \\
        X_{12}^\top A & 0 & M_{23} \\
        M_{13}^\top & M_{23}^\top & -2\lambda I_{n_y}
    } \preceq 0,
\end{equation}
where
\begin{align*}
    M_{13} &\coloneqq \varepsilon X \Pixu \Piyu^\top + \lambda L_y C^\top, \\
    M_{23} &\coloneqq \big( \varepsilon X_{12}^\top \Pixu - \varepsilon X_{22} + \lambda L_y I_{n_u} \big) \Piyu^\top.
\end{align*}
Negative semi-definiteness forces $A^\top X_{12} = 0$ and \mbox{$M_{23} = 0$}. Since $A$ is invertible, the former implies $X_{12}=0$. Since $\Piyu$ has full column rank, the latter implies $X_{22} = \tfrac{\lambda L_y}{\varepsilon} I_{n_u}$.
Thus, \eqref{eq:LMI-expanded} reduces to
\begin{equation}\label{eq:LMI-reduced}
    \bmat{ A^\top X + X A & \varepsilon X \Pixu \Piyu^\top + \lambda L_y C^\top \\ \star & -2\lambda I_{n_y} } \preceq 0.
\end{equation}
Now substitute $A^\top X + X A = -Q$ and perform a Schur complement around $-2\lambda I_{n_y}$, which gives
\begin{multline*}
    -Q + \tfrac{\lambda L_y^2}{2} C^\top C + \tfrac{\varepsilon^2}{2\lambda} X \Pixu \Piyu^\top \Piyu \Pixu^\top X 
    \\+ \tfrac{\varepsilon L_y}{2} \big( X \Pixu \Piyu^\top C + C^\top \Piyu \Pixu^\top X \big)  \preceq 0.
\end{multline*}
A sufficient scalar condition follows from Cauchy--Schwarz arguments, namely
\begin{multline}\label{eq:scalar-ineq}
    \tfrac{\varepsilon^2}{2\lambda} \| X \Pixu \|^2 \| \Piyu \|^2 + \varepsilon L_y \| X \Pixu \| \| \Piyu \| \| C \| 
    \\+ \tfrac{\lambda L_y^2}{2} \| C \|^2 \le \lmin(Q).
\end{multline}
Minimizing the left-hand side over $\lambda > 0$ yields the optimal multiplier
\begin{equation*}
    \lambda^\star = \tfrac{\varepsilon \| X \Pixu \| \| \Piyu \|}{L_y \| C \|},
\end{equation*}
giving $2 \varepsilon L_y \| X \Pixu \| \| \Piyu \| \| C \| \le \lmin(Q)$, which rearranges to $\varepsilon \leq \tfrac{\lmin(Q)}{2 \| X \Pixu \| \, \ell}$. Thus, any such $\varepsilon$ gives feasibility with the certificate
\begin{equation*}
    \setlength{\arraycolsep}{2pt}
    \mathcal{X} = \bmat{X & 0 \\ 0 & \tfrac{\| X \Pixu \| \| \Piyu \|}{\| C \|} I_{n_u}} \succ 0, 
    \quad 
    \lambda = \tfrac{\varepsilon \| X \Pixu \| \| \Piyu \|}{L_y \| C \|},
\end{equation*}
and conclusively, $\varepsilon^\star_{\mathrm{static}}$ must be greater or equal.

The last claim is equivalent to \eqref{eq:LMI-reduced} with $\varepsilon=0$ and strict feasibility, which by Schur complement is equivalent to \mbox{$A^\top X + X A + \frac{\lambda L_y^2}{2} C^\top C \prec 0$}.